\documentclass[aip,jmp,amsmath,amssymb,reprint]{revtex4-2}
\usepackage{mathptmx}

\usepackage{amsthm,mathtools}
\usepackage{enumerate}
\usepackage{url}
\newtheorem{theorem}{Theorem}[section]
\newtheorem{lemma}[theorem]{Lemma}
\newtheorem{proposition}[theorem]{Proposition}
\newtheorem{corollary}[theorem]{Corollary}
\theoremstyle{definition}

\theoremstyle{remark}
\newtheorem{remark}[theorem]{Remark}

\newcommand{\cH}{\mathcal{H}}

\newcommand{\rank}{\operatorname{rank}}
\newcommand{\diag}{\operatorname{diag}}
\newcommand{\dd}{\mathrm{d}}
\newcommand{\Id}{\mathrm{I}}

\begin{document}

\title{Sharp Entropy Cost of Coherent Leakage Across a Block Decomposition}

\author{Hassan Nasreddine}
\email{hassan.nasreddine@fortisec.net}
\affiliation{FortiSec Systems Inc., Canada}

\begin{abstract}
Let $\mathcal{H}=P\mathcal{H}\oplus Q\mathcal{H}$ and let $\Pi\rho=P\rho P\oplus Q\rho Q$ be the two-block pinching of a density matrix. Write $A=P\rho P$, $B=P\rho Q$, and $C=Q\rho Q$. We study the block-coherence entropy $D(\rho\|\Pi\rho)$, the relative-entropy cost of removing cross-block coherence. First, we prove a midpoint Bogoliubov--Kubo--Mori coercivity estimate, giving the operator-kernel lower bound $D(\rho\|\Pi\rho)\ge \operatorname{Tr}[B^*\Omega_{A,C}^{-1}(B)]$ for strictly positive diagonal blocks. The proof uses the block-sign involution $U=P-Q$ to obtain a sign-symmetric midpoint lower bound for the BKM Hessian. Second, we determine the exact minimum in the associated constrained entropy-minimization problem: at fixed protected-sector floor, leakage mass, and quadratic coherence, the minimum can be achieved by a single active two-level block, independently of the ambient dimension. Third, under a uniform floor $A\ge a_0P$ and $0<\varepsilon_Q\le a_0/2$, we obtain the logarithmic boundary law $D(\rho\|\Pi\rho)\ge\|B\|_F^2\log(a_0/\varepsilon_Q)$, showing that the lower-bound coefficient multiplying the quadratic coherence diverges logarithmically as leakage mass tends to zero. As a finite-dimensional application, the estimates give an entropy certificate for coherent leakage from a protected subspace: $P\rho Q$ measures protected--leakage coherence, and the constrained entropy minimum is attained by a state supported on a single effective two-level transition. The BKM estimate also yields a dissipation lower bound along an idealized uniform block-dephasing orbit. Finally, the midpoint mechanism extends to finite-dimensional trace-preserving $\mathbb{Z}_2$-graded automorphisms and bounded finite-trace semifinite corners.
\end{abstract}

\maketitle

\section{Introduction}
\label{sec:intro}

\subsection{Physical setting and motivation}

Many quantum systems admit a physically distinguished \emph{block decomposition}
of their Hilbert space: a direct-sum structure $\mathcal{H}=P\mathcal{H}\oplus Q\mathcal{H}$
arising from a superselection rule, a charge sector boundary, a measurement partition,
a pointer-state decomposition, or the separation between protected and leakage
sectors. In all these cases, the block-diagonal state space
$\{\sigma\ge0:\operatorname{Tr}\sigma=1,\ \sigma=P\sigma P\oplus Q\sigma Q\}$
represents states with no quantum coherence across the boundary.

\emph{Block dephasing} is the map $\Pi\rho:=P\rho P\oplus Q\rho Q$ that
removes all cross-block coherence while preserving each block's diagonal data.
In quantum information, $\Pi$ is the ideal block-dephasing channel associated
with this decomposition. The relative entropy
\[
D(\rho\|\Pi\rho):=\operatorname{Tr}[\rho(\log\rho-\log\Pi\rho)]
\]
is the \emph{block-coherence entropy}: it measures the entropy cost of removing
cross-block coherence from $\rho$, or equivalently the informational distance
from $\rho$ to its dephased shadow.

\paragraph{Terminology.}
Throughout this paper, \emph{block dephasing} or \emph{pinching} refers specifically
to the conditional expectation $\Pi(\rho)=P\rho P\oplus Q\rho Q$.
We do not use ``dephasing'' to denote a general dynamical noise model unless
explicitly stated. \emph{Coherent leakage} refers to the off-block component $B=P\rho Q$,
whose Frobenius norm squared $c=\|B\|_F^2$ we call the \emph{quadratic coherence}.
The \emph{leakage mass} is $\varepsilon_Q=\operatorname{Tr}(Q\rho Q)$.
This terminology is interpretive: the theorems are block-pinching entropy
inequalities for an arbitrary two-block decomposition. When $P\mathcal{H}$ is a
protected, computational, or metastable subspace, the protected/leakage language
provides the information-theoretic reading developed in Section~\ref{sec:application}.
The \emph{entropy cost} of coherent leakage is the quantity $D(\rho\|\Pi\rho)$.

\paragraph{Block structure.}
For a density matrix $\rho$, write
\[
\rho=\begin{pmatrix}A&B\\B^*&C\end{pmatrix},\qquad
A=P\rho P,\quad B=P\rho Q,\quad C=Q\rho Q.
\]
The map $\Pi$ is the conditional expectation onto the block-diagonal subalgebra
associated with the $P\oplus Q$ decomposition~\cite{Umegaki1962,Takesaki1979,AccardiCecchini1982}.
Set $d_P=\dim P\cH$, $d_Q=\dim Q\cH$.
Throughout Sections~\ref{sec:setup}--\ref{sec:dynamics} we work in finite dimensions and assume $A,C>0$
when invoking the BKM operator theorem; the variational theorem extends to
boundary states by continuity. Sections~\ref{sec:application} and~\ref{sec:z2} record the protected-subspace application and the graded and
bounded finite-support semifinite extensions of the midpoint mechanism.

\subsection{Finite-dimensional model case}

Before stating the main results, we record the core finite-dimensional case.
The full variational theorem (Theorem~\ref{thm:variational}) gives the exact
version of this statement.

\begin{quote}
\textbf{Guiding statement.}
\textit{Let $\mathcal{H}=P\mathcal{H}\oplus Q\mathcal{H}$ be finite-dimensional.
Among all density matrices $\rho$ with:
(i) protected-sector floor $P\rho P\ge a_0P$;
(ii) prescribed leakage mass $\operatorname{Tr}(Q\rho Q)=\varepsilon_Q$;
(iii) prescribed quadratic coherence $\|P\rho Q\|_F^2=c$;
the minimum block-coherence entropy $D(\rho\|\Pi\rho)$ equals
\[
\Phi(a_*,\varepsilon_Q,c),\qquad a_*=1-\varepsilon_Q-(d_P-1)a_0.
\]
The minimum can be attained by a single active two-level block: a state of the form
$\bigl(\begin{smallmatrix}a_*&\sqrt{c}\\\sqrt{c}&\varepsilon_Q\end{smallmatrix}\bigr)$
on one pair of vectors $(|p_1\rangle,|q_1\rangle)$, with incoherent spectators
at the floor value $a_0$ in the remaining $P$-directions.}
\end{quote}

The one-channel principle states that the constrained minimum can always be
realized by a two-level configuration, regardless of the ambient dimension.
Thus the high-dimensional constrained problem collapses to a scalar $2\times2$
problem. The Coherence Compression Principle (Theorem~\ref{thm:compression})
provides the structural reason for this collapse.

Theorem~\ref{thm:variational} is the precise version of this statement.
The admissibility conditions $1-\varepsilon_Q\ge d_Pa_0$ and $c\le a_*\varepsilon_Q$
are exactly the feasibility conditions ensuring that the two-level optimizer
is a valid density matrix (Schur-complement positivity).

\subsection{Main results and constrained extremal problem}
\label{ssec:main-results}

We present four named results. Detailed proofs follow in Sections
\ref{sec:BKM-geometry}--\ref{sec:dynamics}.

\paragraph*{Result A. Midpoint BKM coercivity}

Let $M=A\oplus C$ and $Y=\bigl(\begin{smallmatrix}0&B\\B^*&0\end{smallmatrix}\bigr)$.
The block-sign involution $U=P-Q$ satisfies $UMU^*=M$ and $UYU^*=-Y$.
Consequently, for any $t\in[0,1)$,
\[
H_{M+tY}(Y,Y)=H_{M-tY}(Y,Y),
\]
where $H_N(Z,Z)=\int_0^\infty\operatorname{Tr}[Z(N+rI)^{-1}Z(N+rI)^{-1}]\,dr$ is
the BKM quadratic form. The symmetry $H_{M+tY}(Y,Y)=H_{M-tY}(Y,Y)$, together
with the pointwise scalar resolvent inequality
\[
\tfrac{1}{2}\bigl((1+u)^{-2}+(1-u)^{-2}\bigr)\ge1,\quad|u|<1,
\]
applied fibrewise to the resolvent integral, then gives
\[
H_{M+tY}(Y,Y)\ge H_M(Y,Y).
\]
(Kubo--Ando concavity of operator means is used separately in
Appendix~\ref{app:petz} for the Petz-metric analogue; it is not part of the
BKM argument.)
Integrating this midpoint lower bound along the affine path $\rho(t)=M+tY$
from $t=0$ to $t=1$ yields the BKM operator-kernel lower bound (Theorem~\ref{thm:BKM-lower})
\[
D(\rho\|\Pi\rho)\ge\operatorname{Tr}[B^*\Omega_{A,C}^{-1}(B)].
\]
This is a \emph{coercivity} property specific to block off-diagonal
perturbations; it is not a general statement about affine monotonicity of the
BKM Hessian.

\paragraph*{Result B. One-channel extremizer}

The following is the main constrained extremal result.

\textbf{Theorem (informal).} \emph{At fixed protected-sector floor
$P\rho P\ge a_0P$, fixed leakage $\varepsilon_Q=\operatorname{Tr}(Q\rho Q)$, and fixed
quadratic coherence $c=\|P\rho Q\|_F^2$, the minimum block-coherence entropy is
\[
\Phi(a_*,\varepsilon_Q,c),\qquad a_*=1-\varepsilon_Q-(d_P-1)a_0,
\]
and can be attained by a single active two-level block plus incoherent spectators.}

See Theorem~\ref{thm:variational} for the precise statement. A key consequence
is the \emph{one-channel principle}: for the constrained problem, the minimum
can be realized by a qubit-like two-level structure. Distributing the same scalar
coherence data over additional channels cannot lower the entropy cost.

The theorem proves existence of a one-channel optimizer. The complete equality
classification is left open.

\paragraph*{Result C. Logarithmic boundary barrier}

If $A\ge a_0P$ and $0<\varepsilon_Q\le a_0/2$, then
(Corollary~\ref{cor:log-bound})
\[
D(\rho\|\Pi\rho)\ge\|B\|_F^2\,\log\!\Bigl(\frac{a_0}{\varepsilon_Q}\Bigr).
\]
As $\varepsilon_Q\to0$, positivity ($c\le a_*\varepsilon_Q$) forces $c\to0$, so the
total entropy need not diverge. For states with $B\neq0$,
Corollary~\ref{cor:log-bound} gives
\[
\frac{D(\rho\|\Pi\rho)}{\|B\|_F^2}
\ge \log\!\Bigl(\frac{a_0}{\varepsilon_Q}\Bigr),
\]
so along any such family with $\varepsilon_Q\to0$, the lower bound per unit
quadratic coherence diverges.
The logarithmic bound is asymptotically sharp on the two-level family
(Proposition~\ref{prop:sharpness}).

\paragraph*{Result D. Spectral channel sensitivity}

For $B\ne0$, the BKM bound decomposes as
\[
\operatorname{Tr}[B^*\Omega_{A,C}^{-1}(B)]
=\|B\|_F^2\sum_{\alpha,\beta}w_{\alpha\beta}L(a_\alpha,c_\beta),
\]
where $w_{\alpha\beta}=|\langle e_\alpha,Bf_\beta\rangle|^2/\|B\|_F^2$ and
$L(a,c)=\log(a/c)/(a-c)$ is the BKM logarithmic mean kernel
(Proposition~\ref{prop:channel-weight}). The full operator bound detects
which $(A,C)$-eigenbasis channels are active; coarse scalar data $(a_0,\varepsilon_Q,c)$
do not determine the BKM cost. The operator--scalar separation theorem
(Proposition~\ref{prop:separation}) shows the gap can be arbitrarily large.

\subsection{Scope of the results}

The one-channel variational theorem and logarithmic boundary barrier are
finite-dimensional two-block results. The midpoint BKM coercivity mechanism extends to
general finite-dimensional $\mathbb{Z}_2$-graded automorphisms
(Theorem~\ref{thm:graded-fd}) and, under bounded strict-positivity assumptions, to
finite-trace corners of semifinite von Neumann algebras (Theorem~\ref{thm:graded-semifinite}).
We do not prove a full modified logarithmic Sobolev inequality, do not treat
non-uniform dephasing rates, do not establish a theory for unbounded densities or
type~III von Neumann algebras (these require additional modular-theoretic arguments),
and do not give a complete equality classification for the variational minimizers.
A one-channel variational theorem for general $\mathbb{Z}_2$-gradings is not proved.
Section~\ref{sec:discussion} returns to these points as open directions.

\subsection{Coherence entropy as a leakage cost}

Let $P$ represent a protected, metastable, or low-energy sector, and $Q=I-P$
represent leakage modes. The block $C=Q\rho Q$ measures leaked population
($\varepsilon_Q=\operatorname{Tr}C$), while $B=P\rho Q$ measures coherent leakage. Under a
uniform floor $P\rho P\ge a_0P$, the boundary estimate implies that coherence
with a low-population sector has entropy cost per unit quadratic coherence
amplified by $\log(a_0/\varepsilon_Q)$. Since positivity forces $\|B\|_F^2$ to
vanish with $\varepsilon_Q$, the total entropy need not diverge.

\textit{Uniform-floor assumption.}
The scalar logarithmic law
\[
D(\rho\|\Pi\rho)\ge\|B\|_F^2\log(a_0/\varepsilon_Q)
\]
requires the uniform floor $A\ge a_0P$: all eigenvalues of the protected block
must be bounded below by $a_0$. If the protected block has small eigenvalues—that
is, if $A$ is only partially supported well above zero—the scalar law may be
too coarse or inapplicable. In that regime, the operator BKM channel bound
\[
D(\rho\|\Pi\rho)\ge\operatorname{Tr}[B^*\Omega_{A,C}^{-1}(B)]
\]
is the finer statement: it detects which $(A,C)$-eigenbasis channels
carry coherence and weights them individually by the BKM kernel $L(a_\alpha,c_\beta)$.

This is an information-theoretic leakage estimate. It is not a full model of
error correction, Zeno dynamics, or thermodynamic irreversibility.

\subsection{Relation to known inequalities}

\paragraph{Pinsker inequality.}
Pinsker gives $D(\rho\|\Pi\rho)\ge 2\|B\|_1^2$ (using
$\|\rho-\Pi\rho\|_1=2\|B\|_1$). For $B\ne0$, the logarithmic bound dominates Pinsker
when $\log(a_0/\varepsilon_Q)\ge2\|B\|_1^2/\|B\|_F^2$. Since
$\|B\|_1^2/\|B\|_F^2\le\rank(B)$, this holds whenever
$\varepsilon_Q\le a_0\exp(-2\rank(B))$: the exponentially small leakage regime.
In this regime, the BKM boundary estimate provides a logarithmically stronger
lower bound.

\paragraph{Relative entropy of coherence.}
The quantity $D(\rho\|\Pi\rho)$ is known as the block-relative entropy of
coherence~\cite{Aberg2006,BaumgratzCramerPlenio2014,WinterYang2016,StreltsovAdessoPlenio2017,ChitambarGour2019,FuYanGao2022,ManiRezazadehKarimipour2024}.
The contribution of the present paper is not the measure itself, but: (i) the midpoint
BKM coercivity via the block-sign involution; (ii) the exact constrained
minimization identifying a one-channel optimizer.

\paragraph{BKM and Petz metrics.}
The BKM metric is the Hessian of quantum relative entropy and one of the standard
monotone metrics associated with the Bogoliubov--Kubo--Mori construction and the
Petz classification~\cite{Bogoliubov1958,Kubo1957,Mori1965,HiaiPetz1991,Petz1986,PetzRag1996,LesniewskiRuskai1999,Bhatia2007}.
The present argument uses midpoint monotonicity induced by the block-sign
involution $U=P-Q$. This is not general affine Hessian monotonicity; it is a
special property of perturbations along the off-block direction.

\paragraph{Conditional expectations.}
The one-channel variational theorem, the logarithmic boundary barrier, and the spectral
channel formula apply to the two-block pinching conditional expectation
$\Pi(\rho)=P\rho P\oplus Q\rho Q$. The midpoint BKM coercivity mechanism extends to
general finite-dimensional $\mathbb{Z}_2$-graded automorphisms (Section~\ref{sec:z2});
the variational results are restricted to the two-block setting.

\paragraph{Operator versus scalar bounds.}
The exact BKM channel formula (Result~D) is strictly finer than scalar estimates
based on $(a_0,\varepsilon_Q,c)$ alone, and can exceed them by arbitrarily large factors
(Proposition~\ref{prop:separation}).

\medskip
\noindent\textit{Contribution and relation to prior work.}\ \looseness=-1
The logarithmic boundary estimates build on earlier support-sensitive BKM bounds obtained by the author in the setting of rank-deficient noncommutative Markov semigroups~\cite{Nasreddine2026EntropyModuli}. In the present two-block setting, they are derived from the midpoint BKM operator-kernel bound and combined with the exact constrained one-channel variational theorem. The main contribution of the present paper is
the combination, in a single two-block framework, of two \emph{independent}
ingredients: (i)~a \emph{sign-symmetric midpoint BKM coercivity} theorem, which
exploits the block-sign involution $U=P-Q$ and the scalar resolvent inequality to
establish $H_{M+tY}(Y,Y)\ge H_M(Y,Y)$ for all $t\in[0,1)$; and (ii)~an
\emph{exact constrained entropy minimization theorem} identifying a one-channel
extremizer—a single active two-level block—that achieves the minimum
block-coherence entropy at prescribed support floor, leakage mass, and
Hilbert--Schmidt off-block coherence. These two results are logically independent:
the variational theorem is not a consequence of the BKM lower bound. It uses an
independent CPTP coherence-compression construction (Theorem~\ref{thm:compression})
that directly identifies the one-channel optimizer for the constrained
entropy-geometric problem. Together, the analytic coercivity mechanism and the extremal geometry give
two complementary views of the same block-structured problem.

\paragraph{Relation to extremal entropy problems.}
Two-level reductions occur frequently in sharp quantum entropy inequalities and related trace-convexity arguments~\cite{LiebRuskai1973,Ruskai2002,OhyaPetz1993}, for example
in continuity bounds and pinching-type extremal problems. The present result differs
in two respects: the constraint fixes the Hilbert--Schmidt size of the off-block coherence
rather than a trace distance or spectrum, and a minimizer is obtained through an
explicit CPTP compression of many coherent channels into one. Thus the one-channel theorem is more specific than a convexity consequence
or a standard resource-theoretic interpretation of $D(\rho\|\Pi\rho)$~\cite{Aberg2006,MarvianSpekkens2014,ChitambarGour2019}: it is an
extremal statement for a particular block-geometric constraint set. The cited coherence-resource, pinching, and Petz-equality results do not directly yield the present extremal formula under the simultaneous constraints of fixed support floor, fixed leakage mass, and fixed Hilbert--Schmidt off-block coherence.

\medskip
\noindent\textbf{Main contributions.}
\begin{enumerate}[(1)]
\item \textit{Sharp entropy-cost theorem (finite-dimensional two-block).}
  Among states with fixed protected-sector floor, fixed leakage mass, and fixed
  Hilbert--Schmidt coherence, the minimum block-coherence entropy can be attained
  by a single active two-level block. The high-dimensional problem collapses to a
  scalar $2\times2$ variational problem with explicit minimizer.
\item \textit{BKM midpoint coercivity and logarithmic boundary law.}
  The block-sign involution gives an explicit lower bound via the BKM operator kernel.
  Under a uniform floor, the lower-bound coefficient multiplying the quadratic coherence
  diverges logarithmically as the leakage mass vanishes.

\item \textit{Auxiliary extension of the midpoint mechanism.}
  The BKM midpoint coercivity argument, but not the one-channel variational theorem,
  extends from two-block matrices to finite-dimensional trace-preserving
  $\mathbb{Z}_2$-graded automorphisms and, under bounded strict-positivity assumptions,
  to finite-trace corners of semifinite von Neumann algebras
  ($\mathrm{II}_1$ factors and $\mathrm{II}_\infty$ factors on finite-trace projections).
\end{enumerate}

\subsection{Paper structure}

Section~\ref{sec:setup} sets up the block geometry and BKM kernel.
Section~\ref{sec:BKM-geometry} proves the midpoint BKM coercivity, the
logarithmic boundary law, and the spectral channel formula.
Section~\ref{sec:variational} solves the one-channel variational problem.
Section~\ref{sec:dynamics} derives the dephasing dissipation corollary from the BKM midpoint bound.
Section~\ref{sec:application} gives a finite-dimensional information-theoretic
application to coherent leakage from a protected subspace. Section~\ref{sec:z2}
extends the midpoint BKM mechanism to $\mathbb{Z}_2$-graded settings: the
finite-dimensional graded theorem (Theorem~\ref{thm:graded-fd}), recovery of the
two-block bound, a bounded finite-support semifinite extension
(Theorem~\ref{thm:graded-semifinite}), and the resulting open directions.
Section~\ref{sec:discussion} concludes.
Appendices contain the SVD pinching construction, the polygon modulus lemma,
boundary cases for the Coherence Compression Principle, and the Petz-metric extension.

\section{Block Decomposition and Geometric Setup}
\label{sec:setup}

\paragraph{Notation table.}
\begin{center}
\renewcommand{\arraystretch}{1.3}
\begin{tabular}{ll}
\hline
Symbol & Meaning \\
\hline
$P$, $Q=I-P$ & orthogonal projections; $d_P=\dim P\cH$, $d_Q=\dim Q\cH$ \\
$A=P\rho P$, $B=P\rho Q$, $C=Q\rho Q$ & block decomposition of $\rho$ \\
$M=\Pi\rho=A\oplus C$ & block-diagonal pinching \\
$Y=\rho-M=\bigl(\begin{smallmatrix}0&B\\B^*&0\end{smallmatrix}\bigr)$ & off-diagonal coherence part \\
$\varepsilon_Q=\operatorname{Tr}(C)$ & leakage mass \\
$c=\|B\|_F^2$ & quadratic coherence \\
$a_0$ & support floor: $P\rho P\ge a_0P$ \\
$a_*=1-\varepsilon_Q-(d_P-1)a_0$ & active block diagonal entry \\
$L(a,c)=\log(a/c)/(a-c)$ & BKM logarithmic mean kernel \\
$\Phi(a,\varepsilon,x)$ & two-level relative entropy functional \\
$\Omega_{A,C}^{-1}(B)$ & BKM operator kernel \\
\hline
\end{tabular}
\end{center}

Let $\cH$ be a finite-dimensional Hilbert space with $d=\dim\cH$, $P$ an
orthogonal projection, $Q=\Id-P$.

Any density matrix $\rho$ admits the block form
\begin{equation}\label{eq:block}
\rho=\begin{pmatrix}A&B\\B^*&C\end{pmatrix},\qquad
A=P\rho P,\quad B=P\rho Q,\quad C=Q\rho Q.
\end{equation}
The block-diagonal pinching is $\Pi\rho:=A\oplus C$.
The quantity $D(\rho\|\Pi\rho)$ is the relative entropy of coherence in the
$P\oplus Q$ basis~\cite{Umegaki1962,OhyaPetz1993,BaumgratzCramerPlenio2014}.

\paragraph{Pythagorean identity.}
For any block-diagonal $\sigma$,
\begin{equation}\label{eq:pythagorean}
D(\rho\|\sigma)=D(\rho\|\Pi\rho)+D(\Pi\rho\|\sigma).
\end{equation}
In particular $D(\rho\|\sigma)\ge D(\rho\|\Pi\rho)$.

\paragraph{Strict interior versus boundary states.}
The BKM operator theorem (Theorem~\ref{thm:BKM-lower}) is proved for $A,C>0$.
The variational theorem (Theorem~\ref{thm:variational}) naturally has boundary
optimizers, because unused $Q$-directions can carry zero mass. In finite
dimensions, the scalar functional $\Phi(a,\varepsilon,x)$ is continuous on the
closed positivity domain $\{a,\varepsilon\ge0,\,0\le x\le a\varepsilon\}$, so the
variational theorem extends to boundary states by continuity (see
Lemma~\ref{lem:Phi-continuity}). We do not claim the full operator BKM theorem
for arbitrary singular $A,C$.

\begin{lemma}[Continuity of $\Phi$ at the boundary]
\label{lem:Phi-continuity}
The function
\[
\Phi(a,\varepsilon,x):=D\!\left(\begin{pmatrix}a&\sqrt{x}\\\sqrt{x}&\varepsilon\end{pmatrix}
\,\bigg\|\,\begin{pmatrix}a&0\\0&\varepsilon\end{pmatrix}\right)
\]
extends continuously to $\{a,\varepsilon\ge0,\,0\le x\le a\varepsilon\}$ with the
convention $0\log0=0$.
\end{lemma}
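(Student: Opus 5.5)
The plan is to write $\Phi$ in closed form through eigenvalues and check that each term has a continuous extension. Write $\rho_x=\bigl(\begin{smallmatrix}a&\sqrt{x}\\\sqrt{x}&\varepsilon\end{smallmatrix}\bigr)$ and $\sigma=\diag(a,\varepsilon)$. Since $\sigma$ is the pinching of $\rho_x$, the diagonal of $\rho_x$ coincides with that of $\sigma$. Hence $\operatorname{Tr}[\rho_x\log\sigma]=a\log a+\varepsilon\log\varepsilon$, with no need to invert $\sigma$ on its kernel; this is the standard reading of $D(\rho\|\Pi\rho)$, because $\operatorname{supp}\rho\subseteq\operatorname{supp}\Pi\rho$ always holds. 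We therefore define $\Phi$ on the closed domain by
\[
\Phi(a,\varepsilon,x)=\lambda_+\log\lambda_++\lambda_-\log\lambda_- -a\log a-\varepsilon\log\varepsilon,
\]
where
\[
\lambda_\pm=\tfrac12\Bigl(a+\varepsilon\pm\sqrt{(a-\varepsilon)^2+4x}\Bigr)
\]
are the eigenvalues of $\rho_x$. On the interior this agrees with the relative-entropy definition. We also check the equality $S(\Pi\rho)-S(\rho)=D(\rho\|\Pi\rho)$ on the boundary: on $\ker\Pi\rho$ the corresponding rows and columns of $\rho$ vanish by positivity, so both sides reduce to the same expression on the support.

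For continuity, $\lambda_\pm$ are continuous functions of $(a,\varepsilon,x)$ on $\mathbb{R}^3$, since the discriminant $(a-\varepsilon)^2+4x$ is nonnegative there. On the closed domain $0\le x\le a\varepsilon$ we have $\lambda_-\ge0$, because $\lambda_+\lambda_-=a\varepsilon-x\ge0$ and $\lambda_++\lambda_-=a+\varepsilon\ge0$. The function $\eta(t)=t\log t$ with $\eta(0)=0$ is continuous on $[0,\infty)$. So every term above is a composition of continuous maps into $[0,\infty)$ followed by $\eta$, and $\Phi$ is continuous on $\{a,\varepsilon\ge0,\,0\le x\le a\varepsilon\}$. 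Since the formula agrees with the original definition on the interior $a,\varepsilon>0$, $x<a\varepsilon$, it gives the asserted continuous extension.

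The only delicate point is justifying the extension at boundary points where $\sigma$ is singular, i.e.\ $a=0$ or $\varepsilon=0$. There the constraint $x\le a\varepsilon$ forces $x=0$, so $\rho_x=\sigma$, and the natural value $D=0$ matches the formula: $\{\lambda_+,\lambda_-\}=\{a,\varepsilon\}$ gives $\Phi=0$. Points with $x=a\varepsilon$ and $a,\varepsilon>0$ are harmless, since $\lambda_-=0$ and only $\eta$'s continuity at $0$ is used. I expect no real obstacle beyond stating the $0\log0$ convention explicitly and noting the diagonal-matching identity that avoids $\log\sigma$ on its kernel.
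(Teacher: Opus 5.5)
Your proposal is correct and is essentially the paper's proof: you write $\Phi=\lambda_+\log\lambda_++\lambda_-\log\lambda_--a\log a-\varepsilon\log\varepsilon$ and use continuity of $\lambda_\pm$ together with continuity of $t\log t$ (with $0\log 0=0$) on $[0,\infty)$. You also supply details the paper leaves implicit: $\lambda_-\ge0$ via $\lambda_+\lambda_-=a\varepsilon-x$, the identity $\operatorname{Tr}[\rho_x\log\sigma]=a\log a+\varepsilon\log\varepsilon$, and the degenerate points $a=0$ or $\varepsilon=0$; the one trivial slip is that $\lambda_\pm$ are continuous on $\{x\ge0\}$ rather than on all of $\mathbb{R}^3$, since $(a-\varepsilon)^2+4x$ can be negative when $x<0$.
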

\begin{proof}
The eigenvalues $\lambda_\pm=(a+\varepsilon\pm\sqrt{(a-\varepsilon)^2+4x})/2$ are
continuous functions of $(a,\varepsilon,x)$ on the closed domain. The entropy
$\lambda_+\log\lambda_++\lambda_-\log\lambda_--a\log a-\varepsilon\log\varepsilon$
is continuous with the convention $0\log0=0$.
\end{proof}

\paragraph{SVD of $B$.}
Write $B=\sum_{j=1}^m s_j|u_j\rangle\langle v_j|$ with $u_j\in P\cH$,
$v_j\in Q\cH$, $s_j>0$, and set
\begin{equation}\label{eq:SVD-data}
a_j:=\langle u_j,Au_j\rangle,\qquad c_j:=\langle v_j,Cv_j\rangle.
\end{equation}
Positivity of $\rho$ gives $s_j^2\le a_jc_j$: indeed, the compression of $\rho$ to $\operatorname{span}\{u_j,v_j\}$ is the positive $2\times2$ matrix $\bigl(\begin{smallmatrix}a_j&s_j\\ s_j&c_j\end{smallmatrix}\bigr)$.

\paragraph{BKM kernel.}
For $A,C>0$, define
\[
\Omega_{A,C}^{-1}(B):=\int_0^\infty(A+rI_P)^{-1}B(C+rI_Q)^{-1}\,\dd r.
\]
The integral formula $\int_0^\infty[(a+r)(c+r)]^{-1}\,\dd r=L(a,c)$ with
\begin{equation}\label{eq:L-def}
L(a,c):=\frac{\log(a/c)}{a-c},\qquad L(a,a):=\frac{1}{a},
\end{equation}
gives $\operatorname{Tr}[B^*\Omega_{A,C}^{-1}(B)]=\sum_{\alpha,\beta}
|\langle e_\alpha,Bf_\beta\rangle|^2L(a_\alpha,c_\beta)$
in the joint eigenbases of $A$ and $C$.

\paragraph{Two-level entropy functional.}
For $a,\varepsilon\ge0$ and $0\le x\le a\varepsilon$, set
\begin{equation}\label{eq:Phi-def}
\Phi(a,\varepsilon,x):=D\!\left(
  \begin{pmatrix}a&\sqrt{x}\\\sqrt{x}&\varepsilon\end{pmatrix}
  \,\bigg\|\,
  \begin{pmatrix}a&0\\0&\varepsilon\end{pmatrix}
\right).
\end{equation}
For $a,\varepsilon>0$, this equals
$\lambda_+\log\lambda_++\lambda_-\log\lambda_--a\log a-\varepsilon\log\varepsilon$
where
\[
\lambda_\pm=\frac{a+\varepsilon\pm\sqrt{(a-\varepsilon)^2+4x}}{2}.
\]

\section{Midpoint BKM Geometry and Kernel Bounds}
\label{sec:BKM-geometry}

\subsection{Midpoint BKM coercivity}

\begin{lemma}[Midpoint BKM monotonicity]
\label{lem:midpoint-mono}
Let $M=A\oplus C>0$ be block-diagonal and $Y=\bigl(\begin{smallmatrix}0&B\\B^*&0\end{smallmatrix}\bigr)$
be off-diagonal with $M\pm Y\ge0$. Then:
\begin{enumerate}[(i)]
  \item $H_{M+tY}(Y,Y)=H_{M-tY}(Y,Y)$ for all $t\in[0,1)$.
  \item $H_{M+tY}(Y,Y)\ge H_M(Y,Y)$ for all $t\in[0,1)$.
\end{enumerate}
\end{lemma}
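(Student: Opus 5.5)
The plan is to get (i) from unitary covariance of the BKM form under the block-sign involution. Then (ii) follows from (i) once we know that $N\mapsto H_N(Y,Y)$ is convex.

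First, the base points are strictly positive. For $t\in[0,1)$ we have $M+tY=(1-t)M+t(M+Y)$. Since $(1-t)M>0$ and $M+Y\ge0$, this gives $M+tY>0$. The same argument with $M-Y\ge0$ gives $M-tY>0$. So all resolvents $(M\pm tY+rI)^{-1}$, $r\ge0$, are well defined and the integral defining $H$ converges.

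\emph{Step (i).} Let $U=P-Q$, a self-adjoint unitary. Since $M=A\oplus C$ commutes with $U$, we get $UMU^*=M$. Since $Y$ is block off-diagonal, $UYU^*=-Y$. Hence $U(M+tY)U^*=M-tY$. For any unitary $V$ and $N>0$, the identity $V(N+rI)^{-1}V^*=(VNV^*+rI)^{-1}$ and cyclicity of the trace give $H_{VNV^*}(VZV^*,VZV^*)=H_N(Z,Z)$. Apply this with $V=U$, $N=M+tY$ and $Z=Y$. This gives $H_{M-tY}(-Y,-Y)=H_{M+tY}(Y,Y)$. Finally, $H_N$ is quadratic in its argument, so $H_{M-tY}(-Y,-Y)=H_{M-tY}(Y,Y)$, which is (i).

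\emph{Step (ii).} Write $M=\tfrac12\bigl((M+tY)+(M-tY)\bigr)$. I will use that, for fixed $Z$, the map $N\mapsto H_N(Z,Z)=\operatorname{Tr}[Z^*\,\Omega_N^{-1}(Z)]$ is convex on positive definite matrices. Here $\Omega_N^{-1}$ is the inverse of multiplication by the logarithmic mean of left and right multiplication by $N$. This convexity is a special case of the joint convexity of $(N,Z)\mapsto \operatorname{Tr}[Z^*\Omega_N^{-1}(Z)]$, proved by Lieb--Ruskai, and more generally of the joint convexity of every monotone-metric quadratic form (Lesniewski--Ruskai~\cite{LesniewskiRuskai1999}). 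Convexity gives
\[
H_M(Y,Y)\le\tfrac12\bigl(H_{M+tY}(Y,Y)+H_{M-tY}(Y,Y)\bigr).
\]
By (i) the right-hand side equals $H_{M+tY}(Y,Y)$, which proves (ii).

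The main obstacle is justifying this convexity step. The scalar inequality $\tfrac12((1+u)^{-2}+(1-u)^{-2})\ge1$ cannot be applied naively, because $M$ and $Y$ do not commute, so there is no common spectral fibre. If a self-contained argument is preferred over citing joint convexity, I would try the following. Fix $r$ and set $R_\pm=(M\pm tY+rI)^{-1}$ and $R_0=(M+rI)^{-1}$. Operator convexity of $x\mapsto x^{-1}$ gives $R_0\le\tfrac12(R_++R_-)$. I would then aim to show $\operatorname{Tr}[YR_0YR_0]\le\tfrac12\bigl(\operatorname{Tr}[YR_+YR_+]+\operatorname{Tr}[YR_-YR_-]\bigr)$. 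For this I would use that $(X_1,X_2)\mapsto\operatorname{Tr}[YX_1YX_2]$ is monotone on positive operators together with joint convexity of $(X_1,X_2)\mapsto \operatorname{Tr}[Y X_1^{-1}YX_2^{-1}]$, which is Lieb-type. Integrating over $r\in(0,\infty)$ then gives the same conclusion.
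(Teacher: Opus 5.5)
Your proposal is correct. Part (i) is the paper's argument. Part (ii), however, takes a genuinely different route.

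You obtain the midpoint inequality from convexity of $N\mapsto H_N(Y,Y)$ on positive definite matrices. Underneath, this is Kubo--Ando concavity of the logarithmic mean of $L_N$ and $R_N$, followed by operator convexity and antimonotonicity of inversion. This is exactly the argument the paper reserves for general Petz metrics in Appendix~\ref{app:petz}, specialized to BKM.

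The paper's main proof avoids any such convexity theorem, and the obstacle you name disappears after a congruence. Set $R=M+rI$ and $K=R^{-1/2}YR^{-1/2}$. Then $M\pm tY+rI=R^{1/2}(I\pm tK)R^{1/2}$. By cyclicity, the $r$-integrands of $H_{M\pm tY}(Y,Y)$ and $H_M(Y,Y)$ become $\operatorname{Tr}[K^2(I\pm tK)^{-2}]$ and $\operatorname{Tr}[K^2]$. Everything is then a function of the single self-adjoint operator $K$, with the spectrum of $tK$ in $(-1,1)$. The noncommutativity of $M$ and $Y$ is absorbed into $K$, and the scalar inequality $\tfrac12((1+u)^{-2}+(1-u)^{-2})\ge1$ applies eigenvalue by eigenvalue.

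The trade-off is this. The congruence argument is self-contained, using only cyclicity and functional calculus. It proves exactly what is needed: convexity of $s\mapsto H_{M+sY}(Y,Y)$ along the single direction $Y$, since $K$ does not depend on $s$. Your route rests on a nontrivial convexity theorem. In exchange it gives convexity of $N\mapsto H_N(Z,Z)$ in every direction, and it carries over unchanged to all monotone metrics.

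One caution on your fallback sketch. Combining $R_0\le\tfrac12(R_++R_-)$ with monotonicity of $(X_1,X_2)\mapsto\operatorname{Tr}[YX_1YX_2]$ only reduces the problem to midpoint convexity of $X\mapsto\operatorname{Tr}[YXYX]$. That is an indefinite quadratic form, not convex in general. Convexity of $X\mapsto\operatorname{Tr}[YX^{-1}YX^{-1}]$, applied directly at $X=M\pm tY+rI$ (whose midpoint is $M+rI$), already gives the per-$r$ inequality. So the first two ingredients are unnecessary.
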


\begin{proof}
\emph{Part (i).}
Let $U:=P-Q$. Since $M=A\oplus C$ is block-diagonal and $Y$ is off-diagonal,
\[
UMU^*=M,\qquad UYU^*=-Y.
\]
Hence $M-tY=U(M+tY)U^*$. By unitary covariance of the BKM quadratic form,
\[
H_{M-tY}(Y,Y)
=H_{U(M+tY)U^*}(UYU^*,UYU^*)
=H_{M+tY}(Y,Y).
\]

\emph{Part (ii).}
Fix $r>0$ and set $R:=M+rI$, $K:=R^{-1/2}YR^{-1/2}$.
Then
\[
M\pm tY+rI=R^{1/2}(I\pm tK)R^{1/2},
\qquad
(M\pm tY+rI)^{-1}=R^{-1/2}(I\pm tK)^{-1}R^{-1/2}.
\]
Since $M>0$ and $M\pm Y\ge0$, for $0\le t<1$ we have
$M\pm tY=(1-t)M+t(M\pm Y)\ge0$. Hence $I\pm tK$ is positive and invertible,
and the spectrum of $tK$ lies in $(-1,1)$. The resolvent integrand is
\[
\begin{aligned}
&\operatorname{Tr}\!\left[Y(M\pm tY+rI)^{-1}Y(M\pm tY+rI)^{-1}\right]\\
&\quad=\operatorname{Tr}\!\left[R^{-1/2}Y R^{-1/2}(I\pm tK)^{-1}
R^{-1/2}Y R^{-1/2}(I\pm tK)^{-1}\right]\\
&\quad=\operatorname{Tr}\!\left[K(I\pm tK)^{-1}K(I\pm tK)^{-1}\right]
=\operatorname{Tr}\!\left[K^2(I\pm tK)^{-2}\right],
\end{aligned}
\]
where the last equality uses that $K$ commutes with every Borel function of $K$.
By functional calculus the pointwise scalar inequality
\[
\tfrac{1}{2}\bigl((1+u)^{-2}+(1-u)^{-2}\bigr)\ge1\quad(|u|<1)
\]
(equivalent to $u^2(3-u^2)\ge0$) gives
\[
\tfrac{1}{2}\operatorname{Tr}\!\left[K^2(I+tK)^{-2}+K^2(I-tK)^{-2}\right]
\ge\operatorname{Tr}[K^2].
\]
Integrating over $r$ and using part (i):
\[
H_{M+tY}(Y,Y)=\tfrac{1}{2}\bigl(H_{M+tY}(Y,Y)+H_{M-tY}(Y,Y)\bigr)\ge H_M(Y,Y).
\qedhere
\]
\end{proof}

\begin{remark}
The argument uses only the block-sign symmetry and the resolvent formula for
the BKM form. No monotonicity of the BKM Hessian along general affine
directions is invoked.
\end{remark}

\begin{remark}[A related midpoint estimate for Petz metrics]
\label{rem:petz-midpoint}
Lemma~\ref{lem:midpoint-mono} admits an analogue for all Petz monotone metrics:
under the same bounded strict-positivity assumptions,
$g^f_{M+tY}(Y,Y)\ge g^f_M(Y,Y)$ for $t\in[0,1)$,
proved via joint concavity of Kubo--Ando means~\cite{KuboAndo1980,Bhatia2007};
see Appendix~\ref{app:petz}.
The BKM metric is distinguished because it is the Hessian of $N\mapsto\operatorname{Tr}[N\log N]$,
allowing the midpoint estimate to integrate directly into the entropy lower bound.
\end{remark}

\begin{theorem}[BKM operator lower bound]
\label{thm:BKM-lower}
Let $\rho=\bigl(\begin{smallmatrix}A&B\\B^*&C\end{smallmatrix}\bigr)$ be a
density matrix with $A,C>0$. Then
\begin{equation}\label{eq:BKM-intrinsic}
D(\rho\|\Pi\rho)\;\ge\;\operatorname{Tr}\!\left[B^*\,\Omega_{A,C}^{-1}(B)\right].
\end{equation}
\end{theorem}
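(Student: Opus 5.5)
The plan is to integrate the midpoint estimate of Lemma~\ref{lem:midpoint-mono} along the affine path $\rho(t)=M+tY$, $t\in[0,1]$, where $M=\Pi\rho=A\oplus C$ and $Y=\rho-M$. Set $f(t):=D(\rho(t)\|M)=\operatorname{Tr}[\rho(t)\log\rho(t)]-\operatorname{Tr}[\rho(t)\log M]$. Since $\operatorname{Tr}[\rho(t)\log M]=\operatorname{Tr}[M\log M]$ for every $t$ (because $Y$ is off-diagonal and $\log M$ is block-diagonal, so $\operatorname{Tr}[Y\log M]=0$), we get $f(t)=\operatorname{Tr}[\rho(t)\log\rho(t)]-\operatorname{Tr}[M\log M]$. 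Thus $f(0)=0$ and $f(1)=D(\rho\|\Pi\rho)$.

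Next I would compute derivatives. For $t\in[0,1)$ we have $\rho(t)=(1-t)M+t\rho>0$, because $M>0$; hence $t\mapsto\operatorname{Tr}[\rho(t)\log\rho(t)]$ is smooth there. Its first derivative is $\operatorname{Tr}[Y\log\rho(t)]+\operatorname{Tr}Y$, and $\operatorname{Tr}Y=0$. At $t=0$ this gives $f'(0)=\operatorname{Tr}[Y\log M]=0$. For the second derivative, the standard resolvent formula $\frac{d}{dt}\log\rho(t)=\int_0^\infty(\rho(t)+r)^{-1}Y(\rho(t)+r)^{-1}\,\dd r$ yields $f''(t)=H_{\rho(t)}(Y,Y)$. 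Lemma~\ref{lem:midpoint-mono}(ii) applies, since $M\pm Y\ge0$: $M+Y=\rho\ge0$ and $M-Y=U\rho U^*\ge0$. It gives $f''(t)\ge H_M(Y,Y)$ on $[0,1)$.

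Taylor's formula with integral remainder on $[0,s]$, $s<1$, then gives $f(s)=\int_0^s(s-t)f''(t)\,\dd t\ge \tfrac{s^2}{2}H_M(Y,Y)$. Letting $s\to1^-$ and using continuity of $t\mapsto\operatorname{Tr}[\rho(t)\log\rho(t)]$ on $[0,1]$ (with $0\log0=0$, since $\rho$ itself may be singular) gives $D(\rho\|\Pi\rho)\ge\tfrac12H_M(Y,Y)$.

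It remains to evaluate $H_M(Y,Y)$. Because $(M+r)^{-1}=(A+r)^{-1}\oplus(C+r)^{-1}$ and $Y$ is off-diagonal,
\[
\operatorname{Tr}[Y(M+r)^{-1}Y(M+r)^{-1}]=\operatorname{Tr}[B(C+r)^{-1}B^*(A+r)^{-1}]+\operatorname{Tr}[B^*(A+r)^{-1}B(C+r)^{-1}],
\]
and the two terms are equal by cyclicity. Integrating over $r$ gives $H_M(Y,Y)=2\operatorname{Tr}[B^*\Omega_{A,C}^{-1}(B)]$, so the factor $\tfrac12$ cancels and yields the claimed bound~\eqref{eq:BKM-intrinsic}.

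The main obstacle is the endpoint $t=1$, where $\rho$ may fail to be invertible. The curvature bound is needed only on $[0,1)$, but the passage to the limit requires continuity of $f$ up to $t=1$. This follows from the continuity of the entropy in finite dimensions, with the term $\operatorname{Tr}[\rho(t)\log M]$ constant and finite because $M>0$. Justifying the second-derivative formula on the open interval is routine.
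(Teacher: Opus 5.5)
Your proposal is correct and follows the paper's proof essentially step for step: the same function $D(M+tY\|M)$ along the affine path, vanishing first variation, the midpoint bound $H_{M+tY}(Y,Y)\ge H_M(Y,Y)$ from Lemma~\ref{lem:midpoint-mono}, Taylor's integral formula with $t\uparrow1$, and the block computation $H_M(Y,Y)=2\operatorname{Tr}[B^*\Omega_{A,C}^{-1}(B)]$. Your explicit check that $M-Y=U\rho U^*\ge0$ (the lemma's hypothesis) and your justification of continuity at $t=1$ spell out details the paper leaves implicit.
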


\begin{proof}
Set $M:=\Pi\rho=A\oplus C$ and $Y:=\rho-\Pi\rho$.
Define $F(t):=D(M+tY\|M)$ for $0\le t<1$.
Since $A,C>0$ we have $M>0$, hence $M+tY=(1-t)M+t\rho>0$ for $0\le t<1$,
so $F$ is twice differentiable on $[0,1)$ and continuous at $t=1$.

We have $F(0)=0$. Moreover,
\[
F'(t)=\operatorname{Tr}\!\left[
Y\bigl(\log(M+tY)+I-\log M\bigr)
\right].
\]
At $t=0$ this gives
\[
F'(0)=\operatorname{Tr}\!\left[Y(\log M+I-\log M)\right]
       =\operatorname{Tr}(Y)=0.
\]
Equivalently, the possible term $\operatorname{Tr}(Y\log M)$ vanishes because
$\log M$ is block-diagonal while $Y$ is off-diagonal, and $\operatorname{Tr}(Y)=0$
since $\rho$ and $\Pi\rho$ have the same trace.
The second derivative is $F''(t)=H_{M+tY}(Y,Y)$.
By Lemma~\ref{lem:midpoint-mono}, $H_{M+tY}(Y,Y)\ge H_M(Y,Y)$. Taylor's
integral formula gives $F(t)=\int_0^t(t-s)F''(s)\,ds\ge\frac{t^2}{2}H_M(Y,Y)$.
Letting $t\uparrow1$ yields $D(\rho\|\Pi\rho)\ge\frac{1}{2}H_M(Y,Y)$.

Using the block structure $(M+rI)^{-1}=(A+rI)^{-1}\oplus(C+rI)^{-1}$,
\[
H_M(Y,Y)
=2\int_0^\infty\operatorname{Tr}\!\left[B^*(A+rI)^{-1}B(C+rI)^{-1}\right]\dd r
=2\,\operatorname{Tr}\!\left[B^*\Omega_{A,C}^{-1}(B)\right].
\]
Hence $D(\rho\|\Pi\rho)\ge\operatorname{Tr}[B^*\Omega_{A,C}^{-1}(B)]$.
\end{proof}

\begin{remark}[Singular boundary states]
\label{rem:boundary}
For states with $A,C>0$, Theorem~\ref{thm:BKM-lower} applies directly and no
regularization is involved. If $A$ or $C$ is singular, one may apply the theorem
to the regularized states $\rho_\delta=(1-\delta)\rho+(\delta/d)I$. A full
passage of the operator-kernel bound to the singular limit is a separate
convergence question and is not used in the sequel.
\end{remark}

\subsection{The logarithmic boundary law}

\paragraph{Boundary convention.}
The operator BKM estimate (Theorem~\ref{thm:BKM-lower}) is stated in the strict
regime $A,C>0$. The scalar variational quantities $\Phi(a,\varepsilon,x)$ extend
continuously to the closed positivity domain $a,\varepsilon\ge0$, $0\le x\le a\varepsilon$,
with the convention $0\log0=0$ (Lemma~\ref{lem:Phi-continuity}). In the
logarithmic boundary estimates below we state the nontrivial case $\varepsilon_Q>0$.
If $\varepsilon_Q=0$, then positivity of $\rho$ forces $B=0$, and the displayed
logarithmic inequality holds in the limiting sense with both sides equal to zero.
The strict BKM operator theorem is for $A,C>0$; the variational theorem is scalar
after the SVD pinching and uses the continuous extension of $\Phi$, so its optimizer
may lie on the boundary.

\begin{corollary}[Boundary logarithmic specialization]
\label{cor:log-bound}
Assume $A\ge a_0P$ and $0<\varepsilon_Q:=\operatorname{Tr}(C)\le a_0/2$.
Then
\begin{equation}\label{eq:log-bound}
D(\rho\|\Pi\rho)\;\ge\;\|B\|_F^2\,\log\!\left(\frac{a_0}{\varepsilon_Q}\right).
\end{equation}
For any block-diagonal $\sigma$, $D(\rho\|\sigma)\ge D(\rho\|\Pi\rho)$
by~\eqref{eq:pythagorean}, so~\eqref{eq:log-bound} holds with $D(\rho\|\sigma)$
in place of $D(\rho\|\Pi\rho)$.
If $\varepsilon_Q=0$, positivity forces $B=0$ and the inequality is interpreted by the
limiting convention above.
\end{corollary}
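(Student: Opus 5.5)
The plan is to derive the corollary from Theorem~\ref{thm:BKM-lower} through the spectral form of the BKM kernel. This requires a uniform lower bound on the logarithmic-mean kernel $L(a_\alpha,c_\beta)$ over all eigenvalue pairs.

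First assume $C>0$. Since $A\ge a_0P$ with $a_0>0$, we have $A>0$ automatically. Theorem~\ref{thm:BKM-lower} then applies and gives
\[
D(\rho\|\Pi\rho)\ge\sum_{\alpha,\beta}|\langle e_\alpha,Bf_\beta\rangle|^2L(a_\alpha,c_\beta),
\]
where $a_\alpha\ge a_0$ are the eigenvalues of $A$ and $c_\beta\in(0,\varepsilon_Q]$ are those of $C$. The bound $c_\beta\le\varepsilon_Q$ holds because $c_\beta\le\operatorname{Tr}C$.

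The key scalar step is to show $L(a,c)\ge\log(a_0/\varepsilon_Q)$ whenever $a\ge a_0$ and $0<c\le\varepsilon_Q\le a_0/2$. Write $L(a,c)=\int_0^\infty[(a+r)(c+r)]^{-1}dr$. This shows $L$ is decreasing in each argument, so $c\le\varepsilon_Q$ gives $L(a,c)\ge L(a,\varepsilon_Q)$. The dependence on $a$ is the delicate point, since $L(a,\varepsilon_Q)\sim\log(a/\varepsilon_Q)/a$ decays for large $a$.

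The saving fact is that $a\le\operatorname{Tr}A=1-\varepsilon_Q\le1$. Using $a\le1$ and monotonicity in $a$, one would get $L(a,\varepsilon_Q)\ge L(1,\varepsilon_Q)=\log(1/\varepsilon_Q)/(1-\varepsilon_Q)$. Since $a_0\le1$, this is at least $\log(1/\varepsilon_Q)\ge\log(a_0/\varepsilon_Q)$. An alternative route is $L(a,c)=\log(a/c)/(a-c)\ge\log(a/c)/a$, together with the fact that $x\mapsto\log(x/c)/x$ is decreasing for $x\ge ec$. This route is messier; the hypothesis $\varepsilon_Q\le a_0/2$ may just be what makes $\log(a_0/\varepsilon_Q)>0$ and keeps the statement meaningful. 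I expect this scalar kernel bound to be the main obstacle, and I would check carefully that the crude bound $a\le1$ really suffices uniformly. Summing over $\alpha,\beta$ with $\sum|\langle e_\alpha,Bf_\beta\rangle|^2=\|B\|_F^2$ then gives \eqref{eq:log-bound}.

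If $C$ is singular but $\varepsilon_Q>0$, restrict to the support of $C$. Positivity of $\rho$ forces $B=BE$, where $E$ is the support projection of $C$: if $Cv=0$ then $Bv=0$ by the $2\times2$ compression argument $s^2\le ac$. Consequently $\rho$ lives on $P\mathcal H\oplus E\mathcal H$ plus a zero block, and $D(\rho\|\Pi\rho)$ is unchanged when the null directions are dropped. The strict case then applies there, because $C|_{E}>0$ with the same trace and $A$ is unchanged. The statement about block-diagonal $\sigma$ follows from \eqref{eq:pythagorean}. The case $\varepsilon_Q=0$ gives $C=0$, hence $B=0$, as noted in the statement.
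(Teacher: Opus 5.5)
Your proof is correct, but it differs from the paper's in two respects.

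\emph{How the paper proceeds.} It never applies Theorem~\ref{thm:BKM-lower} to $\rho$ itself. It uses the SVD pinching of Appendix~\ref{app:svd} to get $D(\rho\|\Pi\rho)\ge\sum_{s_j>0}\Phi(a_j,c_j,s_j^2)$, then applies Lemma~\ref{lem:phi-bkm-lower} channel by channel. Finally it uses $L(a,c)\ge\log(a/c)$ for $0<a-c\le1$. The chain $c_j\le\varepsilon_Q\le a_0/2<a_0\le a_j$ guarantees $a_j>c_j$, and the floor gives $\log(a_j/c_j)\ge\log(a_0/\varepsilon_Q)$. Since each active SVD channel has $0<s_j^2\le a_jc_j$, singular $C$ needs no separate case.

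\emph{Your reduction step.} Your spectral route is, in the strict regime, the one sketched in Remark~\ref{rem:coarse}. It does need the support reduction for singular $C$. You carry this out correctly: $B(Q-E)=0$ makes $\rho$ and $\Pi\rho$ split off the same zero block.

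\emph{Your scalar step.} This is the genuinely different ingredient, and the uniformity you were worried about does hold. Monotonicity of $L$ in each argument, with $a_\alpha\le1$ and $c_\beta\le\varepsilon_Q$, gives $L(a_\alpha,c_\beta)\ge L(1,\varepsilon_Q)=\log(1/\varepsilon_Q)/(1-\varepsilon_Q)\ge\log(1/\varepsilon_Q)\ge\log(a_0/\varepsilon_Q)$. This is stronger than the stated bound. It shows that, in your argument, $A\ge a_0P$ and $\varepsilon_Q\le a_0/2$ serve essentially only to make $A>0$. Combined with the paper's SVD reduction, whose active channels have $a_j,c_j>0$ automatically, it even yields $D(\rho\|\Pi\rho)\ge\|B\|_F^2\log(1/\varepsilon_Q)$ for every state with $0<\varepsilon_Q<1$, with no floor at all.

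\emph{What the paper's route buys.} It reuses the SVD machinery of Theorem~\ref{thm:variational} and treats singular blocks without a case split.
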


\begin{proof}
By the SVD pinching reduction (Appendix~\ref{app:svd}),
\[
D(\rho\|\Pi\rho)\ge\sum_{s_j>0}\Phi(a_j,c_j,s_j^2).
\]
By Lemma~\ref{lem:phi-bkm-lower}, $\Phi(a_j,c_j,s_j^2)\ge s_j^2\,L(a_j,c_j)$
for each active term. For each $s_j>0$, positivity of the corresponding $2\times2$
compression gives $s_j^2\le a_jc_j$, so $c_j>0$ and $L(a_j,c_j)$ is well-defined.
Since $a_j\ge a_0>0$, $0<c_j\le\varepsilon_Q\le a_0/2<a_0\le a_j$, we have
$0<a_j-c_j\le1$ (as diagonal entries of a density matrix). Therefore
\[
L(a_j,c_j)=\frac{\log(a_j/c_j)}{a_j-c_j}\ge\log(a_j/c_j)\ge\log\!\left(\frac{a_0}{\varepsilon_Q}\right).
\]
Summing over active channels and using $\sum_{s_j>0}s_j^2=\|B\|_F^2$
yields~\eqref{eq:log-bound}.
\end{proof}

\begin{proposition}[Asymptotic sharpness]
\label{prop:sharpness}
Let $\rho_q=\bigl(\begin{smallmatrix}1-q&\sqrt{q(1-q)}\\\sqrt{q(1-q)}&q\end{smallmatrix}\bigr)$
and $\Pi\rho_q=\diag(1-q,q)$ for $q\in(0,\tfrac{1}{2})$.
Fix any $a_0\in(0,1)$. For all sufficiently small $q$, one has
$A_q:=1-q\ge a_0$ and $0<\varepsilon_Q(\rho_q)=q\le a_0/2$.
Set $C_q=q$ and $B_q=\sqrt{q(1-q)}$:

\noindent\emph{(a) BKM asymptotic saturation.} As $q\to0^+$,
\begin{equation}\label{eq:asymp-exact}
D(\rho_q\|\Pi\rho_q)=\operatorname{Tr}[B_q^*\Omega_{A_q,C_q}^{-1}(B_q)]+O(q),
\quad\text{hence}\quad
\frac{D(\rho_q\|\Pi\rho_q)}{\operatorname{Tr}[B_q^*\Omega_{A_q,C_q}^{-1}(B_q)]}\to1.
\end{equation}

\noindent\emph{(b) Logarithmic sharpness.} As $q\to0^+$,
\begin{equation}\label{eq:log-sharp}
D(\rho_q\|\Pi\rho_q)\sim\|B_q\|_F^2\log\!\left(\frac{1-q}{q}\right)
\asymp q\log(1/q),
\end{equation}
so $D(\rho_q\|\Pi\rho_q)/[\|B_q\|_F^2\log(a_0/\varepsilon_Q(\rho_q))]\to1$.
\end{proposition}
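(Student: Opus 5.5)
The plan is a direct computation: every quantity in the statement has a closed form in $q$. The key observation is that $\rho_q$ is a pure state. Its determinant is $(1-q)q-q(1-q)=0$ and its trace is $1$, so its eigenvalues are $1$ and $0$, and $\operatorname{Tr}[\rho_q\log\rho_q]=0$ with the convention $0\log0=0$. Since $\Pi\rho_q=\diag(1-q,q)$ is diagonal, $\operatorname{Tr}[\rho_q\log\Pi\rho_q]=(1-q)\log(1-q)+q\log q$. Hence
\[
D(\rho_q\|\Pi\rho_q)=h(q):=-q\log q-(1-q)\log(1-q),
\]
the binary entropy. This agrees with $\Phi(1-q,q,q(1-q))$ evaluated with $\lambda_+=1$ and $\lambda_-=0$.

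The admissibility claims are immediate. Both $1-q\ge a_0$ and $q\le a_0/2$ hold once $q\le\min(1-a_0,a_0/2)$.

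For part (a), I will compute the BKM term exactly from the $1\times1$ kernel formula~\eqref{eq:L-def}:
\[
\operatorname{Tr}[B_q^*\Omega^{-1}_{A_q,C_q}(B_q)]=q(1-q)\,L(1-q,q)=\frac{q(1-q)}{1-2q}\,\log\frac{1-q}{q}.
\]
I then expand both sides as $q\to0^+$. For the entropy, $h(q)=q\log(1/q)+q+O(q^2)$, using $-(1-q)\log(1-q)=q+O(q^2)$. For the kernel term, $\frac{1-q}{1-2q}=1+q+O(q^2)$ and $\log\frac{1-q}{q}=\log(1/q)-q+O(q^2)$. Together these give $q\log(1/q)+O(q^2\log(1/q))$. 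The difference is therefore $q+O(q^2\log(1/q))=O(q)$, which is the stated form. Dividing by the kernel term, which behaves like $q\log(1/q)$, the ratio is $1+O(1/\log(1/q))\to1$.

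For part (b), note that $\|B_q\|_F^2\log\frac{1-q}{q}=q(1-q)\bigl(\log(1/q)+O(q)\bigr)\sim q\log(1/q)$. Combined with $h(q)\sim q\log(1/q)$ from the previous step, this gives the first equivalence and the $\asymp$ claim. For the final limit, $\varepsilon_Q(\rho_q)=q$ and $a_0$ is fixed, so $\log(a_0/q)=\log(1/q)+\log a_0\sim\log(1/q)$. Hence $\|B_q\|_F^2\log(a_0/q)\sim q\log(1/q)\sim D(\rho_q\|\Pi\rho_q)$, and the ratio tends to $1$.

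No step is a genuine obstacle. The only point needing care is keeping the error orders straight in (a). The $O(q)$ remainder is dominated by the linear term $q$ coming from $-(1-q)\log(1-q)$, which the BKM term misses, while the $q^2\log(1/q)$ corrections are of lower order.
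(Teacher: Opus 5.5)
Your proposal is correct and follows the paper's proof essentially step for step: purity of $\rho_q$ gives $D(\rho_q\|\Pi\rho_q)=h(q)$, the scalar kernel gives $q(1-q)L(1-q,q)$, and both are expanded to leading order $q\log(1/q)$. Your error bookkeeping is slightly sharper than the paper's, which records only $O(q)$: you identify the remainder as $q+O(q^2\log(1/q))$, with the linear term coming from $-(1-q)\log(1-q)$.
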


\begin{proof}
Since $\rho_q$ is pure, $D(\rho_q\|\Pi\rho_q)=h(q)=q\log(1/q)+O(q)$.
The BKM form is
\[
\operatorname{Tr}[B_q^*\Omega_{A_q,C_q}^{-1}(B_q)]=q(1-q)L(1-q,q)
\]
where $L(a,c)=\log(a/c)/(a-c)$. Since $(1-q)/(1-2q)=1+O(q)$ and $\log(1-q)=O(q)$,
$q(1-q)L(1-q,q)=q\log(1/q)+O(q)$, giving~\eqref{eq:asymp-exact}.
Since $\|B_q\|_F^2=q(1-q)\sim q$ and $\log((1-q)/q)\sim\log(1/q)$,
while for fixed $a_0$ one has $\log(a_0/q)=\log(1/q)+O(1)$,
part~(b) follows. Together: the BKM operator controls the exact leading asymptotic
term; the logarithmic corollary therefore captures the same leading order.
\end{proof}

\subsection{Spectral channel formula and operator--scalar separation}
\label{ssec:spectral}

\begin{proposition}[Channel-weighted representation]
\label{prop:channel-weight}
Assume $A,C>0$ and $B\ne0$. Let $A=\sum_\alpha a_\alpha|e_\alpha\rangle\langle e_\alpha|$ and
$C=\sum_\beta c_\beta|f_\beta\rangle\langle f_\beta|$ be the spectral decompositions
(so $a_\alpha,c_\beta>0$).
Define the coherence channel weights
\[
w_{\alpha\beta}:=\frac{|\langle e_\alpha,Bf_\beta\rangle|^2}{\|B\|_F^2}\ge0,
\quad\sum_{\alpha,\beta}w_{\alpha\beta}=1.
\]
Then
\[
\frac{\operatorname{Tr}[B^*\Omega_{A,C}^{-1}(B)]}{\|B\|_F^2}
=\sum_{\alpha,\beta}w_{\alpha\beta}\,L(a_\alpha,c_\beta),
\]
and in particular $\operatorname{Tr}[B^*\Omega_{A,C}^{-1}(B)]$ is well-defined and finite.
\end{proposition}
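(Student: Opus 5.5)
The plan is to expand the definition of $\Omega_{A,C}^{-1}(B)$ in the spectral bases of $A$ and $C$. Since $A,C>0$ are finite-dimensional, we have $(A+rI_P)^{-1}=\sum_\alpha (a_\alpha+r)^{-1}|e_\alpha\rangle\langle e_\alpha|$ and $(C+rI_Q)^{-1}=\sum_\beta (c_\beta+r)^{-1}|f_\beta\rangle\langle f_\beta|$ for every $r\ge0$. Substituting these into the integrand gives, for each fixed $r$,
\[
\operatorname{Tr}\!\left[B^*(A+rI_P)^{-1}B(C+rI_Q)^{-1}\right]
=\sum_{\alpha,\beta}\frac{|\langle e_\alpha,Bf_\beta\rangle|^2}{(a_\alpha+r)(c_\beta+r)}.
\]
To obtain this, expand the trace in the basis $\{f_\beta\}$ of $Q\cH$ and insert $\sum_\alpha|e_\alpha\rangle\langle e_\alpha|=I_P$. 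The key point is that $\langle f_\beta,B^*e_\alpha\rangle=\overline{\langle e_\alpha,Bf_\beta\rangle}$, so each term is a nonnegative modulus squared.

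Next we integrate in $r$. The sum is finite, with $d_Pd_Q$ terms, and each term is nonnegative. It is bounded by $|\langle e_\alpha,Bf_\beta\rangle|^2(a_{\min}+r)^{-1}(c_{\min}+r)^{-1}$, which is integrable on $[0,\infty)$ because $a_{\min},c_{\min}>0$: it is bounded near $0$ and decays like $r^{-2}$ at infinity. So the trace and the integral may be interchanged, and the operator integral defining $\Omega_{A,C}^{-1}(B)$ converges in norm. This settles that the quantity is well-defined and finite. It then remains to check the scalar identity $\int_0^\infty[(a+r)(c+r)]^{-1}\dd r=L(a,c)$. For $a\neq c$ this follows from partial fractions, $\frac{1}{c-a}\bigl(\frac{1}{a+r}-\frac{1}{c+r}\bigr)$, which integrates to $\frac{\log(a/c)}{a-c}$. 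For $a=c$ one integrates $(a+r)^{-2}$ directly and obtains $1/a$, in agreement with the convention in \eqref{eq:L-def}.

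Finally, we check that the weights are normalized. Parseval in the orthonormal bases gives $\sum_{\alpha,\beta}|\langle e_\alpha,Bf_\beta\rangle|^2=\|B\|_F^2$, since $\{e_\alpha\}$ and $\{f_\beta\}$ are orthonormal bases of $P\cH$ and $Q\cH$ and $B$ maps $Q\cH$ into $P\cH$. Because $B\neq0$, we can divide by $\|B\|_F^2$. This yields the stated convex-combination formula with $w_{\alpha\beta}\ge0$ and $\sum_{\alpha,\beta} w_{\alpha\beta}=1$.

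No step is a genuine obstacle. The only point needing care is justifying the interchange of the trace and the improper integral, and the $r^{-2}$ domination bound above handles it.
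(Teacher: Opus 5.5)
Your proposal is correct and follows the paper's proof: substitute the spectral decompositions of $A$ and $C$ into the resolvent integral, evaluate $\int_0^\infty[(a+r)(c+r)]^{-1}\,dr=L(a,c)$, and normalize by $\|B\|_F^2$ using Parseval. Your explicit trace expansion, the partial-fraction computation, and the justification for exchanging the trace (a finite sum) with the integral are details the paper leaves implicit.
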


\begin{proof}
Direct substitution of the eigendecompositions of $A$ and $C$ into
$\Omega_{A,C}^{-1}(B)=\int_0^\infty(A+rI)^{-1}B(C+rI)^{-1}\,dr$, followed by
$\int_0^\infty(a_\alpha+r)^{-1}(c_\beta+r)^{-1}\,dr=L(a_\alpha,c_\beta)$.
\end{proof}

Under the hypotheses of Corollary~\ref{cor:log-bound},
$\log(a_0/\varepsilon_Q)$ is a universal lower bound on each active channel BKM
kernel value: $a_\alpha\ge a_0$, $0<c_\beta\le\varepsilon_Q\le a_0/2$, and
$0<a_\alpha-c_\beta\le1$, so
$L(a_\alpha,c_\beta)\ge\log(a_\alpha/c_\beta)\ge\log(a_0/\varepsilon_Q)$.
This scalar lower bound may nevertheless be far from the actual channel-weighted average.

\begin{remark}[Scalar law versus operator bound]
\label{rem:coarse}
In the strict regime $A,C>0$, under the hypotheses of Corollary~\ref{cor:log-bound},
the scalar logarithmic estimate follows from the operator bound by combining the
channel-weighted representation with the pointwise estimate $L(a,c)\ge\log(a/c)$
on the active channels. The operator bound detects the active channels; the coarse
parameters $(a_0,\varepsilon_Q,c)$ do not.

The scalar logarithmic law provides a universal estimate when the floor $a_0$,
leakage $\varepsilon_Q$, and $\|B\|_F$ are known and the hypotheses of
Corollary~\ref{cor:log-bound} hold. However, it can be
arbitrarily pessimistic when the coherence avoids the spectral channel associated
with the smallest eigenvalue of $A$. The BKM operator bound detects the active
channels and therefore contains strictly finer information.
\end{remark}

\begin{proposition}[Operator--scalar separation]
\label{prop:separation}
For every $K>1$ there exists a finite-dimensional density matrix
$\rho=\bigl(\begin{smallmatrix}A&B\\B^*&C\end{smallmatrix}\bigr)$ such that
\[
\frac{\operatorname{Tr}[B^*\Omega_{A,C}^{-1}(B)]}{\|B\|_F^2\log(\lambda_{\min}(A)/\operatorname{Tr}C)}
\;\ge\;K.
\]
\end{proposition}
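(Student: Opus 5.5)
We construct an explicit family in which the coherence avoids the small eigenvalue of $A$. By Proposition~\ref{prop:channel-weight}, the numerator equals $\|B\|_F^2\sum w_{\alpha\beta}L(a_\alpha,c_\beta)$. The denominator uses the worst-case scalar $\log(\lambda_{\min}(A)/\operatorname{Tr}C)$. So we want all weight on a channel whose kernel value $L(a_\alpha,c_\beta)$ is much larger than that scalar logarithm.

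Two degrees of freedom are available. The active $A$-eigenvalue can be made large while $\lambda_{\min}(A)$ is small. The active $C$-eigenvalue can be made small while $\operatorname{Tr} C$ is comparatively large. The second mechanism alone suffices, so we use it. Take $d_P=1$, $d_Q=2$, and
\[
A=a=\tfrac12,\qquad C=\diag(\eta,\delta),\qquad B=(s,0),
\]
with $0<\delta\ll\eta$ and $\tfrac12+\eta+\delta=1$ after normalization. Let $B$ couple only to the $\delta$-eigenvector, so we swap: $B=(0,s)$ with $s^2\le a\delta$, say $s^2=a\delta/2$.

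Positivity follows from Schur complements. $\rho\ge0$ holds iff $C-B^*A^{-1}B\ge0$, i.e.\ $\delta-s^2/a\ge0$, which our choice of $s$ satisfies. $A,C>0$ holds provided $\delta>0$. Then
\[
\text{ratio}=\frac{L(a,\delta)}{\log\bigl(a/(\eta+\delta)\bigr)} .
\]

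Next choose parameters so the denominator stays positive but bounded. For instance, fix $\eta=1/4$. Then $a=\tfrac12$ and $\delta=\tfrac14-\text{(adjustment)}$; instead, keep the sum normalized by letting $a=1-\eta-\delta$. With $\eta$ fixed near $1/4$ and $\delta\to0$, the denominator tends to the fixed positive value $\log\bigl((3/4)/(1/4)\bigr)=\log 3$. The numerator behaves as $L(a,\delta)\approx\log(a/\delta)/a\to\infty$. Hence the ratio exceeds any $K$ for sufficiently small $\delta$.

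The only delicate step is keeping the denominator strictly positive, which requires $\lambda_{\min}(A)>\operatorname{Tr}C$. This is guaranteed by $a\approx 3/4>\eta+\delta\approx 1/4$. We also need the trace normalization, which is handled by defining $a$ as the remainder. If one instead insists on the hypotheses of Corollary~\ref{cor:log-bound} ($\operatorname{Tr}C\le a_0/2$), the same example works, since $1/4\le 3/8$. Optionally, one could add a $P$-direction with small eigenvalue to illustrate the first mechanism, but it is not needed.
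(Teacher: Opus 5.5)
Your construction is correct, but it uses a different mechanism from the paper's. With $A=a=\tfrac34-\delta$, $C=\diag(\tfrac14,\delta)$ and $B=(0,\sqrt{a\delta/2})$:
\begin{itemize}
\item The Schur complement is $C-B^*A^{-1}B=\diag(\tfrac14,\delta/2)\ge0$, and the trace is $1$.
\item Only the channel $(a,\delta)$ is active, so the ratio equals $L(a,\delta)/\log\bigl(a/(\tfrac14+\delta)\bigr)$.
\item This ratio diverges as $\delta\to0$, because the denominator stays near $\log 3$.
\end{itemize}

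You exploit the leakage side. The scalar law uses $\operatorname{Tr}C$ where the BKM kernel sees the active eigenvalue $c_\beta=\delta$. With $d_P=1$, $\lambda_{\min}(A)$ is exactly the active protected eigenvalue.

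The paper exploits the protected side instead. It takes $d_Q=1$, so $\operatorname{Tr}C=\varepsilon$ is the active leakage eigenvalue, and $A=\diag(a_1,m)$. The coherence sits on $a_1\approx1$, while the scalar law is anchored to the inactive eigenvalue $m=\varepsilon^{1-\eta}$. This gives a ratio tending to $1/\eta=K+1$.

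What each route buys:
\begin{itemize}
\item Your route is simpler: a single family, no $K$-dependent tuning, and an unbounded ratio.
\item The paper's example has $\operatorname{Tr}C\to0$ and $\log(a_0/\varepsilon_Q)=\eta\log(1/\varepsilon)\to\infty$. So the gap appears in the vanishing-leakage regime where the logarithmic law is actually meant to be used.
\item It also illustrates the specific failure mode described in Remark~\ref{rem:coarse}: coherence avoiding the smallest eigenvalue of $A$.
\end{itemize}

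Please clean up the write-up. The provisional choices $a=\tfrac12$ and $B=(s,0)$ are abandoned mid-argument, so state the final parameters directly. Also, your side remark that the hypotheses of Corollary~\ref{cor:log-bound} hold needs $\delta\le\tfrac1{12}$, not just $\tfrac14\le\tfrac38$.
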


\begin{proof}
Set $\eta=1/(K+1)\in(0,\tfrac{1}{2})$ and choose $\varepsilon\in(0,\tfrac{1}{4})$ small.
Define $m=\varepsilon^{1-\eta}$ (so $m>\varepsilon$ since $1-\eta<1$),
$a_1=1-m-\varepsilon$ (near~$1$ for small $\varepsilon$), and the density matrix with
$A=\diag(a_1,m)$, $C=(\varepsilon)$, $B=(\sqrt{a_1\varepsilon/2},0)^T$.
The Schur complement of $A$ in the block matrix is
$\varepsilon - B^*A^{-1}B = \varepsilon - (a_1\varepsilon/2)/a_1 = \varepsilon/2 > 0$,
so the block matrix is positive semidefinite. Hence $\rho$ is a valid density matrix.
$\lambda_{\min}(A)=m$,
$\operatorname{Tr}C=\varepsilon$. Since $m=\varepsilon^{1-\eta}$: $\log(m/\varepsilon)=\eta\log(1/\varepsilon)$.
The BKM form concentrates on channel $(a_1,\varepsilon)$ with $a_1\to1$, giving
\[
\frac{\operatorname{Tr}[B^*\Omega_{A,C}^{-1}(B)]}{\|B\|_F^2\log(m/\varepsilon)}
=\frac{L(a_1,\varepsilon)}{\log(m/\varepsilon)}\to\frac{1}{\eta}=K+1>K
\quad\text{as }\varepsilon\to0.
\]
For all sufficiently small $\varepsilon>0$, the ratio exceeds $K$. The separation
arises because the scalar law is anchored to the inactive eigenvalue $m\approx\varepsilon^{1-\eta}$,
while the BKM form depends on the active channel $a_1\approx1$.
\end{proof}

\subsection{Comparison with Pinsker's inequality}

Pinsker applied to $\sigma=\Pi\rho$ gives
$D(\rho\|\Pi\rho)\ge\tfrac{1}{2}\|\rho-\Pi\rho\|_1^2=2\|B\|_1^2$.
For $B\ne0$, the logarithmic bound dominates Pinsker when
$\log(a_0/\varepsilon_Q)\ge2\|B\|_1^2/\|B\|_F^2$.
Since $\|B\|_1^2/\|B\|_F^2\le\rank(B)$, this holds whenever
\[
\varepsilon_Q\le a_0\exp(-2\rank(B)),
\]
i.e., in the exponentially small leakage regime.

\section{One-Channel Extremizers and Variational Problem}
\label{sec:variational}

The previous sections give lower bounds on $D(\rho\|\Pi\rho)$. We now solve
the associated extremal problem: fixing the protected-sector floor, the leakage
mass, and the Hilbert--Schmidt size of the off-block coherence, what is the
least possible block-coherence entropy?

The answer is given by the \emph{one-channel principle}: the infimum can be
attained by concentrating the prescribed coherence into one effective two-level
channel. Thus the high-dimensional minimization problem collapses to a scalar
$2\times2$ problem.
The technical mechanism is the Coherence Compression Principle (Theorem~\ref{thm:compression}):
any state with multiple coherent blocks can be mapped by a CPTP compression to one
with a single active block, without increasing entropy.

\paragraph{Why this class?}
The admissibility conditions are not an auxiliary technical restriction. They encode
the minimal data needed to formulate the entropy-cost problem: the class constrains
the protected-sector block through the uniform floor $P\rho P\ge a_0P$, fixes the total
leaked population $\varepsilon_Q$, and fixes the Hilbert--Schmidt size of the coherent
leakage $c$, while allowing all other degrees of freedom (coherence directions, phases,
spectator weights) to vary. The Hilbert--Schmidt constraint is the natural quadratic datum
compatible with the BKM Hessian at the block-diagonal point; it is also the
quantity directly decomposed by the singular-value pinching used in the proof.
The conditions $1-\varepsilon_Q\ge d_Pa_0$ and $c\le a_*\varepsilon_Q$
are precisely the Schur-complement positivity conditions for the active $2\times2$
block of the displayed optimizer; they are the hypotheses under which the theorem
constructs an attaining state.

\textit{Example.} A qubit on $\mathcal{H}=\mathbb{C}^2$ with $P=|0\rangle\langle0|$,
$Q=|1\rangle\langle1|$, $d_P=d_Q=1$: the admissible class is all $\rho$ with
$\langle0|\rho|0\rangle=a_*$, $\langle1|\rho|1\rangle=\varepsilon_Q$, and
$|\langle0|\rho|1\rangle|^2=c$, i.e.\ all qubit states with fixed populations and
fixed off-diagonal modulus. The optimizer is the unique such state (up to phase).

\textit{Non-example.} Without the coherence constraint $c$ fixed, the infimum is
$D=0$ (achieved at $B=0$), which is trivially minimized by the block-diagonal state;
no interesting entropy cost is captured. Fixing $c>0$ forces the state to carry
genuine coherent leakage, and the theorem quantifies exactly what this costs.

The proof pipeline has three steps:
\begin{enumerate}[(1)]
\item \emph{SVD pinching}: apply the CPTP channel $\mathcal{E}$ of
Appendix~\ref{app:svd} to decompose the state into orthogonal singular-vector
blocks; by the exact decomposition proved there,
$D(\rho\|\Pi\rho)\ge\sum_j\Phi(a_j^{\mathrm{pin}},c_j^{\mathrm{pin}},s_j^2)$.
\item \emph{Merging}: apply Theorem~\ref{thm:compression} to combine these scalar
problems monotonically into a single two-level problem $\Phi(A,E,X)$.
\item \emph{Two-level minimization}: solve the resulting scalar optimization
explicitly.
\end{enumerate}

\begin{remark}[SVD pinching versus exact BKM channel formula]
The exact BKM channels are spectral channels associated with the eigenbases of $(A,C)$:
\[
\operatorname{Tr}[B^*\Omega_{A,C}^{-1}(B)]
=\sum_{\alpha,\beta}|\langle e_\alpha,Bf_\beta\rangle|^2L(a_\alpha,c_\beta).
\]
The SVD-based formulas used in the variational reduction are lower-bound reductions
obtained through data processing, not the canonical spectral decomposition.
\end{remark}

\subsection{Properties of the two-level functional}

\begin{lemma}[Two-level BKM lower bound]
\label{lem:phi-bkm-lower}
Let $a,\varepsilon>0$ and $0\le x\le a\varepsilon$. Then
\[
\Phi(a,\varepsilon,x)\;\ge\;x\,L(a,\varepsilon).
\]
\end{lemma}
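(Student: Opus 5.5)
This is the $2\times2$ specialization of Theorem~\ref{thm:BKM-lower}, plus a limiting argument for the boundary of the admissible set.

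\emph{Interior case $0\le x<a\varepsilon$.} Set $\rho=\bigl(\begin{smallmatrix}a&\sqrt{x}\\ \sqrt{x}&\varepsilon\end{smallmatrix}\bigr)$ on $\mathbb{C}^2$ with $P=|0\rangle\langle0|$ and $Q=|1\rangle\langle1|$. Then $A=(a)$, $B=(\sqrt{x})$ and $C=(\varepsilon)$ are scalars.

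Theorem~\ref{thm:BKM-lower} is stated for density matrices, so normalize first. Put $\tau=a+\varepsilon$ and $\hat\rho=\rho/\tau$. Relative entropy is homogeneous of degree one under joint scaling:
\[
D(\lambda\rho\|\lambda\sigma)=\lambda D(\rho\|\sigma)\quad\text{when }\operatorname{Tr}\rho=\operatorname{Tr}\sigma .
\]
The kernel $xL(a,\varepsilon)$ has the same homogeneity, because $L$ is homogeneous of degree $-1$ and $x$ of degree $2$. So it suffices to prove the inequality for $\hat\rho$.

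Apply Theorem~\ref{thm:BKM-lower}, using $A,C>0$. Proposition~\ref{prop:channel-weight}, or directly the integral $\int_0^\infty[(a+r)(\varepsilon+r)]^{-1}\,dr=L(a,\varepsilon)$, evaluates the kernel as $\operatorname{Tr}[B^*\Omega_{A,C}^{-1}(B)]=x\,L(a,\varepsilon)$. Alternatively, the proof of Theorem~\ref{thm:BKM-lower} never uses $\operatorname{Tr}\rho=1$: it uses only $M>0$, $M\pm Y\ge0$, $\operatorname{Tr}Y=0$, and Lemma~\ref{lem:midpoint-mono}. So it applies verbatim to unnormalized $\rho$, and no rescaling is needed.

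The hypothesis $M\pm Y\ge0$ needs a check. It holds because $\bigl(\begin{smallmatrix}a&\pm\sqrt{x}\\ \pm\sqrt{x}&\varepsilon\end{smallmatrix}\bigr)\ge0$ if and only if $x\le a\varepsilon$.

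\emph{Boundary case $x=a\varepsilon$.} Here $\rho$ is rank one, but $M$ is still strictly positive. The proof of Theorem~\ref{thm:BKM-lower} only needs $M+tY>0$ for $t<1$ together with continuity of $F$ at $t=1$, and both still hold. If one prefers to avoid this, take $x_n\uparrow a\varepsilon$. Then $\Phi(a,\varepsilon,x_n)\to\Phi(a,\varepsilon,a\varepsilon)$ by Lemma~\ref{lem:Phi-continuity}, and the right side $x_nL(a,\varepsilon)$ is linear in $x_n$, so the inequality passes to the limit.

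\emph{Main obstacle.} The only delicate point is confirming that the Taylor argument in Theorem~\ref{thm:BKM-lower} survives at $t=1$ when $\rho$ is singular. This rests on continuity of $t\mapsto D(M+tY\|M)$ on $[0,1]$, which follows from continuity of $N\mapsto\operatorname{Tr}N\log N$ and of $N\mapsto\operatorname{Tr}N\log M$ with $M>0$ fixed. Everything else is bookkeeping of the scalar kernel.
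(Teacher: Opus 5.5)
Your proposal is correct and follows the paper's proof. The paper also normalizes by $a+\varepsilon$, applies Theorem~\ref{thm:BKM-lower} (which needs only $A,C>0$), and undoes the scaling using the degree-one homogeneity of $\Phi$ and of $x\,L(a,\varepsilon)$. Your separate treatment of the case $x=a\varepsilon$ is harmless but unnecessary: the proof of Theorem~\ref{thm:BKM-lower} already allows a singular $\rho$, through the continuity of $F$ at $t=1$.
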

\begin{proof}
Apply Theorem~\ref{thm:BKM-lower} to the normalized $2\times2$ state
\[
\sigma:=\frac{1}{a+\varepsilon}\begin{pmatrix}a&\sqrt{x}\\\sqrt{x}&\varepsilon\end{pmatrix},
\]
with block entries $A_\sigma=a/(a+\varepsilon)$, $C_\sigma=\varepsilon/(a+\varepsilon)$,
$B_\sigma=\sqrt{x}/(a+\varepsilon)$, and diagonal pinching
$\Pi\sigma=\diag(A_\sigma,C_\sigma)$. Since $D(\sigma\|\Pi\sigma)=\Phi(a,\varepsilon,x)/(a+\varepsilon)$
(relative entropy scales as a density), Theorem~\ref{thm:BKM-lower} gives
\[
\frac{\Phi(a,\varepsilon,x)}{a+\varepsilon}
=D(\sigma\|\Pi\sigma)
\ge\operatorname{Tr}\!\bigl[B_\sigma^*\,\Omega_{A_\sigma,C_\sigma}^{-1}(B_\sigma)\bigr]
=\frac{x}{(a+\varepsilon)^2}\,L\!\left(\frac{a}{a+\varepsilon},\frac{\varepsilon}{a+\varepsilon}\right).
\]
The logarithmic mean satisfies $L(\lambda a,\lambda\varepsilon)=\lambda^{-1}L(a,\varepsilon)$
for $\lambda>0$, so $L(a/(a+\varepsilon),\varepsilon/(a+\varepsilon))=(a+\varepsilon)L(a,\varepsilon)$.
Substituting and multiplying both sides by $(a+\varepsilon)$ gives $\Phi(a,\varepsilon,x)\ge xL(a,\varepsilon)$.
\end{proof}

\begin{proposition}[Properties of $\Phi$]
\label{prop:Phi-properties}
Let $a,\varepsilon>0$.
\begin{enumerate}[(1)]
\item $x\mapsto\Phi(a,\varepsilon,x)$ is strictly increasing and strictly convex
  on $[0,a\varepsilon)$, with $\Phi(a,\varepsilon,0)=0$.
\item $\Phi(a,\varepsilon,x)=x\cdot L(a,\varepsilon)(1+O(x))$ as $x\to0$.
\item If additionally $0<\varepsilon<a\le1$, then for all $x\in[0,a\varepsilon]$:
\begin{equation}\label{eq:Phi-lower}
\Phi(a,\varepsilon,x)\;\ge\;x\log(a/\varepsilon).
\end{equation}
\end{enumerate}
\end{proposition}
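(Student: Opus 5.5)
The plan is to work directly from the eigenvalue formula. For $a,\varepsilon>0$ set $s=a+\varepsilon$, $\Delta=a-\varepsilon$ and $R(x)=\sqrt{\Delta^2+4x}$. Then $\lambda_\pm=(s\pm R)/2$ and
\[
\Phi(a,\varepsilon,x)=\lambda_+\log\lambda_++\lambda_-\log\lambda_--a\log a-\varepsilon\log\varepsilon .
\]
For $x\in[0,a\varepsilon)$ both eigenvalues are positive, and $\lambda_+\lambda_-=a\varepsilon-x$.

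\textbf{Part (1).} At $x=0$ we have $R=|\Delta|$, so the spectrum is $\{a,\varepsilon\}$ and $\Phi=0$. Since $dR/dx=2/R$ and $d\lambda_\pm/dx=\pm1/R$, differentiating gives
\[
\partial_x\Phi=\frac{1}{R}\log\frac{\lambda_+}{\lambda_-}=\frac{1}{R}\log\frac{s+R}{s-R}.
\]
Substitute $g(R)=R^{-1}\log\frac{s+R}{s-R}=\frac{2}{s}\,\mathrm{artanh}(R/s)/(R/s)$. The map $u\mapsto \mathrm{artanh}(u)/u=\sum_k u^{2k}/(2k+1)$ is positive and strictly increasing on $(0,1)$. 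It follows that $\partial_x\Phi>0$, so $\Phi$ is strictly increasing. Because $R$ is strictly increasing in $x$, $\partial_x\Phi$ is also strictly increasing, which gives strict convexity. The degenerate point $R=0$ (when $a=\varepsilon$ and $x=0$) is covered by the same series.

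\textbf{Part (2).} At $x=0$ we get $\partial_x\Phi(a,\varepsilon,0)=\frac{1}{|\Delta|}\log\frac{\max(a,\varepsilon)}{\min(a,\varepsilon)}=L(a,\varepsilon)$, with $L(a,a)=1/a$ by continuity of the series. Since $\partial_x\Phi$ is real-analytic in $x$ near $0$, Taylor expansion with $\Phi(a,\varepsilon,0)=0$ yields $\Phi=xL(a,\varepsilon)+O(x^2)$. This equals $xL(a,\varepsilon)(1+O(x))$ because $L>0$.

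\textbf{Part (3).} There are two routes. The first is Lemma~\ref{lem:phi-bkm-lower}, which gives $\Phi\ge xL(a,\varepsilon)$. Since $0<a-\varepsilon<a\le1$, we have $L(a,\varepsilon)=\log(a/\varepsilon)/(a-\varepsilon)\ge\log(a/\varepsilon)$, and the bound follows on $[0,a\varepsilon]$. Alternatively, convexity from part (1) gives $\Phi(x)\ge \Phi(0)+x\,\partial_x\Phi(0)=xL$ directly. The endpoint $x=a\varepsilon$ follows by the continuity in Lemma~\ref{lem:Phi-continuity}.

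\textbf{Main obstacle.} The main obstacle is the convexity claim in part (1), because the derivative must be expressed through $R$ cleanly. The $\mathrm{artanh}$ power-series monotonicity is the step I would verify carefully, including the degenerate case $\Delta=0$ and the limit $x\uparrow a\varepsilon$, where $\lambda_-\to0$ and $\partial_x\Phi\to\infty$. That blow-up is harmless on the half-open interval.
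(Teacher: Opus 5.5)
Your proof is correct, and its skeleton matches the paper's. Both arguments compute $\partial_x\Phi=R^{-1}\log(\lambda_+/\lambda_-)$ (the paper's $\Delta$ is your $R$), read off positivity, and evaluate at $x=0$ to get $L(a,\varepsilon)$. Your first route for part (3), Lemma~\ref{lem:phi-bkm-lower} together with $a-\varepsilon\le 1$, is exactly the paper's argument.

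The differences are in two places.

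\emph{Convexity in part (1).} The paper differentiates once more and obtains $\partial_x^2\Phi=4(\tfrac12\sinh(2u)-u)/\Delta^3$ with $u=\tfrac12\log(\lambda_+/\lambda_-)$. Since $\tanh u=R/s$, this is the same quantity you parametrize. Your rewriting $\partial_x\Phi=\tfrac{2}{s}\,\operatorname{artanh}(R/s)/(R/s)$ as a positive-coefficient power series in $R^2/s^2$ buys three things:
\begin{itemize}
\item it avoids the second-derivative computation;
\item it handles the degenerate point $a=\varepsilon$, $x=0$ uniformly, whereas the paper needs a separate limit there;
\item it directly gives the real-analyticity in $x$ that you use in part (2).
\end{itemize}

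\emph{Part (3), second route.} The tangent-line inequality $\Phi(x)\ge\Phi(0)+x\,\partial_x\Phi(0)=xL(a,\varepsilon)$, drawn from your part (1), is genuinely different from the paper's argument. The paper obtains $\Phi\ge xL$ by applying the operator bound of Theorem~\ref{thm:BKM-lower} to a normalized $2\times2$ state. Your argument shows this scalar inequality follows from convexity of $\Phi$ in $x$ alone, with continuity handling the endpoint $x=a\varepsilon$. That makes part (3), and Lemma~\ref{lem:phi-bkm-lower} itself, independent of the midpoint coercivity machinery.
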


\begin{proof}
\emph{(1)} Set $\Delta:=\sqrt{(a-\varepsilon)^2+4x}>0$ for $x>0$.
Differentiating $\lambda_\pm=(a+\varepsilon\pm\Delta)/2$ gives
\[
\partial_x\Phi=\frac{\log(\lambda_+/\lambda_-)}{\Delta}>0
\]
(strict monotonicity; the numerator is positive since $\lambda_+>\lambda_-$).
At $a=\varepsilon$ and $x=s^2$, the eigenvalues are exactly
$\lambda_\pm=a\pm s$. Hence
\[
\partial_x\Phi(a,a,0)
=\lim_{s\to0}\frac{\log((a+s)/(a-s))}{2s}
=\frac{1}{a}
=L(a,a).
\]
Setting $u:=\tfrac{1}{2}\log(\lambda_+/\lambda_-)>0$ for $x>0$, strict convexity follows from
$\partial_x^2\Phi=4(\tfrac{1}{2}\sinh(2u)-u)/\Delta^3>0$
since $\sinh(v)>v$ for $v=2u>0$.
\emph{(2)} At $x=0$ (and $a\ne\varepsilon$), $\lambda_\pm=a,\varepsilon$ and
$\partial_x\Phi|_{x=0}=\log(a/\varepsilon)/(a-\varepsilon)=L(a,\varepsilon)$.
The case $a=\varepsilon$ is covered by the limit above.
\emph{(3)} By Lemma~\ref{lem:phi-bkm-lower},
$\Phi(a,\varepsilon,x)\ge xL(a,\varepsilon)\ge x\log(a/\varepsilon)$
since $0<a-\varepsilon\le1$.
\end{proof}

\begin{remark}[Domain hypothesis for part (3)]
Part (3) requires $0<\varepsilon<a\le1$, which ensures $L(a,\varepsilon)=\log(a/\varepsilon)/(a-\varepsilon)\ge\log(a/\varepsilon)$
(since $a-\varepsilon\le1$). For diagonal entries of a density matrix the upper bound
$a\le1$ is automatic, whereas the ordering $\varepsilon<a$ must be verified separately.
Parts (1) and (2) hold for all $a,\varepsilon>0$ without the upper bound on $a$.
\end{remark}

\subsection{The Coherence Compression Principle}

We isolate the core structural result as a theorem. Here $\delta_j\ge0$ denotes the excess of the
$j$-th protected-sector diagonal entry above the floor $a_0$.

\paragraph{Boundary conventions.}
The proof below is written for the generic case $X>0$ and for channels with nonzero coherent
weight. Terms with $x_j=0$ contribute no off-diagonal coherence and may be omitted from the
active phase-alignment step without affecting the inequality. If $\varepsilon_j=0$, positivity of
the corresponding $2\times2$ block forces $x_j=0$. If $E=0$, then all $\varepsilon_j=0$ and
$\varepsilon_{\mathrm{rem}}=0$, which by positivity forces $X=0$; this case is therefore
subsumed under $X=0$. Boundary cases ($\varepsilon_j\to0$ or $x_j\to0$) are obtained by
continuity of $\Phi$ on the closed positivity domain (Lemma~\ref{lem:Phi-continuity}).
Appendix~\ref{app:edge} records the same cases explicitly.

\begin{theorem}[Coherence Compression Principle]
\label{thm:compression}
Let $a_0>0$, $\delta_j\ge0$, $\varepsilon_j\ge0$, $\varepsilon_{\mathrm{rem}}\ge0$,
and $0\le x_j\le(a_0+\delta_j)\varepsilon_j$ for each $j$. Set $a_j:=a_0+\delta_j$ and
\[
A:=a_0+\textstyle\sum_j\delta_j,\quad
E:=\varepsilon_{\mathrm{rem}}+\textstyle\sum_j\varepsilon_j,\quad
X:=\textstyle\sum_j x_j.
\]
Assume $X\le AE$. Then
\begin{equation}\label{eq:merging}
\sum_j\Phi(a_j,\varepsilon_j,x_j)\;\ge\;\Phi(A,E,X).
\end{equation}
\end{theorem}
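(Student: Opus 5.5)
The plan is to realize the left-hand side of \eqref{eq:merging} as one relative entropy and to compress it with a single CPTP map, using data processing. I would work on $\mathcal{K}=\bigoplus_j\operatorname{span}\{p_j,q_j\}\oplus\mathbb{C}r$. On it put the unnormalized positive operator
\[
\tau=\bigoplus_j\begin{pmatrix}a_j&\sqrt{x_j}\\\sqrt{x_j}&\varepsilon_j\end{pmatrix}\oplus\varepsilon_{\mathrm{rem}}
\]
and its pinching $\Pi\tau$. Since relative entropy is additive over orthogonal direct sums and the spectator term contributes nothing, $D(\tau\|\Pi\tau)=\sum_j\Phi(a_j,\varepsilon_j,x_j)$. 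Relative entropy is $1$-homogeneous in the pair, so normalization is harmless. Positivity of each block is exactly $x_j\le a_j\varepsilon_j$. As the boundary conventions allow, I treat only $X>0$ and blocks with $x_j>0$; the cases $\varepsilon_j\to0$ or $x_j\to0$ follow from Lemma~\ref{lem:Phi-continuity}.

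Step~1 (merging channel). I would take the Kraus family $K_j=|p\rangle\langle p_j|+|q\rangle\langle q_j|$ and $K_r=|q\rangle\langle r|$ into $\mathbb{C}^2=\operatorname{span}\{p,q\}$, with all coherences phase-aligned to be positive. Then $\sum K^*K=I_{\mathcal{K}}$, and the map sends block-diagonal operators to diagonal ones. The output pair is
\[
\begin{pmatrix}\sum_ja_j&\sum_j\sqrt{x_j}\\\sum_j\sqrt{x_j}&E\end{pmatrix}
\quad\text{and}\quad
\diag\Bigl(\sum_ja_j,\,E\Bigr).
\]
Data processing then gives $\sum_j\Phi(a_j,\varepsilon_j,x_j)\ge\Phi\bigl(\sum_ja_j,E,(\sum_j\sqrt{x_j})^2\bigr)$. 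Here $(\sum_j\sqrt{x_j})^2\ge X$, and this gain in coherence is what must pay for the mismatch in the protected entry.

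Step~2 (monotonicity of $\Phi$). Proposition~\ref{prop:Phi-properties}(1) already gives strict increase in $x$. I would add the analogous statement that $\Phi$ is nonincreasing in $a$ and in $\varepsilon$ at fixed $x$. This should follow by differentiating the explicit eigenvalue formula, or from data processing under a channel that adds diagonal mass.

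Step~3 (the main obstacle). The merged protected entry is $\sum_ja_j=n a_0+\sum_j\delta_j$, where $n$ is the number of active blocks. This exceeds $A=a_0+\sum_j\delta_j$ whenever $n\ge2$. Monotonicity in $a$ points the wrong way for this term, while the coherence surplus points the right way. I would first try to prove the scalar inequality
\[
\Phi\Bigl(A+(n-1)a_0,\;E,\;\bigl(\textstyle\sum_j\sqrt{x_j}\bigr)^2\Bigr)\ \ge\ \Phi(A,E,X).
\]
The approach would be by induction on $n$ (merging two blocks at a time), using the constraints $x_j\le a_j\varepsilon_j$. If that fails, I would refine the channel so that only one block's $p_j$ feeds $p$ coherently. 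The other protected directions would then have weights $\delta_j/a_j$ routed to $p$ and the remaining floor mass $a_0$ discarded. The price is a weaker coherence amplitude $\sum_j\sqrt{x_j}\,\theta_j$ with $\sum_j\theta_j^2\le1$, and I would choose $\theta_j\propto\sqrt{x_j}$ to make this amplitude at least $\sqrt{X}$. Checking that such weights can be realized by a trace-preserving map which still sends the reference pinching to $\diag(A,E)$ is the step I expect to be hardest. The delicate part is the Cauchy--Schwarz balance between the $\theta_j$ and the available mass. After that, Step~2 removes any leftover slack and gives \eqref{eq:merging}.
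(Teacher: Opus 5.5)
\textbf{There is a genuine gap.} Your Step~3 inequality is false, so neither it nor an induction merging two blocks at a time can close the argument. Take $a_0=0.3$, $\delta_1=\delta_2=0$, $\varepsilon_1=\varepsilon_2=0.1$, $\varepsilon_{\mathrm{rem}}=0$, $x_1=0.03$, and $x_2=\eta\downarrow0$. All hypotheses hold, since $X\le AE=0.06$. The coherence surplus $(\sqrt{x_1}+\sqrt{\eta})^2-X=2\sqrt{x_1\eta}$ vanishes, while the excess in the protected entry stays at $a_0$. The two sides of your inequality tend to $\Phi(0.6,0.2,0.03)\approx0.086$ and $\Phi(0.3,0.2,0.03)\approx0.135$. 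So the full-weight channel of Step~1 is the wrong certificate: routing all of each $p_j$ into $p$ overshoots the protected entry by $(n-1)a_0$, and no coherence gain pays for that in general. Your first refinement also fails (one block at full weight, the others at weight $\delta_j/a_j$). If all $\delta_j=0$, the surviving amplitude is only $\sqrt{x_{j_0}}<\sqrt{X}$.

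Your last idea, $\theta_j\propto\sqrt{x_j}$ with $\sum_j\theta_j^2\le1$, is the right one; it is exactly the paper's starting point $r_j^{(0)}=\sqrt{x_j/X}$. But you leave open precisely the step that makes it a proof, and the target you set for it is unattainable. No trace-preserving map into $\operatorname{span}\{p,q\}$ can send $\Pi\tau$ to $\diag(A,E)$, because $\operatorname{Tr}\Pi\tau=\sum_ja_j+E>A+E$ as soon as there are two blocks. A CPTP map cannot ``discard'' mass. The missing idea is to send the unrouted part of each $p_j$ to orthogonal spectator vectors $s_j$. Use the Kraus operators $\theta_j|p\rangle\langle p_j|+|q\rangle\langle q_j|$, $\sqrt{1-\theta_j^2}\,|s_j\rangle\langle p_j|$ and $|q\rangle\langle r|$; completeness is immediate. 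With $\theta_j=\sqrt{x_j/X}$ one gets
\[
\tau\mapsto\begin{pmatrix}a'&\sqrt{X}\\\sqrt{X}&E\end{pmatrix}\oplus S,\qquad
\Pi\tau\mapsto\begin{pmatrix}a'&0\\0&E\end{pmatrix}\oplus S,\qquad
S=\bigoplus_j(1-\theta_j^2)a_j,
\]
where $a'=\sum_ja_jx_j/X\le\max_ja_j\le A$. The common spectator block $S$ contributes nothing, and positivity of the output gives $X\le a'E$. Your Step~2 is fine: append an identical summand $A-a'$ to both arguments and route it into $p$. It then yields $\Phi(a',E,X)\ge\Phi(A,E,X)$, which closes the proof.

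The paper uses the same spectator Kraus operators but does without monotonicity in $a$. It raises the weights along $r_j(t)=(1-t)r_j^{(0)}+t$ until the $p$-entry equals $A$ exactly, which is possible because $\sum_ja_j\ge A$. This pushes the amplitude above $\sqrt{X}$. It then uses the polygon modulus lemma to choose phases that bring the amplitude back to exactly $\sqrt{X}$.
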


The theorem says: given finitely many two-level coherent blocks with individual
protected-mass excesses $\{\delta_j\}$ above floor $a_0$, leakages $\{\varepsilon_j\}$, and
coherences $\{x_j\}$, there exists a CPTP compression to a single active two-level
block with aggregated parameters $(A,E,X)$, and relative entropy to the block-diagonal
algebra cannot increase. Boundary cases are recorded in Appendix~\ref{app:edge}.

\begin{proof}
Write
\[
M=\bigoplus_j\begin{pmatrix}a_j&\sqrt{x_j}\\\sqrt{x_j}&\varepsilon_j\end{pmatrix}
\oplus[\varepsilon_{\mathrm{rem}}],\qquad
D=\bigoplus_j\begin{pmatrix}a_j&0\\0&\varepsilon_j\end{pmatrix}\oplus[\varepsilon_{\mathrm{rem}}],
\]
so $D(M\|D)=\sum_j\Phi(a_j,\varepsilon_j,x_j)$.

The operators $M$ and $D$ have the same trace $T:=\operatorname{Tr}(M)=\operatorname{Tr}(D)$,
since they have identical diagonals. By homogeneity of Umegaki relative entropy~\cite{Umegaki1962,OhyaPetz1993},
$D(M\|D)=T\,D(M/T\|D/T)$. The usual data-processing inequality for density matrices~\cite{Lindblad1975,Ruskai2002}
therefore implies the same inequality for the unnormalized pair $(M,D)$.

We construct an explicit CPTP map $\mathcal{E}$ with
$D(\mathcal{E}(M)\|\mathcal{E}(D))=\Phi(A,E,X)$;
data-processing then gives the result.

Let $p_j,q_j$ be the input basis vectors of block $j$ and $p,q$ the output active vectors.

\emph{Step 1: Feasibility.} Assume $X>0$ (the case $X=0$ is trivial). Construct
$r_j\in[0,1]$ with $\sum_j a_j r_j^2=A$ and $\sum_j r_j\sqrt{x_j}\ge\sqrt{X}$.
Set $r_j^{(0)}=\sqrt{x_j/X}\in[0,1]$. Since $a_j=a_0+\delta_j$ and
$\delta_j\ge0$, we have $a_j\le a_0+\sum_i\delta_i=A$ for every $j$. Hence
\[
\sum_j a_j(r_j^{(0)})^2
=
\frac{\sum_j a_jx_j}{X}
\le \max_j a_j
\le A.
\]
For the upper endpoint $r_j=1$, we need $\sum_{j=1}^k a_j\ge A$.
Let $k$ be the number of scalar blocks in the family (the number of indices $j$).
Indeed,
\[
\sum_{j=1}^k a_j
=
ka_0+\sum_{j=1}^k\delta_j
=
A+(k-1)a_0
\ge A,
\]
since $a_0>0$ and $k\ge1$ whenever $X>0$.
By the intermediate value theorem along the continuous path $r_j(t)=(1-t)r_j^{(0)}+t$,
$t\in[0,1]$, there exists $t_*\in[0,1]$ with $\sum_ja_jr_j(t_*)^2=A$.
Set $r_j:=r_j(t_*)$; since $r_j\ge r_j^{(0)}$,
\[
\sum_jr_j\sqrt{x_j}\ge\sum_jr_j^{(0)}\sqrt{x_j}=\frac{\sum_j\sqrt{x_j}\,\sqrt{x_j}}{\sqrt{X}}
=\frac{X}{\sqrt{X}}=\sqrt{X}.
\]

\emph{Step 2: Phase alignment.} Set $\ell_j:=r_j\sqrt{x_j}$ for each $j$.
We verify the two hypotheses of the polygon modulus lemma (Appendix~\ref{app:polygon}):
\begin{itemize}
\item $\sum_j\ell_j\ge\sqrt{X}$: established in Step~1.
\item $\max_j\ell_j\le\sqrt{X}$: since $r_j\le1$ and $x_j\le X$ (as $X=\sum_ix_i$ and $x_i\ge0$),
  we have $\ell_j=r_j\sqrt{x_j}\le\sqrt{x_j}\le\sqrt{X}$.
\end{itemize}
The polygon modulus lemma therefore gives phases $\theta_j\in\mathbb{R}$
with $|\sum_je^{i\theta_j}\ell_j|=\sqrt{X}$. After a global phase rotation on $|q\rangle$,
assume $\sum_je^{i\theta_j}r_j\sqrt{x_j}=\sqrt{X}$.
Set $\alpha_j:=e^{i\theta_j}r_j$, so $|\alpha_j|=r_j$ and
\[
(\mathcal{E}(M))_{pq}=\sum_j\alpha_j\sqrt{x_j}=\sqrt{X}.
\]

\emph{Step 3: Kraus operators.}
The input Hilbert space for $\mathcal{E}$ has orthonormal basis
$\{|p_j\rangle\}_{j=1}^k\cup\{|q_j\rangle\}_{j=1}^k\cup\{|q_{\mathrm{rem}}\rangle\}$,
where $|p_j\rangle$ and $|q_j\rangle$ are the active protected and leakage vectors of block $j$,
and $|q_{\mathrm{rem}}\rangle$ carries the residual leakage $\varepsilon_{\mathrm{rem}}$.
The output Hilbert space has active vectors $|p\rangle$, $|q\rangle$,
and orthogonal spectator vectors $\{|s_j\rangle\}_{j=1}^k$.
Define the Kraus operators:
\[
K_j^{(a)}:=\alpha_j|p\rangle\langle p_j|+|q\rangle\langle q_j|,\quad
K_j^{(s)}:=\sqrt{1-|\alpha_j|^2}\,|s_j\rangle\langle p_j|,\quad
K_{\mathrm{rem}}:=|q\rangle\langle q_{\mathrm{rem}}|.
\]
The spectator vectors $s_j$ are mutually orthogonal and orthogonal to $p$, $q$.

\emph{Step 4: Completeness.}
We verify $\sum_j\bigl[(K_j^{(a)})^*K_j^{(a)}+(K_j^{(s)})^*K_j^{(s)}\bigr]+K_{\mathrm{rem}}^*K_{\mathrm{rem}}=I$
on the input space. Computing each Kraus term:
\begin{align*}
(K_j^{(a)})^*K_j^{(a)}&=|\alpha_j|^2|p_j\rangle\langle p_j|+|q_j\rangle\langle q_j|,\\
(K_j^{(s)})^*K_j^{(s)}&=(1-|\alpha_j|^2)|p_j\rangle\langle p_j|,\\
K_{\mathrm{rem}}^*K_{\mathrm{rem}}&=|q_{\mathrm{rem}}\rangle\langle q_{\mathrm{rem}}|.
\end{align*}
Summing over $j$ and collecting:
\[
\sum_j\bigl[(K_j^{(a)})^*K_j^{(a)}+(K_j^{(s)})^*K_j^{(s)}\bigr]
=\sum_j|p_j\rangle\langle p_j|+\sum_j|q_j\rangle\langle q_j|.
\]
Adding $K_{\mathrm{rem}}^*K_{\mathrm{rem}}$ and using that
$\{|p_j\rangle\}\cup\{|q_j\rangle\}\cup\{|q_{\mathrm{rem}}\rangle\}$ is an orthonormal basis of the input space
(by the orthonormal basis property of the input space stated in Step~3)
gives the total $=I$, so $\mathcal{E}$ is CPTP.

\emph{Step 5: Output.}
We compute $\mathcal{E}(M)$ and $\mathcal{E}(D)$ in the basis $\{|p\rangle,|q\rangle,|s_j\rangle\}$ of the
output space. The only Kraus operators contributing to the $(p,p)$ output element are $K_j^{(a)}$:
\[
\langle p|\mathcal{E}(M)|p\rangle
=\sum_j\langle p|K_j^{(a)}\bigl(a_j|p_j\rangle\langle p_j|+\cdots\bigr)(K_j^{(a)})^*|p\rangle
=\sum_j|\alpha_j|^2 a_j=\sum_jr_j^2 a_j=A,
\]
where the last equality uses the constraint from Step~1. Similarly:
\[
\langle q|\mathcal{E}(M)|q\rangle
=\sum_j\varepsilon_j+\varepsilon_{\mathrm{rem}}=E,
\quad
\langle p|\mathcal{E}(M)|q\rangle
=\sum_j\alpha_j\sqrt{x_j}=\sqrt{X}\quad\text{(from Step~2)}.
\]
The spectator contribution is
$\langle s_j|\mathcal{E}(M)|s_j\rangle=(1-|\alpha_j|^2)a_j=(1-r_j^2)a_j$,
so $S:=\bigoplus_j(1-r_j^2)a_j|s_j\rangle\langle s_j|$.
Since $|p\rangle$, $|q\rangle$, and $\{|s_j\rangle\}$ are mutually orthogonal output vectors
(by construction in Step~3), $\mathcal{E}(M)$ and $\mathcal{E}(D)$ block-diagonalize as:
\[
\mathcal{E}(M)=\begin{pmatrix}A&\sqrt{X}\\\sqrt{X}&E\end{pmatrix}\oplus S,
\qquad
\mathcal{E}(D)=\begin{pmatrix}A&0\\0&E\end{pmatrix}\oplus S.
\]
(The off-diagonal $\mathcal{E}(D)_{pq}=\sum_j\alpha_j\cdot0=0$ since $D$ has zero off-diagonal blocks.)

By additivity of relative entropy over orthogonal direct summands:
\[
D(\mathcal{E}(M)\|\mathcal{E}(D))
=D\!\left(\begin{pmatrix}A&\sqrt{X}\\\sqrt{X}&E\end{pmatrix}\bigg\|\begin{pmatrix}A&0\\0&E\end{pmatrix}\right)
+D(S\|S)
=\Phi(A,E,X)+0.
\]
Data-processing then gives $D(M\|D)\ge D(\mathcal{E}(M)\|\mathcal{E}(D))=\Phi(A,E,X)$, completing the proof of~\eqref{eq:merging}.
The CPTP map is used here only as a data-processing certificate for the scalar merging inequality;
its output space is an auxiliary finite-dimensional space.
\end{proof}

\begin{remark}[Interpretation of the Coherence Compression Principle]
The theorem establishes that the constrained minimum can be achieved by a two-level
configuration. No collection of multiple coherent channels can achieve a lower
block-coherence entropy than the single active two-level channel obtained by
compressing them. The floor-aware merging CPTP map is the explicit certificate of
this compression.
\end{remark}

\subsection{The one-channel variational theorem}

The main constrained extremal problem can be formulated as follows in entropy-geometric terms.
The block-diagonal fixed-point state space
$\mathcal{S}_\Pi:=\{\sigma\ge0:\operatorname{Tr}\sigma=1,\ \sigma=\Pi\sigma\}$
is the incoherent convex subset associated with the $P\oplus Q$ decomposition.
By the Pythagorean identity~\eqref{eq:pythagorean}, $\Pi\rho$ minimizes
$D(\rho\|\sigma)$ over $\sigma\in\mathcal{S}_\Pi$, and
$D(\rho\|\Pi\rho)$ is the corresponding minimum relative-entropy divergence.
The variational theorem answers the following natural entropy-geometric extremal problem:

\begin{quote}
\emph{Among all states with fixed leakage mass $\varepsilon_Q=\operatorname{Tr}(Q\rho Q)$,
fixed quadratic off-block coherence $c=\|P\rho Q\|_F^2$, and fixed support floor
$P\rho P\ge a_0P$, which state minimizes the relative-entropy divergence to
$\mathcal{S}_\Pi$?}
\end{quote}

One answer attaining the minimum is the one-channel optimizer $\rho_*$: a two-level active block plus
incoherent spectators. The high-dimensional constrained problem collapses to a scalar
$2\times2$ problem. This collapse is not a coincidence but a structural consequence of
the coherence compression principle (Theorem~\ref{thm:compression}): the CPTP merging map
witnesses that any multi-channel distribution of coherence can be compressed to a single
channel without lowering the optimal divergence below the one-channel value.

\begin{lemma}[Feasibility of the admissible class]
\label{lem:admissible-feasibility}
Let $a_*:=1-\varepsilon_Q-(d_P-1)a_0$. The admissible class
$\mathcal{A}(a_0,\varepsilon_Q,c)$ is nonempty if and only if
\[
1-\varepsilon_Q\ge d_Pa_0,\qquad 0\le c\le a_*\varepsilon_Q.
\]
\end{lemma}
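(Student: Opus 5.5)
The plan is to prove the two directions separately: necessity by bounding $\|B\|_F^2$ through the singular value decomposition and the $2\times2$ positivity constraints recorded in Section~\ref{sec:setup}, and sufficiency by writing down the one-channel state of the guiding statement. Here the admissible class $\mathcal{A}(a_0,\varepsilon_Q,c)$ is the set of density matrices $\rho$ with $P\rho P\ge a_0P$, $\operatorname{Tr}(Q\rho Q)=\varepsilon_Q$ and $\|P\rho Q\|_F^2=c$. We assume $d_P,d_Q\ge1$ and $a_0>0$, as elsewhere in the paper.

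\emph{Necessity.} Let $\rho\in\mathcal{A}(a_0,\varepsilon_Q,c)$, with blocks $A,B,C$.
\begin{enumerate}[(1)]
\item Taking the trace of $A\ge a_0P$ gives $1-\varepsilon_Q=\operatorname{Tr}A\ge d_Pa_0$. This is the first condition.
\item Every eigenvalue of $A$ is at least $a_0$, and the eigenvalues sum to $1-\varepsilon_Q$. Hence
\[
\lambda_{\max}(A)\le(1-\varepsilon_Q)-(d_P-1)a_0=a_*.
\]
\item Write the SVD $B=\sum_j s_j|u_j\rangle\langle v_j|$ and use the data $a_j,c_j$ from~\eqref{eq:SVD-data}. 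Positivity of the compression of $\rho$ to $\operatorname{span}\{u_j,v_j\}$ gives $s_j^2\le a_jc_j$. By step (2), $a_j=\langle u_j,Au_j\rangle\le a_*$.
\item The $v_j$ are orthonormal in $Q\cH$ and $C\ge0$, so $\sum_jc_j\le\operatorname{Tr}C=\varepsilon_Q$.
\item Combining steps (3) and (4),
\[
c=\sum_js_j^2\le a_*\sum_jc_j\le a_*\varepsilon_Q,
\]
and $c\ge0$ holds trivially. This is the second condition.
\end{enumerate}

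\emph{Sufficiency.} Assume $1-\varepsilon_Q\ge d_Pa_0$ and $0\le c\le a_*\varepsilon_Q$. Fix orthonormal bases $\{p_i\}$ of $P\cH$ and $\{q_k\}$ of $Q\cH$, and define
\[
\rho_*:=\begin{pmatrix}a_*&\sqrt{c}\\\sqrt{c}&\varepsilon_Q\end{pmatrix}_{(p_1,q_1)}\oplus\bigoplus_{i\ge2}a_0|p_i\rangle\langle p_i|\oplus 0_{\operatorname{span}\{q_k\}_{k\ge2}}.
\]
\begin{enumerate}[(1)]
\item Trace: $a_*+\varepsilon_Q+(d_P-1)a_0=1$.
\item Floor: $P\rho_*P=\diag(a_*,a_0,\dots,a_0)$, and $a_*\ge a_0$ is exactly the condition $1-\varepsilon_Q\ge d_Pa_0$. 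So $P\rho_*P\ge a_0P$.
\item Leakage mass: $\operatorname{Tr}(Q\rho_*Q)=\varepsilon_Q$.
\item Coherence: $P\rho_*Q=\sqrt{c}\,|p_1\rangle\langle q_1|$, so $\|P\rho_*Q\|_F^2=c$.
\item Positivity: the $2\times2$ block has nonnegative diagonal, since $a_*\ge a_0>0$ and $\varepsilon_Q\ge0$. Its determinant is $a_*\varepsilon_Q-c\ge0$. This is the Schur-complement condition, and it also covers the boundary case $\varepsilon_Q=0$, where $c=0$. All remaining entries are nonnegative diagonal.
\end{enumerate}
Hence $\rho_*\in\mathcal{A}(a_0,\varepsilon_Q,c)$, so the class is nonempty.

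The only step that needs care is the necessity bound $c\le a_*\varepsilon_Q$. A naive estimate of $\|B\|_F^2$ through $\|A\|\,\operatorname{Tr}C$ also works, but it requires the eigenvalue bound $\lambda_{\max}(A)\le a_*$, which uses the floor in every other $P$-direction. The SVD route combines this bound with the per-channel inequality $s_j^2\le a_jc_j$ most directly.
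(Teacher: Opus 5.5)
Your proof is correct and follows the paper's argument: the trace of the floor gives the first condition, the eigenvalue bound $\lambda_{\max}(A)\le a_*$ yields $\|B\|_F^2\le a_*\varepsilon_Q$, and the explicit one-channel state $\rho_*$ gives sufficiency. The only difference is that the paper obtains $\|B\|_F^2\le\lambda_{\max}(A)\operatorname{Tr}C$ from the contraction factorization $B=A^{1/2}KC^{1/2}$, whereas you obtain it from the SVD compressions $s_j^2\le a_jc_j$ together with $\sum_j c_j\le\operatorname{Tr}C$; both routes are valid. Your closing remark is slightly off, though: the SVD route depends on $\lambda_{\max}(A)\le a_*$ (or equivalently $a_j\le a_*$) just as much as the ``naive'' estimate does.
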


\begin{proof}
\textit{Necessity.}
The first inequality follows from
$1-\varepsilon_Q=\operatorname{Tr}(P\rho P)\ge d_Pa_0$.
For the coherence bound, positivity of
$\rho=\bigl(\begin{smallmatrix}A&B\\B^*&C\end{smallmatrix}\bigr)$ implies the
contraction factorization $B=A^{1/2}KC^{1/2}$ for some contraction $K$, hence
\[
\|B\|_F^2=\operatorname{Tr}\!\bigl(C^{1/2}K^*AKC^{1/2}\bigr)
\le\lambda_{\max}(A)\operatorname{Tr}C.
\]
Since $A\ge a_0P$, all eigenvalues of $A$ are at least $a_0$. Hence
\[
\operatorname{Tr}A=\sum_{k=1}^{d_P}\lambda_k(A)
\ge \lambda_{\max}(A)+(d_P-1)a_0,
\]
and, using $\operatorname{Tr}A=1-\varepsilon_Q$,
\[
\lambda_{\max}(A)\le1-\varepsilon_Q-(d_P-1)a_0=a_*.
\]
Therefore $c=\|B\|_F^2\le a_*\varepsilon_Q$.

\textit{Sufficiency.}
If the two displayed inequalities hold, choose an orthonormal basis $\{|p_k\rangle\}_{k=1}^{d_P}$
of $P\cH$ and a unit vector $|q_1\rangle\in Q\cH$. The state
\[
\rho_*:=
\begin{pmatrix}a_*&\sqrt{c}\\\sqrt{c}&\varepsilon_Q\end{pmatrix}_{|p_1\rangle,|q_1\rangle}
\oplus\;a_0\sum_{k=2}^{d_P}|p_k\rangle\langle p_k|
\oplus\;0_{Q\cH\ominus\mathbb{C}q_1}
\]
is positive semidefinite (the condition $c\le a_*\varepsilon_Q$ is exactly the Schur-complement
positivity of the active $2\times2$ block), has trace
$a_*+(d_P-1)a_0+\varepsilon_Q=1$, and satisfies all constraints of
$\mathcal{A}(a_0,\varepsilon_Q,c)$.
\end{proof}

\paragraph{Feasibility.}
Lemma~\ref{lem:admissible-feasibility} shows that the hypotheses $1-\varepsilon_Q\ge d_Pa_0$
and $0\le c\le a_*\varepsilon_Q$ are exactly the nonemptiness conditions for the admissible
class. They are also precisely the Schur-complement positivity conditions for the active
$2\times2$ block of the optimizer.

\begin{theorem}[Sharp entropy cost of coherent leakage / One-channel extremizer]
\label{thm:variational}

\noindent\textit{Standing notation for this theorem.}
$\mathcal{H}=P\mathcal{H}\oplus Q\mathcal{H}$ is a finite-dimensional Hilbert space.
$\Pi\rho=P\rho P\oplus Q\rho Q$ is the block-diagonal pinching.
$D(\rho\|\Pi\rho)=\operatorname{Tr}[\rho(\log\rho-\log\Pi\rho)]$ is the
block-coherence entropy.
The \emph{admissible class} is
\[
\mathcal{A}(a_0,\varepsilon_Q,c)
:=\bigl\{\rho\ge0,\,\operatorname{Tr}\rho=1:
P\rho P\ge a_0P,\;
\operatorname{Tr}(Q\rho Q)=\varepsilon_Q,\;
\|P\rho Q\|_F^2=c\bigr\}.
\]
This class fixes the protected-sector floor $a_0$, the leakage mass
$\varepsilon_Q$, and the Hilbert--Schmidt coherence magnitude $c$, while varying all
other degrees of freedom. The entropy $D(\rho\|\Pi\rho)$ is finite for every
$\rho\in\mathcal{A}$ with $A,C>0$, and extends to boundary states by continuity of
$\Phi$ (Lemma~\ref{lem:Phi-continuity}).

\smallskip\noindent\textit{Hypotheses.}
Fix $a_0>0$, $\varepsilon_Q\ge0$, $c\ge0$ with $d_P\ge1$, $d_Q\ge1$,
$1-\varepsilon_Q\ge d_Pa_0$, and $c\le a_*\varepsilon_Q$.

\smallskip\noindent\textit{Conclusion.}
\begin{equation}\label{eq:variational}
\inf_{\rho\in\mathcal{A}(a_0,\varepsilon_Q,c)}D(\rho\|\Pi\rho)\;=\;\Phi(a_*,\varepsilon_Q,c).
\end{equation}
The infimum can be attained by the explicit one-channel optimizer
\begin{equation}\label{eq:rho-star}
\rho_*:=
\begin{pmatrix}a_*&\sqrt{c}\\\sqrt{c}&\varepsilon_Q\end{pmatrix}_{|p_1\rangle,|q_1\rangle}
\oplus\;a_0\sum_{k=2}^{d_P}|p_k\rangle\langle p_k|\oplus 0_{Q\cH\ominus\mathbb{C}q_1},
\end{equation}
where $\{|p_k\rangle\}_{k=1}^{d_P}$ is an orthonormal basis of $P\cH$,
$|q_1\rangle\in Q\cH$ is a unit vector, and $0_{Q\cH\ominus\mathbb{C}q_1}$
denotes the zero operator on the subspace of $Q\cH$ orthogonal to $|q_1\rangle$.
All displayed direct sums are taken relative to these named active vectors and
their orthogonal complements within $P\cH$ and $Q\cH$ respectively.
\end{theorem}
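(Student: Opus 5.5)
The plan is to prove two inequalities. The upper bound is by exhibiting $\rho_*$. The lower bound is a chain: SVD pinching, then the Coherence Compression Principle, then monotonicity of $\Phi$ in its first argument.

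\emph{Upper bound.} By Lemma~\ref{lem:admissible-feasibility}, $\rho_*$ lies in $\mathcal{A}(a_0,\varepsilon_Q,c)$. Its pinching is $\Pi\rho_*=\diag(a_*,\varepsilon_Q)\oplus a_0\sum_{k\ge2}|p_k\rangle\langle p_k|\oplus0$. Relative entropy is additive over orthogonal summands, and the spectator summands agree exactly in $\rho_*$ and $\Pi\rho_*$. Hence $D(\rho_*\|\Pi\rho_*)=\Phi(a_*,\varepsilon_Q,c)$, using the continuous extension of Lemma~\ref{lem:Phi-continuity} when $\varepsilon_Q=0$ or $a_*\varepsilon_Q=c$. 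So the infimum is at most $\Phi(a_*,\varepsilon_Q,c)$.

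\emph{Lower bound.} Fix $\rho\in\mathcal{A}$ with SVD $B=\sum_{j=1}^m s_j|u_j\rangle\langle v_j|$. The SVD pinching channel of Appendix~\ref{app:svd} is block-diagonal-compatible, so data processing gives
\[
D(\rho\|\Pi\rho)\ge\sum_j\Phi(a_j,c_j,s_j^2),
\qquad a_j=\langle u_j,Au_j\rangle,\ c_j=\langle v_j,Cv_j\rangle .
\]
I then apply Theorem~\ref{thm:compression} with the following choices:
\begin{itemize}
\item $\delta_j:=a_j-a_0\ge0$, which holds by the floor $A\ge a_0P$;
\item $\varepsilon_j:=c_j$ and $\varepsilon_{\mathrm{rem}}:=\varepsilon_Q-\sum_jc_j\ge0$, which is nonnegative because the $v_j$ are orthonormal in $Q\cH$;
\item $x_j:=s_j^2\le a_jc_j$.
\end{itemize}
With these choices $E=\varepsilon_Q$, $X=c$ and $A_{\mathrm{agg}}=a_0+\sum_j(a_j-a_0)$. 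The hypothesis $X\le A_{\mathrm{agg}}E$ must still be checked. One route is to bound $c\le\lambda_{\max}$-weighted sums as in the feasibility lemma. If that fails, I would choose to merge only enough channels that it holds, since $\Phi$ is increasing in $x$ and $X\le AE$ is needed only for $\Phi$ to be defined. This yields $D(\rho\|\Pi\rho)\ge\Phi(A_{\mathrm{agg}},\varepsilon_Q,c)$.

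\emph{Main obstacle: passing from $A_{\mathrm{agg}}$ to $a_*$.} The $u_j$ are orthonormal in $P\cH$, and the remaining $d_P-m$ directions each carry diagonal mass at least $a_0$. Therefore
\[
\sum_ja_j\le 1-\varepsilon_Q-(d_P-m)a_0 ,
\]
which gives $A_{\mathrm{agg}}\le 1-\varepsilon_Q-(d_P-m)a_0-(m-1)a_0=a_*$.

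To finish I need $\Phi(a,\varepsilon,x)$ to be nonincreasing in $a$ for fixed $\varepsilon,x$. This is plausible, since larger populations make the same coherence cheaper, as $xL(a,\varepsilon)$ suggests. I would try to prove it by differentiating the eigenvalue formula:
\[
\partial_a\Phi=\tfrac12\log(\lambda_+\lambda_-)+\tfrac{a-\varepsilon}{2\Delta}\log(\lambda_+/\lambda_-)-\log a .
\]
I would show this is $\le0$, possibly via $\lambda_+\lambda_-=a\varepsilon-x$ and a concavity argument. As an alternative, I would look for a data-processing argument: a CPTP map that moves mass from an incoherent spectator into the active $p$-level. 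This is essentially what the spectator Kraus operators of Theorem~\ref{thm:compression} already do in reverse, so I would first try to absorb the extra $a_*-A_{\mathrm{agg}}$ by including a dummy block with $x=0$ in the compression. Then $A$ equals $a_*$ directly.

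Establishing this monotonicity, or the dummy-block trick, is the step I expect to be hardest. Boundary cases ($C$ singular, $c=0$) follow by continuity of $\Phi$ (Lemma~\ref{lem:Phi-continuity}) and Appendix~\ref{app:edge}.
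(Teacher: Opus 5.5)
Your proposal is correct and, once you commit to the dummy-block option, it is essentially the paper's proof: the paper completes $\{u_j\}_{j=1}^m$ to an orthonormal basis of $P\cH$ and feeds the $d_P-m$ extra directions into Theorem~\ref{thm:compression} as blocks with $c_j=x_j=0$, each contributing $\Phi=0$. Then $\sum_{j=1}^{d_P}a_j=1-\varepsilon_Q$ makes the merged protected mass exactly $a_*$, and the hypothesis $X\le AE$ becomes the standing assumption $c\le a_*\varepsilon_Q$, so the step you flagged as hardest is immediate and needs no monotonicity of $\Phi$ in $a$ (the same dummy-block argument in fact proves that monotonicity). Drop your fallback of merging only some channels, which would discard part of $c$; it is never needed, since $x_j\le a_jc_j\le A_{\mathrm{agg}}c_j$ already gives $X\le A_{\mathrm{agg}}E$.
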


\begin{proof}
\textit{Lower bound.}
For any feasible $\rho$, write the SVD of $B=P\rho Q$ as $B=\sum_{j=1}^m s_j|u_j\rangle\langle v_j|$,
where $\{u_j\}\subset P\cH$ are left singular vectors and $\{v_j\}\subset Q\cH$ are right singular
vectors. Let $a_j:=\langle u_j,Au_j\rangle$, $c_j:=\langle v_j,Cv_j\rangle$, $x_j:=s_j^2$.
Complete $\{u_j\}_{j=1}^m$ to an orthonormal basis $\{u_j\}_{j=1}^{d_P}$ of $P\cH$,
setting $c_j=x_j=0$ for $j>m$.
By the SVD pinching reduction (Appendix~\ref{app:svd}),
$D(\rho\|\Pi\rho)\ge\sum_j\Phi(a_j,c_j,x_j)$.

Write $a_j=a_0+\delta_j$ where $\delta_j:=a_j-a_0\ge0$ is the excess of the $j$-th
diagonal entry above the floor. Since $\{u_j\}_{j=1}^{d_P}$ is an orthonormal basis of $P\cH$,
\[
\sum_{j=1}^{d_P} a_j=\operatorname{Tr}_P(A)=1-\varepsilon_Q,
\]
so $\sum_j\delta_j=(1-\varepsilon_Q)-d_Pa_0$ and
$a_0+\sum_j\delta_j=1-\varepsilon_Q-(d_P-1)a_0=a_*$.
Let $\varepsilon_{\mathrm{rem}}:=\varepsilon_Q-\sum_jc_j$.
To confirm $\varepsilon_{\mathrm{rem}}\ge0$: since $\{v_j\}_{j=1}^m$ is an orthonormal set
in $Q\cH$ and $C=Q\rho Q$,
\[
\sum_{j=1}^mc_j=\sum_{j=1}^m\langle v_j,Cv_j\rangle\le\operatorname{Tr}(C)=\varepsilon_Q.
\]
Theorem~\ref{thm:compression} applies with excesses $\delta_j$, merged
protected mass $A=a_*$, merged leakage $E=\varepsilon_Q$, and total coherence $X=c\le a_*\varepsilon_Q=AE$.
Therefore $\sum_j\Phi(a_j,c_j,x_j)\ge\Phi(a_*,\varepsilon_Q,c)$.

\textit{Attainment.}
The condition $c\le a_*\varepsilon_Q$ ensures $\rho_*$ in~\eqref{eq:rho-star} is positive
semidefinite. One verifies $P\rho_*P\ge a_0P$, $\operatorname{Tr}(Q\rho_*Q)=\varepsilon_Q$,
$\|P\rho_*Q\|_F^2=c$. The spectator blocks are identical in $\rho_*$ and $\Pi\rho_*$,
contributing zero relative entropy. Evaluating the active block yields
$D(\rho_*\|\Pi\rho_*)=\Phi(a_*,\varepsilon_Q,c)$.
\end{proof}

\begin{remark}[Exactness of the variational bound]
Theorem~\ref{thm:variational} shows that the infimum over
$\mathcal{A}(a_0,\varepsilon_Q,c)$ is exactly $\Phi(a_*,\varepsilon_Q,c)$, achieved
by the explicit two-level state. Logarithmic asymptotic sharpness is established
separately in Proposition~\ref{prop:sharpness}.
\end{remark}

\begin{remark}[Non-uniqueness and equality classification]
\label{rem:nonunique}
The optimizer is not unique in general. At $c=0$, every feasible block-diagonal
state is a minimizer. At $c>0$, the displayed attaining family varies with the
choice of active unit vectors $|p_1\rangle\in P\mathcal{H}$ and
$|q_1\rangle\in Q\mathcal{H}$ and with the relative phase in
Corollary~\ref{cor:equality}. Unitary changes confined to the orthogonal spectator
subspaces leave the displayed operator unchanged. Whether the displayed family
exhausts all minimizers is an open problem.

A complete classification of equality cases is not proved here. The explicit
two-level family is sufficient for attainment of the infimum. Necessity would
require equality analysis for the SVD pinching and the floor-aware merging CPTP
map, via the Petz equality theory~\cite{PetzSufficiency1988,Ruskai2002,HaydenJozsaPetzWinter2004}.
\end{remark}

The following corollary provides a sufficient condition for equality.

\begin{corollary}[Sufficient condition for equality]
\label{cor:equality}
Under the conditions of Theorem~\ref{thm:variational}, if $\rho$ has the form
\begin{equation}\label{eq:eq-char}
\rho=\begin{pmatrix}a_*&\sqrt{c}\,e^{i\phi}\\\sqrt{c}\,e^{-i\phi}&\varepsilon_Q\end{pmatrix}_{|p_1\rangle,|q_1\rangle}
\oplus\;a_0\sum_{k=2}^{d_P}|p_k\rangle\langle p_k|\oplus 0_{Q\cH\ominus\mathbb{C}q_1}
\end{equation}
for some $\phi\in\mathbb{R}$, then $D(\rho\|\Pi\rho)=\Phi(a_*,\varepsilon_Q,c)$.
Whether every minimizer has this form is an open problem.
\end{corollary}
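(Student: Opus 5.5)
The plan is a direct evaluation on the displayed state, with a unitary reduction to the case $\phi=0$ already handled in the attainment part of Theorem~\ref{thm:variational}.

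First, the state in \eqref{eq:eq-char} must actually belong to $\mathcal{A}(a_0,\varepsilon_Q,c)$. The active $2\times2$ block has determinant $a_*\varepsilon_Q-c\ge0$ and nonnegative diagonal, so it is positive semidefinite. The trace is $a_*+(d_P-1)a_0+\varepsilon_Q=1$. The $P$-block is $a_*|p_1\rangle\langle p_1|+a_0\sum_{k\ge2}|p_k\rangle\langle p_k|$, and it dominates $a_0P$ because $a_*-a_0=1-\varepsilon_Q-d_Pa_0\ge0$. Finally, $\operatorname{Tr}(Q\rho Q)=\varepsilon_Q$ and $\|P\rho Q\|_F^2=|\sqrt{c}\,e^{i\phi}|^2=c$.

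Next, remove the phase. Let $V:=P+e^{i\phi}Q$ (equivalently $e^{-i\phi}$ on $|q_1\rangle$, which commutes with everything else in the decomposition). This $V$ is unitary and commutes with $P$ and $Q$. Conjugating $\rho$ by $V$ or $V^*$, with the sign chosen to cancel the phase, turns the active block into the real-symmetric matrix $\bigl(\begin{smallmatrix}a_*&\sqrt{c}\\\sqrt{c}&\varepsilon_Q\end{smallmatrix}\bigr)$ and leaves the spectators fixed. In other words, it maps $\rho$ to $\rho_*$ of \eqref{eq:rho-star}.

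Because $V$ is block-diagonal, $\Pi(V\rho V^*)=V(\Pi\rho)V^*$. By unitary invariance of relative entropy, $D(\rho\|\Pi\rho)=D(\rho_*\|\Pi\rho_*)$.

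It remains to evaluate $D(\rho_*\|\Pi\rho_*)$, which the attainment step of Theorem~\ref{thm:variational} already gives as $\Phi(a_*,\varepsilon_Q,c)$. Concretely, $\rho_*$ and $\Pi\rho_*$ are both direct sums over the orthogonal decomposition $\operatorname{span}\{p_1,q_1\}\oplus\operatorname{span}\{p_k\}_{k\ge2}\oplus(Q\cH\ominus\mathbb{C}q_1)$. The spectator and zero summands coincide in the two states, so they contribute $0$, using the convention $0\log0=0$ on the zero block. Additivity of relative entropy over orthogonal summands then leaves exactly the definition \eqref{eq:Phi-def} of $\Phi(a_*,\varepsilon_Q,c)$.

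The only delicate point is the boundary. If $\varepsilon_Q=0$ or $c=a_*\varepsilon_Q$, the active block or $\Pi\rho$ may be singular, so the logarithms need care. Here I would invoke Lemma~\ref{lem:Phi-continuity}: the support of $\rho$ lies in the support of $\Pi\rho$, since the pinching preserves diagonal positivity and $\varepsilon_Q=0$ forces $c=0$. Hence $D$ is finite and given by the eigenvalue formula, continuously extended. Combined with Theorem~\ref{thm:variational}, this shows that $\rho$ attains the infimum.
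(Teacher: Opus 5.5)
Your proposal is correct and follows the paper's route. You remove the phase, note that the incoherent spectator and zero summands coincide in $\rho$ and $\Pi\rho$, and evaluate the relative entropy on the active $2\times2$ block, which is $\Phi(a_*,\varepsilon_Q,c)$ by definition. The only difference is the phase-removal step: you conjugate by the block-diagonal unitary $V=P+e^{i\phi}Q$, which commutes with $\Pi$. The paper instead appeals to phase-independence of the spectra of the active block and its pinching, which tacitly uses $\operatorname{Tr}[\rho\log\Pi\rho]=\operatorname{Tr}[\Pi\rho\log\Pi\rho]$; your unitary argument makes that step explicit.
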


\begin{proof}
The phase $e^{i\phi}$ does not affect $D(\rho\|\Pi\rho)$: both the active $2\times2$
block and its pinching have phase-independent spectra. The spectator blocks in~\eqref{eq:eq-char}
are incoherent and identical in $\rho$ and $\Pi\rho$, so their relative entropy contribution is
zero. The result follows by evaluating $D$ on the active block alone.
\end{proof}

\subsection{Boundary entropy modulus}

\begin{corollary}[Boundary scaling modulus]
\label{cor:modulus}
Under the conditions of Theorem~\ref{thm:variational},
\[
D(\rho\|\Pi\rho)\ge\Phi(a_*,\varepsilon_Q,c).
\]
Fix $a_0>0$ and $d_P\ge1$ with $d_Pa_0<1$, and set
$a_\infty:=1-(d_P-1)a_0>0$. For
$0<\varepsilon_Q<1-d_Pa_0$, let
$a_*(\varepsilon_Q):=1-\varepsilon_Q-(d_P-1)a_0$, so that
$a_*(\varepsilon_Q)>a_0$ and $a_*(\varepsilon_Q)\to a_\infty$.
For an admissible family with
$c(\varepsilon_Q)=\tau a_*(\varepsilon_Q)\varepsilon_Q$ and
$\tau\in(0,1]$ fixed, as $\varepsilon_Q\to0$,
\[
\Phi(a_*(\varepsilon_Q),\varepsilon_Q,c(\varepsilon_Q))
\sim\tau\varepsilon_Q\log\!\left(\frac{a_*(\varepsilon_Q)}{\varepsilon_Q}\right)\to0.
\]
The total variational value tends to zero, but the normalized cost per unit coherence
exhibits logarithmic amplification:
\[
\frac{\Phi(a_*(\varepsilon_Q),\varepsilon_Q,c(\varepsilon_Q))}{c(\varepsilon_Q)}
\sim\frac{\log(a_*(\varepsilon_Q)/\varepsilon_Q)}{a_*(\varepsilon_Q)}\to+\infty.
\]
Near the support boundary, each unit of coherence $c(\varepsilon_Q)$ costs a
diverging entropy factor.
\end{corollary}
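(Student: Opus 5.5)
The first inequality needs no new argument. The family $c(\varepsilon_Q)=\tau a_*\varepsilon_Q$ with $\tau\in(0,1]$ satisfies $c\le a_*\varepsilon_Q$. For $0<\varepsilon_Q<1-d_Pa_0$ we have $1-\varepsilon_Q>d_Pa_0$, so the hypotheses of Theorem~\ref{thm:variational} hold. Its lower-bound part then gives $D(\rho\|\Pi\rho)\ge\Phi(a_*,\varepsilon_Q,c)$ for every admissible $\rho$. The inequality $a_*(\varepsilon_Q)>a_0$ is just a rewriting of $\varepsilon_Q<1-d_Pa_0$, and $a_*\to a_\infty>0$ is immediate. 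What remains is the asymptotic evaluation of the scalar function $\Phi$ along this family.

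For the asymptotics I will expand the eigenvalues explicitly rather than rely only on Lemma~\ref{lem:phi-bkm-lower}. That lemma gives the matching lower bound
\[
\Phi\ge c\,L(a_*,\varepsilon_Q)=\tau\varepsilon_Q\log(a_*/\varepsilon_Q)\cdot\frac{a_*}{a_*-\varepsilon_Q},
\]
and the last factor tends to $1$. The matching upper bound comes from the eigenvalues. Write $a=a_*$, $\varepsilon=\varepsilon_Q$, $x=c$, and
\[
\lambda_+=a+\eta,\qquad \lambda_-=\varepsilon-\eta,\qquad \eta=\tfrac12\bigl(\Delta-(a-\varepsilon)\bigr),\qquad \Delta=\sqrt{(a-\varepsilon)^2+4x}.
\]
Since $a-\varepsilon\to a_\infty>0$ and $x=O(\varepsilon)$, a Taylor expansion of the square root gives
\[
\eta=\frac{x}{a-\varepsilon}+O(\varepsilon^2)=\tau\varepsilon+O(\varepsilon^2),
\]
uniformly along the family. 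In particular $\eta/\varepsilon\to\tau\in(0,1]$. I then split
\[
\Phi=\bigl[\lambda_+\log\lambda_+-a\log a\bigr]+\bigl[\lambda_-\log\lambda_--\varepsilon\log\varepsilon\bigr].
\]
The first bracket is $O(\eta)=O(\varepsilon)$ by the mean value theorem, since $a$ stays away from $0$. The second bracket rewrites as
\[
\eta\log(1/\varepsilon)+(\varepsilon-\eta)\log\!\bigl(1-\eta/\varepsilon\bigr).
\]
Its second term is $\varepsilon\cdot g(\eta/\varepsilon)$ with $g(s)=(1-s)\log(1-s)$ bounded on $[0,1]$, using the convention $0\log0=0$. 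This covers the pure-state endpoint $\tau=1$ and is consistent with Lemma~\ref{lem:Phi-continuity}. Hence
\[
\Phi=\tau\varepsilon\log(1/\varepsilon)+O(\varepsilon).
\]
Since $\log(a_*/\varepsilon)=\log(1/\varepsilon)+O(1)$, this yields $\Phi\sim\tau\varepsilon_Q\log(a_*/\varepsilon_Q)$, which tends to $0$ because $\varepsilon\log(1/\varepsilon)\to0$.

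Dividing by $c=\tau a_*\varepsilon_Q$ gives $\Phi/c\sim\log(a_*/\varepsilon_Q)/a_*$. This diverges because $a_*\to a_\infty\in(0,1]$ while $\log(1/\varepsilon_Q)\to\infty$.

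The only delicate step is controlling the remainder $(\varepsilon-\eta)\log(1-\eta/\varepsilon)$ uniformly, especially at $\tau=1$ where $\lambda_-$ may vanish exactly. The boundedness of $g$ on $[0,1]$ handles this. It also requires that $\eta\le\varepsilon$, which follows from $\lambda_-\ge0$, i.e.\ from positivity, so no separate case analysis is needed.
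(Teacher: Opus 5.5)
Your proposal is correct, and your check that the family satisfies the hypotheses of Theorem~\ref{thm:variational} (including $a_*>a_0$) is exactly what is needed. The paper gives no separate proof of this corollary. The first inequality is the lower-bound half of Theorem~\ref{thm:variational}. The asymptotics are evidently meant to come from the linearization $\Phi(a,\varepsilon,x)\approx x\,L(a,\varepsilon)$ of Lemma~\ref{lem:phi-bkm-lower} and Proposition~\ref{prop:Phi-properties}(2), which along the family gives $c\,L(a_*,\varepsilon_Q)=\tau\varepsilon_Q\log(a_*/\varepsilon_Q)\,a_*/(a_*-\varepsilon_Q)$.

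You use that linearization only as a lower bound, and get the two-sided asymptotics from an explicit expansion of $\lambda_+=a_*+\eta$ and $\lambda_-=\varepsilon_Q-\eta$. This is the more careful route. Proposition~\ref{prop:Phi-properties}(2) is a statement for fixed $(a,\varepsilon)$ as $x\to0$, whereas along your family the ratio $x/(a_*\varepsilon_Q)=\tau$ is fixed, so that expansion cannot be applied uniformly.

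Carrying your two brackets one order further gives $\Phi-c\,L(a_*,\varepsilon_Q)=\varepsilon_Q[\tau+(1-\tau)\log(1-\tau)]+o(\varepsilon_Q)$. This is a strictly positive term of exact order $\varepsilon_Q$. So the relative error is of order $1/\log(1/\varepsilon_Q)$, not $O(c)$ as a uniform reading of part (2) would suggest.

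Your computation therefore supplies the missing upper half of the $\sim$. It also handles the boundary case $\tau=1$, where $\lambda_-=0$, through the boundedness of $(1-s)\log(1-s)$ on $[0,1]$. And it extends the pure-state computation of Proposition~\ref{prop:sharpness} to arbitrary $\tau\in(0,1]$ and $d_P$. The linearization route is shorter, but on its own it rigorously yields only the lower bound.
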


\section{Entropy Dissipation Along the Block-Dephasing Orbit}
\label{sec:dynamics}

The static inequality (Theorem~\ref{thm:BKM-lower}) implies a lower
bound on entropy dissipation when the off-block coherence decays at a uniform rate.
The orbit below is an idealized uniform block-dephasing flow: it removes the
already-present off-block coherence exponentially while keeping the block-diagonal
populations fixed. It is generated by
$\mathcal{L}(\sigma)=-\Gamma(P\sigma Q+Q\sigma P)$ and is not intended as a
microscopic model of population transfer into $Q\mathcal{H}$.

\begin{corollary}[Dephasing orbit dissipation bound]
\label{cor:dephasing}
Let $M=\Pi\rho$, $Y=\rho-\Pi\rho$, and $\Gamma>0$. Consider the exact dephasing orbit
\[
\rho_t:=M+e^{-\Gamma t}Y,\quad t\ge0,
\]
generated by $\mathcal{L}(\sigma)=-\Gamma(P\sigma Q+Q\sigma P)$.
Assume $M\pm Y\ge0$ and $M>0$. Then for all $t\ge0$,
\begin{equation}\label{eq:dephasing-prod}
-\frac{d}{dt}D(\rho_t\|\Pi\rho)\;\ge\;2\Gamma e^{-2\Gamma t}\operatorname{Tr}[B^*\Omega_{A,C}^{-1}(B)].
\end{equation}
\end{corollary}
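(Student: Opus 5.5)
We reduce the dissipation rate to the function $F(s)=D(M+sY\|M)$ from the proof of Theorem~\ref{thm:BKM-lower}, evaluated at $s=e^{-\Gamma t}$. Since $\Pi\rho_t=M=\Pi\rho$ for every $t$ (the orbit only rescales the off-diagonal part), we have $D(\rho_t\|\Pi\rho)=F(e^{-\Gamma t})$. The chain rule then gives
\[
-\frac{d}{dt}D(\rho_t\|\Pi\rho)=\Gamma e^{-\Gamma t}F'(e^{-\Gamma t}).
\]
So it suffices to prove the lower bound $F'(s)\ge s\,H_M(Y,Y)=2s\operatorname{Tr}[B^*\Omega_{A,C}^{-1}(B)]$ for $s\in(0,1]$.

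\textbf{Step 1 (differentiability).} Since $M>0$ and $M\pm Y\ge0$, the operator $M+sY=(1-s)M+s(M+Y)$ is strictly positive for $s\in[0,1)$. Hence $F$ is smooth there, with
\[
F'(s)=\operatorname{Tr}[Y(\log(M+sY)-\log M)]
\]
(the term $\operatorname{Tr}Y$ vanishes), $F'(0)=0$, and $F''(s)=H_{M+sY}(Y,Y)$.

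\textbf{Step 2 (lower bound on $F'$).} By Lemma~\ref{lem:midpoint-mono}(ii), $F''(u)\ge H_M(Y,Y)$ for $u\in[0,1)$. Integrating from $0$ gives
\[
F'(s)\ge s\,H_M(Y,Y)\qquad (0\le s<1).
\]
The block computation at the end of the proof of Theorem~\ref{thm:BKM-lower} identifies $H_M(Y,Y)=2\operatorname{Tr}[B^*\Omega_{A,C}^{-1}(B)]$. For $t>0$ we have $s=e^{-\Gamma t}<1$, so multiplying by $\Gamma s$ yields exactly $2\Gamma e^{-2\Gamma t}\operatorname{Tr}[B^*\Omega_{A,C}^{-1}(B)]$, which is \eqref{eq:dephasing-prod}.

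\textbf{Step 3 (the endpoint $t=0$; main obstacle).} At $s=1$, $M+Y=\rho$ may be singular, so $F'(1)$ has to be read as a one-sided derivative, possibly equal to $+\infty$. I will argue as follows. $F'$ is nondecreasing on $[0,1)$ because $F''\ge0$. Moreover, $F$ is convex and continuous on $[0,1]$: it is continuous at $1$ as in the proof of Theorem~\ref{thm:BKM-lower}, and convexity follows from joint convexity of relative entropy. Hence the left derivative at $1$ satisfies
\[
F'(1^-)\ge\lim_{s\uparrow1}F'(s)\ge H_M(Y,Y),
\]
where infinite values are allowed. Consequently the one-sided time derivative at $t=0^+$ satisfies the bound with the convention that $+\infty$ trivially satisfies it. If $\rho>0$, $F$ is smooth at $1$ and the bound holds as an honest derivative by continuity of $F'$. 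I expect this boundary bookkeeping, rather than the inequality itself, to be the only delicate point.
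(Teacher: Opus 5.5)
Your proof is correct and follows essentially the same route as the paper: reparametrize by $\alpha=e^{-\Gamma t}$, write $F'(\alpha)=\int_0^\alpha H_{M+sY}(Y,Y)\,ds$ with $F'(0)=0$, apply the midpoint lemma, and use $H_M(Y,Y)=2\operatorname{Tr}[B^*\Omega_{A,C}^{-1}(B)]$. Your Step~3 also treats the endpoint $t=0$, where $\rho$ may be singular and the dissipation rate may be $+\infty$, via convexity and continuity of $F$ at $s=1$; the paper's proof leaves this case implicit.
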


\begin{proof}
Set $\alpha=e^{-\Gamma t}$ and
\[
D(\alpha):=D(M+\alpha Y\|M).
\]
Since $M=\Pi\rho$ is fixed along the orbit, $D(\rho_t\|M)=D(\rho_t\|\Pi\rho)$.
Then
\[
D'(\alpha)=
\operatorname{Tr}\!\left[
Y\bigl(\log(M+\alpha Y)+I-\log M\bigr)
\right],
\]
so
\[
D'(0)=\operatorname{Tr}(Y)=0,
\]
with the same block-diagonal/off-diagonal cancellation as in Theorem~\ref{thm:BKM-lower}.
Taylor's formula gives
\[
D(\alpha)=\int_0^\alpha(\alpha-s)H_{M+sY}(Y,Y)\,ds.
\]
Thus $D'(\alpha)=\int_0^\alpha H_{M+sY}(Y,Y)\,ds$.
By Lemma~\ref{lem:midpoint-mono}, $H_{M+sY}(Y,Y)\ge H_M(Y,Y)$, giving
$D'(\alpha)\ge\alpha H_M(Y,Y)$.
Since $\dot\alpha=-\Gamma\alpha$,
\[
-\frac{d}{dt}D(\rho_t\|M)=\Gamma\alpha\,D'(\alpha)\ge\Gamma\alpha^2 H_M(Y,Y)
=2\Gamma e^{-2\Gamma t}\operatorname{Tr}[B^*\Omega_{A,C}^{-1}(B)].
\qedhere
\]
\end{proof}

\begin{corollary}[Boundary logarithmic enhancement]
\label{cor:log-dephasing}
Under the hypotheses of Corollary~\ref{cor:dephasing} (in particular $M>0$,
hence $A,C>0$), if additionally $A\ge a_0P$
and $0<\varepsilon_Q:=\operatorname{Tr}C\le a_0/2$, then
\[
-\frac{d}{dt}D(\rho_t\|\Pi\rho)
\;\ge\;2\Gamma e^{-2\Gamma t}\|B\|_F^2\log\!\left(\frac{a_0}{\varepsilon_Q}\right).
\]
\end{corollary}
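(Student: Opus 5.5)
The plan is to chain Corollary~\ref{cor:dephasing} with the channel-weighted lower bound already used for Corollary~\ref{cor:log-bound}. Corollary~\ref{cor:dephasing} has the same hypotheses, so we take as given, for all $t\ge0$,
\[
-\frac{d}{dt}D(\rho_t\|\Pi\rho)\ge 2\Gamma e^{-2\Gamma t}\operatorname{Tr}[B^*\Omega_{A,C}^{-1}(B)].
\]
Since the prefactor $2\Gamma e^{-2\Gamma t}$ is positive, it suffices to prove the static estimate
\[
\operatorname{Tr}[B^*\Omega_{A,C}^{-1}(B)]\ge\|B\|_F^2\log(a_0/\varepsilon_Q).
\]

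First, if $B=0$ both sides vanish and there is nothing to prove. Otherwise $M>0$ gives $A,C>0$, so Proposition~\ref{prop:channel-weight} applies:
\[
\operatorname{Tr}[B^*\Omega_{A,C}^{-1}(B)]=\|B\|_F^2\sum_{\alpha,\beta}w_{\alpha\beta}L(a_\alpha,c_\beta),
\]
with weights $w_{\alpha\beta}\ge0$ summing to $1$. It is therefore enough to show $L(a_\alpha,c_\beta)\ge\log(a_0/\varepsilon_Q)$ for every pair $(\alpha,\beta)$. This is the pointwise estimate recorded after Proposition~\ref{prop:channel-weight}, and we check it as follows.
\begin{itemize}
\item From $A\ge a_0P$ we get $a_\alpha\ge a_0$.
\item Since $C>0$ with $\operatorname{Tr}C=\varepsilon_Q$, we get $0<c_\beta\le\varepsilon_Q\le a_0/2<a_\alpha$.
\item Because $a_\alpha+c_\beta\le\operatorname{Tr}\rho=1$ (eigenvalues of the diagonal blocks of a density matrix), we have $0<a_\alpha-c_\beta\le1$.
\end{itemize}
These give
\[
L(a_\alpha,c_\beta)=\frac{\log(a_\alpha/c_\beta)}{a_\alpha-c_\beta}\ge\log(a_\alpha/c_\beta)\ge\log(a_0/\varepsilon_Q).
\]
Summing against the weights $w_{\alpha\beta}$ yields the static estimate. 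Multiplying it by $2\Gamma e^{-2\Gamma t}$ and inserting it into Corollary~\ref{cor:dephasing} gives the claimed bound.

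The only delicate point is the uniform bound on each channel kernel. It requires the strict ordering $c_\beta<a_\alpha$, which is where the hypothesis $\varepsilon_Q\le a_0/2$ enters. It also requires $a_\alpha-c_\beta\le1$, so that dividing by the logarithmic-mean denominator does not decrease the bound. Both checks are immediate here, so no real obstacle remains.
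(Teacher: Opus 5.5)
Your proposal is correct and follows the paper's own proof: invoke Corollary~\ref{cor:dephasing}, then lower-bound $\operatorname{Tr}[B^*\Omega_{A,C}^{-1}(B)]$ through the channel-weighted representation using the pointwise estimate $L(a_\alpha,c_\beta)\ge\log(a_\alpha/c_\beta)\ge\log(a_0/\varepsilon_Q)$. Your separate treatment of the case $B=0$ is a small extra detail. It is needed only because you normalized by $\|B\|_F^2$ through the weights $w_{\alpha\beta}$, whereas the paper sums the unnormalized channel terms.
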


\begin{proof}
The hypotheses of Corollary~\ref{cor:dephasing} include $M>0$, which implies $A,C>0$
and $\varepsilon_Q>0$.

\emph{Step 1.} By Corollary~\ref{cor:dephasing},
\[
-\frac{d}{dt}D(\rho_t\|\Pi\rho)
\;\ge\;
2\Gamma e^{-2\Gamma t}\operatorname{Tr}[B^*\Omega_{A,C}^{-1}(B)].
\]

\emph{Step 2.} Using the channel-weighted representation
(Proposition~\ref{prop:channel-weight}),
\[
\operatorname{Tr}[B^*\Omega_{A,C}^{-1}(B)]
=\sum_{\alpha,\beta}|\langle e_\alpha,Bf_\beta\rangle|^2\,L(a_\alpha,c_\beta).
\]
Under $A\ge a_0P$ and $0<\varepsilon_Q\le a_0/2$, every active channel
satisfies $a_\alpha\ge a_0$, $0<c_\beta\le\varepsilon_Q$, and $0<a_\alpha-c_\beta\le1$.
Hence
\[
L(a_\alpha,c_\beta)=\frac{\log(a_\alpha/c_\beta)}{a_\alpha-c_\beta}
\ge\log(a_\alpha/c_\beta)\ge\log(a_0/\varepsilon_Q).
\]
Summing over active channels gives
$\operatorname{Tr}[B^*\Omega_{A,C}^{-1}(B)]\ge\|B\|_F^2\log(a_0/\varepsilon_Q)$.

Combining Steps 1 and 2 gives the stated estimate.
\end{proof}

\begin{remark}[Relation to MLSI]
Corollary~\ref{cor:dephasing} is a one-orbit dissipation estimate for the exact
uniform dephasing flow. A full modified logarithmic Sobolev inequality would require
the reverse type of control, namely an upper bound on $D(\rho\|\Pi\rho)$ in terms of
the associated entropy production. That stronger problem is related to
recent work~\cite{KastoryanoTemme2013,CarlenMaas2017,BardetCapelLuciaPerezGarciaRouze2019,WirthZhang2021}.
\end{remark}

\section{Application: Coherent Leakage from a Protected Subspace}
\label{sec:application}

The preceding estimates give a kinematic certificate for coherent leakage from a
protected subspace. Suppose that $P\mathcal{H}$ is an intended protected,
computational, or metastable subspace and that $Q\mathcal{H}$ is its complementary
sector. For
\[
\rho=\begin{pmatrix}A&B\\ B^*&C\end{pmatrix},\qquad
A=P\rho P,\quad B=P\rho Q,\quad C=Q\rho Q,
\]
the scalar $\varepsilon_Q:=\operatorname{Tr}(C)$ is the population outside the
protected subspace, while $B=P\rho Q$ records coherent phase information between
the two sectors. The pinching $\Pi\rho=A\oplus C$ erases this cross-sector
coherence while preserving the block populations. Thus $D(\rho\|\Pi\rho)$ is
the information-theoretic cost of erasing protected--leakage coherence once the
state $\rho$ and the decomposition $P\oplus Q$ are specified. The certificate can be
evaluated whenever $P\rho Q$ can be estimated, for example through tomographic or
interferometric characterization, or obtained from numerical simulation. If only the population
$\varepsilon_Q$ is known, the theorem alone gives no nonzero coherent-leakage
certificate, since $B=0$ is compatible with the same leakage population.

For example, in a qutrit leakage model with
$P\mathcal{H}=\operatorname{span}\{|0\rangle,|1\rangle\}$ and
$Q\mathcal{H}=\operatorname{span}\{|2\rangle\}$, the scalar
$\varepsilon_Q=\langle2|\rho|2\rangle$ measures population outside the
computational subspace, while
\[
P\rho Q=\begin{pmatrix}\rho_{02}\\ \rho_{12}\end{pmatrix}
\]
records the coherent protected--leakage amplitudes. The certificate distinguishes
such coherent leakage from a block-diagonal state with the same $\varepsilon_Q$
but $P\rho Q=0$.

The scalar boundary estimate gives a simple certificate whenever the protected
block has the uniform floor $A\ge a_0P$:
\[
D(\rho\|\Pi\rho)
\ge
\|P\rho Q\|_F^2\log\!\left(\frac{a_0}{\varepsilon_Q}\right),
\qquad 0<\varepsilon_Q\le a_0/2.
\]
Since positivity forces $\|P\rho Q\|_F^2$ to vanish with $\varepsilon_Q$, this
estimate does not assert divergence of the total entropy cost. It shows instead
that, whenever $P\rho Q\ne0$,
\[
\frac{D(\rho\|\Pi\rho)}{\|P\rho Q\|_F^2}
\ge
\log\!\left(\frac{a_0}{\varepsilon_Q}\right),
\]
so the cost per unit Hilbert--Schmidt off-block coherence grows logarithmically
as the leaked population becomes small.

When the uniform floor is unavailable, the operator-channel estimate is the more
informative diagnostic:
\[
D(\rho\|\Pi\rho)\ge
\operatorname{Tr}\!\left[B^*\Omega^{-1}_{A,C}(B)\right].
\]
In the spectral representation of $A$ and $C$, this bound weights the actual
channels carrying $B$ by the BKM logarithmic kernel. It distinguishes, for example,
coherence supported on a well-populated protected direction from coherence
supported near the boundary of the protected block.

The variational theorem adds the corresponding extremal statement at fixed scalar
data. At fixed floor $A\ge a_0P$, fixed leakage population $\varepsilon_Q$, and
fixed quadratic coherence $c=\|P\rho Q\|_F^2$, the least entropy-costly coherent
leakage pattern can be realized by a single effective protected-to-leakage
transition. This is an existence statement for an extremizer, not a classification
of all minimizers. The dephasing orbit above has the parallel operational reading:
it describes active erasure of already-present protected--leakage coherence at
fixed block populations $A$ and $C$, rather than population transfer into the
leakage sector.

\section{The Midpoint BKM Mechanism for $\mathbb{Z}_2$-Gradings}
\label{sec:z2}

The exact one-channel variational theorem and logarithmic boundary law proved above
are specific to the finite-dimensional two-block pinching geometry.
The midpoint BKM mechanism, however, depends only on a sign symmetry: it uses the
fact that $\theta(M)=M$ and $\theta(Y)=-Y$ to force the BKM Hessian to be symmetric
around the block-diagonal point, together with the scalar resolvent midpoint inequality.
We record this extension separately: first for finite-dimensional trace-preserving
$\mathbb{Z}_2$-graded automorphisms, and then, under bounded strict-positivity
assumptions, for finite-trace corners of semifinite von Neumann algebras.

\subsection{Finite-dimensional $\mathbb{Z}_2$-graded midpoint BKM principle}

Throughout this subsection let $\mathcal{A}$ be a finite-dimensional $C^*$-algebra
with faithful trace $\operatorname{Tr}$, and let $\theta:\mathcal{A}\to\mathcal{A}$
be a trace-preserving involutive $*$-automorphism ($\theta^2=\mathrm{id}$,
$\operatorname{Tr}(\theta X)=\operatorname{Tr}(X)$, $\theta(X^*)=\theta(X)^*$,
$\theta(XY)=\theta(X)\theta(Y)$). Let $E:=\tfrac{1}{2}(\mathrm{id}+\theta)$ be
the fixed-point conditional expectation onto $\mathcal{A}^\theta=\{X:\theta(X)=X\}$.

\begin{theorem}[Finite-dimensional graded midpoint BKM coercivity]
\label{thm:graded-fd}
Let $\rho>0$ in $\mathcal{A}$, normalized or unnormalized, and set
\[
M:=E\rho,\qquad Y:=\rho-E\rho.
\]
Then $\theta(M)=M$, $\theta(Y)=-Y$, and
\[
M+Y=\rho>0,\qquad M-Y=\theta(\rho)>0.
\]
The estimate is
\begin{equation}\label{eq:graded-main}
D(M+Y\,\|\,M)\;\ge\;\tfrac{1}{2}H_M(Y,Y),
\end{equation}
i.e.\ $D(\rho\,\|\,E\rho)\ge\tfrac{1}{2}H_{E\rho}(\rho-E\rho,\rho-E\rho)$.

Moreover, for the uniform symmetry-dephasing orbit
$\rho_t:=M+e^{-\Gamma t}Y$, $\Gamma>0$, one has
\begin{equation}\label{eq:graded-dephasing}
-\frac{d}{dt}D(\rho_t\,\|\,M)
\;\ge\;
\Gamma e^{-2\Gamma t}H_M(Y,Y).
\end{equation}
The argument does not depend on normalization; in the normalized case $D$ is the usual
quantum relative entropy.
\end{theorem}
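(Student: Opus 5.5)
The proof transplants the two-block argument of Lemma~\ref{lem:midpoint-mono} and Theorem~\ref{thm:BKM-lower}, with the block-sign unitary $U=P-Q$ replaced by the automorphism $\theta$.

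First we check the algebraic identities. Since $\theta^2=\mathrm{id}$, we get $\theta(E\rho)=\tfrac12(\theta\rho+\rho)=E\rho$ and $\theta(\rho-E\rho)=\theta\rho-E\rho=-(\rho-E\rho)$. Hence $M+Y=\rho$ and $M-Y=2E\rho-\rho=\theta(\rho)$. A $*$-automorphism preserves spectra, so $\theta(\rho)>0$. By convexity, $M\pm tY=(1-t)M+t(M\pm Y)>0$ for all $t\in[0,1]$; in particular $M=\tfrac12(\rho+\theta\rho)>0$.

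Next we establish the symmetry $H_{M+tY}(Y,Y)=H_{M-tY}(Y,Y)$. A unital $*$-automorphism commutes with inverses and with continuous functional calculus, so
\[
\theta\bigl((M+tY+r)^{-1}\bigr)=(M-tY+r)^{-1}.
\]
Multiplicativity and trace preservation then give
\[
\operatorname{Tr}[Y(M+tY+r)^{-1}Y(M+tY+r)^{-1}]=\operatorname{Tr}[\theta(\cdot)]=\operatorname{Tr}[Y(M-tY+r)^{-1}Y(M-tY+r)^{-1}],
\]
where the middle term uses $\theta(Y)^2=Y^2$ and the sign cancels because $Y$ appears twice. Integrating in $r$ yields the symmetry. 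We need a unital $\theta$; any $*$-automorphism of a unital finite-dimensional algebra is unital.

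The midpoint inequality is the main point to check, since the ambient setting is an abstract $C^*$-algebra. We realize $\mathcal{A}$ faithfully as a direct sum of matrix algebras, $\mathcal{A}\subset\bigoplus_k M_{n_k}$. The faithful trace is then $\sum_k\tau_k\operatorname{Tr}_k$ with weights $\tau_k>0$. Inside each summand we run the computation of Lemma~\ref{lem:midpoint-mono}(ii) verbatim: set $R=M+r$ and $K=R^{-1/2}YR^{-1/2}$. Everything lies in $\mathcal{A}$, which is closed under functional calculus, and $\operatorname{spec}(tK)\subset(-1,1)$ because $R^{1/2}(1\pm tK)R^{1/2}=M\pm tY+r>0$. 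The scalar inequality $\tfrac12((1+u)^{-2}+(1-u)^{-2})\ge1$ then gives, summand by summand with positive weights,
\[
\tfrac12\operatorname{Tr}[K^2(1+tK)^{-2}+K^2(1-tK)^{-2}]\ge\operatorname{Tr}K^2.
\]
Integrating in $r$ and combining with the symmetry gives $H_{M+tY}(Y,Y)\ge H_M(Y,Y)$ for $t\in[0,1]$. In fact $t=1$ is allowed here because $\rho>0$.

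The entropy bounds follow as in the two-block case. Set $F(t)=D(M+tY\|M)=\operatorname{Tr}[(M+tY)(\log(M+tY)-\log M)]-\operatorname{Tr}(tY)$, using the unnormalized Umegaki form; the last term vanishes anyway. Then
\[
F'(0)=\operatorname{Tr}(Y)=\operatorname{Tr}(\rho)-\operatorname{Tr}(E\rho)=0
\]
by trace preservation of $\theta$. Also $F''(t)=H_{M+tY}(Y,Y)$, by the standard Duhamel/resolvent formula for the derivative of $\log$, valid in each matrix summand. Taylor's formula gives
\[
F(1)=\int_0^1(1-s)F''(s)\,ds\ge\tfrac12H_M(Y,Y),
\]
which is \eqref{eq:graded-main}.

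For \eqref{eq:graded-dephasing}, we copy Corollary~\ref{cor:dephasing}. With $\alpha=e^{-\Gamma t}$ we have $F'(\alpha)=\int_0^\alpha F''\ge\alpha H_M(Y,Y)$. Then
\[
-\frac{d}{dt}F(\alpha)=\Gamma\alpha F'(\alpha)\ge\Gamma e^{-2\Gamma t}H_M(Y,Y).
\]
Normalization never entered, since every step is homogeneous or uses only $\operatorname{Tr}Y=0$.
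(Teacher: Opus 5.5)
Your proposal is correct and follows the paper's proof essentially step for step: $\theta$-covariance of the BKM form gives $H_{M+sY}(Y,Y)=H_{M-sY}(Y,Y)$, the resolvent substitution $K=R^{-1/2}YR^{-1/2}$ with the scalar inequality $\tfrac12((1+u)^{-2}+(1-u)^{-2})\ge1$ gives the midpoint bound, and Taylor's formula with $F'(0)=\operatorname{Tr}Y=0$ together with the chain rule along $z=e^{-\Gamma t}$ gives both estimates. Your additions (unitality of $\theta$, the weighted-trace form on $\bigoplus_k M_{n_k}$, and allowing $t=1$ because $\rho,\theta(\rho)>0$) are harmless refinements; the summand-by-summand step is not needed, since a faithful trace is already nonnegative on the positive element $K^2\bigl[\tfrac12((1+tK)^{-2}+(1-tK)^{-2})-1\bigr]$.
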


\begin{proof}
\emph{Positivity of the path.}
Since $M+Y=\rho$ and $M-Y=\theta(\rho)$,
\[
M+sY=\tfrac{1+s}{2}\,\rho+\tfrac{1-s}{2}\,\theta(\rho)>0
\qquad(0\le s\le1),
\]
and similarly $M-sY>0$. For $0<s<1$ these are convex combinations with
strictly positive coefficients, while at $s=0,1$ the endpoints are $M$, $\rho$,
or $\theta(\rho)$, all strictly positive.

\emph{Symmetry step.}
Since $\theta^2=\mathrm{id}$ and $E=\tfrac{1}{2}(\mathrm{id}+\theta)$:
\[
\theta(M)=\tfrac{1}{2}(\theta(\rho)+\rho)=M,
\quad
\theta(Y)=\theta(\rho)-M=-Y,
\quad
\theta(M+sY)=M-sY.
\]

\emph{BKM covariance under $\theta$.}
For any strictly positive $N$ and self-adjoint $Z$, functional calculus gives
$\theta(N^{-1})=(\theta(N))^{-1}$. Using multiplicativity $\theta(AB)=\theta(A)\theta(B)$
and trace-preservation $\operatorname{Tr}(\theta W)=\operatorname{Tr}(W)$:
\begin{align*}
H_{\theta(N)}(\theta Z,\theta Z)
&=\int_0^\infty\operatorname{Tr}\!\bigl[\theta(Z)\,(\theta(N)+rI)^{-1}\,
  \theta(Z)\,(\theta(N)+rI)^{-1}\bigr]\,dr\\
&=\int_0^\infty\operatorname{Tr}\!\bigl[\theta\!\bigl(Z(N+rI)^{-1}Z(N+rI)^{-1}\bigr)\bigr]\,dr
&&\text{(multiplicativity)}\\
&=H_N(Z,Z).
&&\text{(trace-preservation)}
\end{align*}
Applying this with $N=M+sY$ and $Z=Y$, and using $\theta(M+sY)=M-sY$, $\theta(Y)=-Y$:
\begin{equation}\label{eq:graded-symmetry}
H_{M-sY}(Y,Y)=H_{M+sY}(Y,Y).
\end{equation}

\emph{Resolvent midpoint estimate.}
Fix $r>0$. Set $R:=M+rI$ and $K:=R^{-1/2}YR^{-1/2}$ (self-adjoint). Then
$M\pm sY+rI=R^{1/2}(I\pm sK)R^{1/2}$, so
\[
\operatorname{Tr}\!\bigl[Y(M\pm sY+rI)^{-1}Y(M\pm sY+rI)^{-1}\bigr]
=\operatorname{Tr}\!\bigl[K^2(I\pm sK)^{-2}\bigr].
\]
Since $(I\pm sK)^{-2}$ is a Borel function of the self-adjoint operator $K$, it
commutes with $K^2$, and the scalar inequality
$\tfrac{1}{2}((1+u)^{-2}+(1-u)^{-2})\ge1$ (for $|u|<1$) applies by functional
calculus:
\[
\tfrac{1}{2}\bigl(\operatorname{Tr}[K^2(I+sK)^{-2}]+\operatorname{Tr}[K^2(I-sK)^{-2}]\bigr)
\ge\operatorname{Tr}[K^2].
\]
Integrating over $r>0$ and applying~\eqref{eq:graded-symmetry}:
\[
H_{M+sY}(Y,Y)=\tfrac{1}{2}\bigl(H_{M+sY}(Y,Y)+H_{M-sY}(Y,Y)\bigr)\ge H_M(Y,Y).
\]

\emph{Integration.}
Set $F(s):=D(M+sY\|M)$ for $s\in[0,1]$. Since $M+sY>0$ uniformly, $F$ is $C^2$
on $[0,1]$. We have $F(0)=0$. Moreover,
\[
F'(s)=\operatorname{Tr}\!\left[
Y\bigl(\log(M+sY)+I-\log M\bigr)
\right].
\]
Hence
\[
F'(0)=\operatorname{Tr}(Y)=0,
\]
because $E$ is trace-preserving and $Y=\rho-E\rho$.
Equivalently, $\operatorname{Tr}(YZ)=0$ for every $\theta$-invariant $Z$.
Taylor's formula with $F''(s)=H_{M+sY}(Y,Y)\ge H_M(Y,Y)$ gives
\[
F(1)=\int_0^1(1-s)F''(s)\,ds\;\ge\;\tfrac{1}{2}H_M(Y,Y),
\]
which is~\eqref{eq:graded-main}.

\emph{Dephasing.}
Set $z:=e^{-\Gamma t}$, so $D(\rho_t\|M)=F(z)$ and $\dot z=-\Gamma z$. Then
\[
-\frac{d}{dt}D(\rho_t\|M)=\Gamma z\,F'(z)
=\Gamma z\int_0^z H_{M+sY}(Y,Y)\,ds
\ge\Gamma z^2 H_M(Y,Y)
=\Gamma e^{-2\Gamma t}H_M(Y,Y),
\]
which is~\eqref{eq:graded-dephasing}.
\end{proof}

\subsection{Recovery of the two-block theorem}
\label{ssec:z2-recovery}

Taking $\theta(X)=UXU^*$ with $U=P-Q$ gives $E=\Pi$ (the two-block pinching). Then
\[
M=\Pi\rho=A\oplus C,
\qquad
Y=\rho-\Pi\rho=\begin{pmatrix}0&B\\B^*&0\end{pmatrix}.
\]
For this $M$ and $Y$,
\[
H_M(Y,Y)=2\operatorname{Tr}\!\bigl[B^*\Omega_{A,C}^{-1}(B)\bigr],
\]
(computed in the proof of Theorem~\ref{thm:BKM-lower}). Hence the graded estimate
\[
D(\rho\|\Pi\rho)\ge\tfrac{1}{2}H_{\Pi\rho}(\rho-\Pi\rho,\rho-\Pi\rho)
\]
recovers exactly
\[
D(\rho\|\Pi\rho)\ge\operatorname{Tr}[B^*\Omega_{A,C}^{-1}(B)].
\]

Thus the two-block theorem is the canonical block-matrix instance of the
$\mathbb{Z}_2$-graded midpoint principle.

\begin{remark}[Admissible $\theta$: examples]
Theorem~\ref{thm:graded-fd} applies to parity or unitary reflection
$\theta(X)=UXU^*$ (any self-adjoint unitary $U=U^{-1}$), finite-dimensional
fermionic parity $\theta(X)=(-1)^F X(-1)^F$, and discrete spin-flip by a
tensor-product unitary involution. Antiunitary symmetries (such as time reversal)
fall outside the theorem's scope.
\end{remark}

\subsection{Bounded finite-support semifinite extension}
\label{ssec:z2-semifinite}

We now extend the midpoint principle to a finite-trace corner of a semifinite von
Neumann algebra. We restrict to bounded operators on a $\theta$-invariant finite-trace projection,
which avoids the domain difficulties associated with unbounded densities.

\begin{theorem}[Semifinite graded BKM coercivity on a finite-trace support]
\label{thm:graded-semifinite}
Let $(\mathcal{M},\tau)$ be a semifinite von Neumann algebra with faithful normal
semifinite trace $\tau$. Let $\theta:\mathcal{M}\to\mathcal{M}$ be a
trace-preserving involutive $*$-automorphism: $\theta^2=\mathrm{id}$,
$\tau\circ\theta=\tau$. Let $e\in\mathcal{M}$ be a projection with
$\theta(e)=e$ and $\tau(e)<\infty$.

We write $N,Z$ here, rather than $M,Y$, to avoid conflict with the ambient algebra
$\mathcal{M}$. Let $N,Z\in e\mathcal{M}e$ be bounded self-adjoint operators satisfying
\[
\theta(N)=N,\quad\theta(Z)=-Z,\quad\tau(Z)=0,
\]
and suppose that for some $0<m<M_0<\infty$,
\begin{equation}\label{eq:semifinite-bounds}
me\le N\pm Z\le M_0 e.
\end{equation}
Define the relative entropy and BKM form on $e\mathcal{M}e$ by
\[
D_\tau(A\|B):=\tau[A(\log A-\log B)],\qquad
H_N(Z,Z):=\int_0^\infty\tau\!\bigl[Z(N+re)^{-1}Z(N+re)^{-1}\bigr]\,dr.
\]
Then $H_N(Z,Z)<\infty$ and
\begin{equation}\label{eq:semifinite-main}
D_\tau(N+Z\,\|\,N)\;\ge\;\tfrac{1}{2}H_N(Z,Z).
\end{equation}
Moreover, for $\rho_t:=N+e^{-\Gamma t}Z$,
\begin{equation}\label{eq:semifinite-dephasing}
-\frac{d}{dt}D_\tau(\rho_t\,\|\,N)\;\ge\;\Gamma e^{-2\Gamma t}H_N(Z,Z).
\end{equation}
\end{theorem}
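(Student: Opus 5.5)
The plan is to transcribe the finite-dimensional proof of Theorem~\ref{thm:graded-fd} to the corner algebra $e\mathcal{M}e$. We regard $e\mathcal{M}e$ as a von Neumann algebra with unit $e$ and finite faithful normal trace $\tau|_{e\mathcal{M}e}$. Since $\theta(e)=e$, $\theta$ restricts to a trace-preserving involutive $*$-automorphism of $e\mathcal{M}e$. All operators involved are bounded, and by \eqref{eq:semifinite-bounds} they are bounded below by $me$. Hence every functional-calculus expression ($\log$, resolvents $(N+sZ+re)^{-1}$) is a bounded element of $e\mathcal{M}e$. Because $\tau(e)<\infty$, every trace is finite: $|\tau(XW)|\le\|X\|\,\|W\|\,\tau(e)$.

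\emph{Finiteness of $H_N(Z,Z)$.} From $me\le N\pm Z$ we get $N\ge me$, so $\|(N+re)^{-1}\|\le (m+r)^{-1}$. The integrand is then bounded by $\|Z\|^2\tau(e)(m+r)^{-2}$, which is integrable on $[0,\infty)$.

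\emph{Path positivity.} For $s\in[0,1]$ we have $N+sZ=\tfrac{1+s}{2}(N+Z)+\tfrac{1-s}{2}(N-Z)\ge me$, and likewise $N-sZ\ge me$. The whole path therefore stays in the spectral window $[m,M_0]$.

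\emph{Covariance.} This is the step we would check most carefully, since it is where the infinite-dimensional setting could bite. $\theta$ is a normal $*$-automorphism: an automorphism of a von Neumann algebra is automatically normal. It therefore commutes with continuous functional calculus on bounded self-adjoint operators, giving $\theta((X+re)^{-1})=(\theta X+re)^{-1}$. Combined with $\tau\circ\theta=\tau$, the computation in the proof of Theorem~\ref{thm:graded-fd} goes through verbatim and yields $H_{N-sZ}(Z,Z)=H_{N+sZ}(Z,Z)$.

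\emph{Midpoint estimate.} For fixed $r$, set $R=N+re$ and $K=R^{-1/2}ZR^{-1/2}$. By cyclicity of $\tau$ on bounded operators in a finite trace corner, the integrand equals $\tau[K^2(e\pm sK)^{-2}]$. We need $\|sK\|<1$ to apply the scalar inequality by spectral calculus. Since $N\pm Z\ge me$ and $N+re\ge me$, we get $\pm Z\le N-me\le(1-\tfrac{m}{\|N\|+r})(N+re)$, so $\|K\|<1$ uniformly in $r$. The scalar inequality then gives pointwise-in-$r$ domination, and integrating, together with the covariance symmetry, gives $H_{N+sZ}(Z,Z)\ge H_N(Z,Z)$.

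\emph{Integration.} The main technical obstacle is justifying $F(s):=D_\tau(N+sZ\|N)\in C^2[0,1]$ with $F'(s)=\tau[Z(\log(N+sZ)+e-\log N)]$ and $F''(s)=H_{N+sZ}(Z,Z)$. We would handle it with the integral representation
\[
\log X=\int_0^\infty\bigl((1+r)^{-1}e-(X+re)^{-1}\bigr)\,dr,
\]
valid for $X\ge me$. The resolvents depend norm-analytically on $s$, with derivative bounds dominated by $C(m+r)^{-2}$, which justifies differentiating under the integral and under $\tau$ (norm-continuous, since $\tau(e)<\infty$). The first derivative of $\tau[X\log X]$ is $\tau[Z(\log X+e)]$ by the standard Fréchet derivative plus cyclicity. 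The second derivative is $\int_0^\infty\tau[Z(X+re)^{-1}Z(X+re)^{-1}]dr$.

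At $s=0$, $F'(0)=\tau(Z)=0$. Taylor's formula then gives $F(1)\ge\tfrac12 H_N(Z,Z)$, which is \eqref{eq:semifinite-main}. Finally, for \eqref{eq:semifinite-dephasing} we set $z=e^{-\Gamma t}$ and write $-\tfrac{d}{dt}F(z)=\Gamma zF'(z)=\Gamma z\int_0^z F''\ge\Gamma z^2H_N(Z,Z)$, exactly as in the finite-dimensional proof.
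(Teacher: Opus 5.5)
Your proposal is correct and follows the paper's proof essentially step for step: the corner algebra with unit $e$, path positivity, $\theta$-covariance of the BKM form, the resolvent midpoint inequality via $K=R^{-1/2}ZR^{-1/2}$, Taylor integration with $F'(0)=\tau(Z)=0$, and the substitution $z=e^{-\Gamma t}$; your integral representation of $\log$ simply spells out the $C^2$ regularity that the paper cites as standard. One harmless slip is that the bound $\|K\|\le 1-m/(\|N\|+r)$ is not uniform in $r$ (it tends to $1$), but no uniformity is needed, since the scalar inequality is applied at each fixed $r$ (and for large $r$ one has $\|K\|\le\|Z\|/(m+r)$ anyway).
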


\begin{proof}
\emph{Finite trace corner.}
All operators lie in the corner $e\mathcal{M}e$, which is a finite von Neumann
algebra with faithful finite trace $\tau|_{e\mathcal{M}e}$. Since all operators
appearing below are bounded and supported in $e$, the relevant products are
$\tau$-integrable and tracial cyclicity $\tau(AB)=\tau(BA)$ applies throughout.

\emph{Uniform positivity along the path.}
From~\eqref{eq:semifinite-bounds}, $N\pm Z\ge me>0$. Since
$N=\tfrac{1}{2}[(N+Z)+(N-Z)]\ge me$, and for $0\le s\le1$:
\[
N\pm sZ=(1-s)N+s(N\pm Z)\ge me.
\]
Hence $me\le N+sZ\le M_0 e$ uniformly in $s\in[0,1]$, giving bounded
strict positivity.

\emph{Finiteness of $H_N(Z,Z)$.}
For $r\ge1$: $(N+re)^{-1}\le r^{-1}e$, so the integrand is bounded by
$r^{-2}\tau(Z^2)$. For $0<r\le1$: $(N+re)^{-1}\le m^{-1}e$, so the integrand
is bounded by $m^{-2}\tau(Z^2)$. Since $Z$ is bounded and $\tau(e)<\infty$, we
have $\tau(Z^2)\le\|Z\|^2\tau(e)<\infty$. Hence $H_N(Z,Z)<\infty$.

\emph{BKM covariance.}
For any bounded strictly positive $N\in e\mathcal{M}e$ and bounded $Z=Z^*$,
functional calculus gives $\theta(N^{-1})=(\theta(N))^{-1}$ within $e\mathcal{M}e$.
Since $\tau\circ\theta=\tau$ and $\theta(N\pm sZ)=N\mp sZ$, the same calculation
as in Theorem~\ref{thm:graded-fd} gives $H_{N+sZ}(Z,Z)=H_{N-sZ}(Z,Z)$.

\emph{Resolvent midpoint estimate.}
Set $R:=N+re$ and $K:=R^{-1/2}ZR^{-1/2}\in e\mathcal{M}e$. Then
$N\pm sZ+re=R^{1/2}(e\pm sK)R^{1/2}$, and since $(e\pm sK)^{-2}$ is a Borel
function of the self-adjoint $K$ it commutes with $K^2$. Applying the scalar
midpoint inequality by functional calculus and integrating over $r>0$ yields
$H_{N+sZ}(Z,Z)\ge H_N(Z,Z)$ for all $s\in[0,1)$, exactly as in the
finite-dimensional proof.

\emph{$C^2$ regularity.}
Define $F(s):=D_\tau(N+sZ\|N)$ for $s\in[0,1]$. The bounded uniform positivity
$me\le N+sZ\le M_0 e$ and finiteness of $\tau(e)$ ensure that $F\in C^2([0,1])$
by the standard differentiability theory for entropy functionals on bounded strictly positive
operators in a finite von Neumann algebra~\cite{OhyaPetz1993,Ruskai2002}.
We have
\[
F'(s)=\tau\!\left[
Z\bigl(\log(N+sZ)+e-\log N\bigr)
\right],
\qquad
F''(s)=H_{N+sZ}(Z,Z).
\]
Here $e$ is the identity of the corner $e\mathcal{M}e$, playing the role of $I$ in the
finite-dimensional formula. Thus $F'(0)=\tau(Z)=0$, since $\theta(Z)=-Z$ and $\tau\circ\theta=\tau$ imply $\tau(Z)=-\tau(Z)$,
hence $\tau(Z)=0$. Thus the first variation vanishes.
Taylor's formula gives $F(1)=\int_0^1(1-s)F''(s)\,ds\ge\tfrac{1}{2}H_N(Z,Z)$,
which is~\eqref{eq:semifinite-main}.

\emph{Dephasing.}
The argument $z=e^{-\Gamma t}$, $-d_tD_\tau=\Gamma zF'(z)\ge\Gamma z^2H_N(Z,Z)$
proceeds as in Theorem~\ref{thm:graded-fd}, yielding~\eqref{eq:semifinite-dephasing}.
\end{proof}

\begin{remark}[Scope of the semifinite theorem]
Theorem~\ref{thm:graded-semifinite} applies to any $(\mathcal{M},\tau)$, including
finite von Neumann algebras (where $e=I$ is admissible), $\mathrm{II}_1$ factors, and
$\mathrm{II}_\infty$ factors on a finite-trace projection. The two-block theorem is recovered
by taking the algebra $\mathcal{M}=M_n(\mathbb{C})$, $\tau=\operatorname{Tr}$, $e=I$, and
$\theta(X)=(P-Q)X(P-Q)$. In the trace-preserving graded examples listed above,
$\tau(Z)=0$ is automatic whenever $\theta(Z)=-Z$; it is stated explicitly in
Theorem~\ref{thm:graded-semifinite} to cover the general semifinite formulation.
\end{remark}

\subsection{Further directions}
\label{ssec:z2-open}

Theorem~\ref{thm:graded-semifinite} is a bounded finite-support semifinite result.
It does not treat unbounded densities, non-finite-support projections, type~III von
Neumann algebras, Araki relative entropy, non-uniform dephasing, full modified
logarithmic Sobolev inequalities, or a one-channel variational theory for general
$\mathbb{Z}_2$-gradings. These questions are left for future work.

\section{Conclusion}
\label{sec:discussion}

We have proved a set of finite-dimensional results for the entropy cost of coherent
leakage across a two-block decomposition. The three main results are: (i) the BKM
midpoint coercivity estimate $D(\rho\|\Pi\rho)\ge\operatorname{Tr}[B^*\Omega_{A,C}^{-1}(B)]$;
(ii) the exact one-channel variational theorem identifying a single active two-level
block that attains the constrained minimum; and (iii) the logarithmic boundary
law showing that the lower-bound coefficient multiplying the quadratic coherence diverges
logarithmically as the leakage mass vanishes. The midpoint mechanism also admits controlled extensions to
finite-dimensional trace-preserving $\mathbb{Z}_2$-graded automorphisms and, under
bounded strict-positivity assumptions, to finite-trace corners of semifinite von
Neumann algebras (including $\mathrm{II}_1$ and $\mathrm{II}_\infty$ factors on finite-trace projections).

\paragraph{Physical significance.}
The one-channel principle, in the sense proved here, says that the constrained minimum at prescribed
scalar coherence data can be realized by a qubit-like block, irrespective of the
ambient dimension. This is an information-theoretic constraint on coherent leakage in
settings where block-diagonal populations are fixed or controlled and cross-block
coherences carry quantum information cost.

\paragraph{Limitations.}
The one-channel variational theorem and logarithmic boundary barrier are proved only
for the two-block pinching geometry and are not extended here to multi-block
decompositions or general $\mathbb{Z}_2$-gradings. The present proof uses the
single block-sign involution and a two-level compression target; in a genuine
multi-block decomposition there is no single odd direction or single phase-alignment
constraint that merges all coherent channels at once. A complete classification of all
equality cases for the variational minimizer is not proved. Non-uniform dephasing and
full modified logarithmic Sobolev inequalities are not treated. The semifinite
extension is bounded and finite-support: extensions to unbounded densities, type~III
algebras, and Araki relative entropy require additional modular-theoretic arguments
and are left open.

This leaves several directions for further work: equality classification for the
compression mechanism via Petz equality theory~\cite{PetzSufficiency1988,Ruskai2002,HaydenJozsaPetzWinter2004}, multi-block and general
$\mathbb{Z}_2$-graded variational theory, non-uniform dephasing and MLSI, and
unbounded or type~III extensions.

\section*{AUTHOR DECLARATIONS}

\subsection*{Conflict of Interest}
The author has no conflicts to disclose.

\subsection*{Author Contributions}
Hassan Nasreddine: Conceptualization, Formal analysis, Writing -- original draft, Writing -- review and editing.

\subsection*{Use of Generative AI}
Generative AI language tools, including OpenAI ChatGPT and Anthropic Claude, were used
to assist with manuscript review and language editing.
The author reviewed the final manuscript and takes full responsibility for its mathematical
statements, references, and conclusions.

\section*{DATA AVAILABILITY}
Data sharing is not applicable to this article as no new data were created or analyzed in this study.
\appendix

\section{SVD Pinching and the BKM Operator}
\label{app:svd}

We give the explicit SVD pinching CPTP map used in Corollary~\ref{cor:log-bound}
and the proof of Theorem~\ref{thm:variational}.

\subsection*{Exact spectral formula}

Proposition~\ref{prop:channel-weight} establishes the exact eigenbasis formula:
with $A=\sum_\alpha a_\alpha|e_\alpha\rangle\langle e_\alpha|$ and
$C=\sum_\beta c_\beta|f_\beta\rangle\langle f_\beta|$,
\begin{equation}\label{eq:bkm-exact}
\operatorname{Tr}[B^*\Omega_{A,C}^{-1}(B)]=\sum_{\alpha,\beta}|\langle e_\alpha,Bf_\beta\rangle|^2L(a_\alpha,c_\beta).
\end{equation}
This uses the eigenbases of $A$ and $C$, not the singular vectors of $B$.

\subsection*{SVD pinching CPTP map}

Let $B=U\Sigma V^*$ be the compact SVD, with $\Sigma=\diag(s_1,\ldots,s_k)$,
$U=[u_1|\cdots|u_k]$ on $P\mathcal{H}$, and $V=[v_1|\cdots|v_k]$ on $Q\mathcal{H}$.
Define rank-1 projections $P_j=|u_j\rangle\langle u_j|$ (on $P\mathcal{H}$) and
$Q_j=|v_j\rangle\langle v_j|$ (on $Q\mathcal{H}$), and let $R_P=P_{\ker B^*}$,
$R_Q=P_{\ker B}$. The \emph{SVD pinching channel} is
\begin{equation}\label{eq:svd-pinch}
\mathcal{E}(X):=\sum_{j=1}^k(P_j\oplus Q_j)X(P_j\oplus Q_j)+R_PXR_P+R_QXR_Q.
\end{equation}
The Kraus operators $K_j:=P_j\oplus Q_j$ are orthogonal projections, so
$K_j^*K_j=K_j$. The completeness condition
\[
\sum_j K_j^*K_j+R_P+R_Q=\sum_j(P_j\oplus Q_j)+R_P+R_Q=I
\]
holds by the orthogonal decomposition of $P\mathcal{H}$ and $Q\mathcal{H}$,
so $\mathcal{E}$ is trace-preserving and completely positive (CPTP).

\paragraph{Action on the state.}
For $\rho_{\mathrm{pin}}=\mathcal{E}(\rho)$, the active singular-vector sectors satisfy
\[
B_{\mathrm{pin}}=\Sigma,\qquad
A_{\mathrm{pin}}^{\mathrm{act}}=\diag(a_j^{\mathrm{pin}}),\qquad
C_{\mathrm{pin}}^{\mathrm{act}}=\diag(c_j^{\mathrm{pin}}),
\]
where
\[
a_j^{\mathrm{pin}}=\langle u_j,Au_j\rangle,\qquad
c_j^{\mathrm{pin}}=\langle v_j,Cv_j\rangle.
\]
The residual kernel sectors are $R_PAR_P$ on $P\mathcal{H}$ and
$R_QCR_Q$ on $Q\mathcal{H}$; they carry no off-block coherence.
Since $\mathcal{E}$ commutes with $\Pi$,
$\Pi(\rho_{\mathrm{pin}})=\mathcal{E}(\Pi\rho)$.

\paragraph{Data-processing inequality.}
By monotonicity of relative entropy under CPTP maps:
\[
D(\rho\|\Pi\rho)\ge D(\mathcal{E}(\rho)\|\mathcal{E}(\Pi\rho))=D(\rho_{\mathrm{pin}}\|\Pi\rho_{\mathrm{pin}}).
\]

\paragraph{Exact entropy decomposition.}
The active singular-vector sectors of $\rho_{\mathrm{pin}}$ decompose into orthogonal
$2\times2$ blocks with diagonal entries $a_j^{\mathrm{pin}},c_j^{\mathrm{pin}}$
and off-diagonal entries $s_j$. The residual kernel sectors
$R_P A R_P$ and $R_Q C R_Q$ are block-diagonal and therefore contribute zero relative entropy. Therefore:
\[
D(\rho_{\mathrm{pin}}\|\Pi\rho_{\mathrm{pin}})
=\sum_{j=1}^k\Phi(a_j^{\mathrm{pin}},c_j^{\mathrm{pin}},s_j^2).
\]
Hence $D(\rho\|\Pi\rho)\ge\sum_{j=1}^k\Phi(a_j^{\mathrm{pin}},c_j^{\mathrm{pin}},s_j^2)$.
The BKM bound follows from $\Phi(a,c,x)\ge xL(a,c)$ (Lemma~\ref{lem:phi-bkm-lower}).

\section{Proof of the Polygon Modulus Lemma}
\label{app:polygon}

\begin{lemma}[Polygon modulus lemma]
\label{lem:polygon}
Let $\ell_1,\ldots,\ell_k\ge0$ satisfy $\max_j\ell_j\le T\le\sum_j\ell_j$.
Then there exist phases $\theta_j\in\mathbb{R}$ such that $|\sum_je^{i\theta_j}\ell_j|=T$.
\end{lemma}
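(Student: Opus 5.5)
I will prove the polygon modulus lemma with a continuity argument. The function $f(\theta_1,\dots,\theta_k)=\bigl|\sum_j e^{i\theta_j}\ell_j\bigr|$ is continuous on the connected torus $\mathbb{R}^k$. It therefore suffices to exhibit one phase configuration with $f\le T$ and another with $f\ge T$, and then apply the intermediate value theorem along a path joining them. The upper configuration is immediate: aligning all phases ($\theta_j=0$) gives $f=\sum_j\ell_j\ge T$. The real work is to produce a configuration whose modulus is at most $T$.

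For the lower configuration, I will show that some choice of phases achieves $f\le\max_j\ell_j$; since $\max_j\ell_j\le T$ by hypothesis, this is enough. I will use a greedy signed-sum construction. First reorder so that $\ell_1=\max_j\ell_j$. Then choose signs $\sigma_j\in\{\pm1\}$ (phases $0$ or $\pi$) inductively: set $S_1=\ell_1$, and for each $j\ge2$ let $S_j=S_{j-1}+\sigma_j\ell_j$, with $\sigma_j=-1$ if $S_{j-1}>0$ and $\sigma_j=+1$ otherwise. By induction $|S_j|\le\ell_1$ for all $j$. Indeed, if $|S_{j-1}|\le\ell_1$ and $\ell_j\le\ell_1$, then moving toward zero by $\ell_j$ lands in $[-\ell_1,\ell_1]$. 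This holds because a number $x\in[0,\ell_1]$ minus some $y\in[0,\ell_1]$ lies in $[-\ell_1,\ell_1]$, and symmetrically for $x\le0$. Hence $f(\sigma)=|S_k|\le\max_j\ell_j\le T$.

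Finally, I join the two configurations by the straight-line path $t\mapsto t\,\theta^{\mathrm{signed}}$, $t\in[0,1]$, in $\mathbb{R}^k$. The function $t\mapsto f(t\,\theta^{\mathrm{signed}})$ is continuous, equals $\sum_j\ell_j\ge T$ at $t=0$, and is at most $T$ at $t=1$. The intermediate value theorem then yields $t_*$ with $f=T$, and $\theta_j:=t_*\theta_j^{\mathrm{signed}}$ are the required phases.

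The degenerate case where all $\ell_j=0$ forces $T=0$, which is trivial. The case $k=1$ forces $T=\ell_1$, which is also trivial. The main obstacle is the lower bound $\min f\le\max_j\ell_j$. The greedy sign induction resolves it, and it is the only place where the hypothesis $\max_j\ell_j\le T$ enters.
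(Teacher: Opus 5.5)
Your proof is correct, and it produces the low-modulus configuration differently from the paper. The paper also argues by continuity on a connected set: the image of $\boldsymbol\theta\mapsto|\sum_j e^{i\theta_j}\ell_j|$ is a closed interval $[m,S]$. It then computes the exact minimum $m=\max(0,2\ell_1-S)$. The triangle inequality gives the lower bound, and the classical polygon-closure theorem, quoted without proof, reaches $0$ when $2\ell_1<S$. The argument finishes with the observation $2\ell_1-S\le\ell_1\le T$.

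You instead show only $\min f\le\max_j\ell_j$, via the greedy choice of signs $\sigma_j\in\{\pm1\}$. This is fully elementary and self-contained, and it is exactly what the lemma needs, including its use in Step~2 of Theorem~\ref{thm:compression}.

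The paper's route gives sharper information: it determines the full range of attainable moduli. It therefore proves the conclusion under the weaker hypothesis $\max(0,2\max_j\ell_j-\sum_j\ell_j)\le T\le\sum_j\ell_j$. Your construction cannot give this in general, because phases restricted to $\{0,\pi\}$ need not reach the true minimum. For example, with $\ell=(1,1,1)$ every signed sum has modulus at least $1$, while genuinely complex phases close the equilateral triangle to $0$.

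A minor point: $\mathbb{R}^k$ is not a torus, but connectedness is all you actually use.
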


\begin{proof}
Set $S=\sum_j\ell_j$ and $\ell_1=\max_j\ell_j$.
The map $\boldsymbol\theta\mapsto|\sum_je^{i\theta_j}\ell_j|$ is continuous on the compact
connected set $[0,2\pi]^k$, so its image is a closed interval $[m,S]$.

\textit{Maximum:} achieved at $\theta_j=0$ for all $j$, giving $S$.

\textit{Minimum:} We claim $m=\max(0,2\ell_1-S)$.
The lower bound $m\ge2\ell_1-S$ follows from the triangle inequality.
If $2\ell_1\ge S$, set $\theta_1=0$ and $\theta_j=\pi$ for $j\ge2$, giving
$|\ell_1-\sum_{j\ge2}\ell_j|=2\ell_1-S$.
If $2\ell_1<S$, the classical polygon-closure theorem
(\emph{non-negative reals $\ell_1,\ldots,\ell_k$ are side lengths of a closed polygon
iff $\max_j\ell_j\le\sum_{i\ne j}\ell_i$}) applies since
$\ell_1<S-\ell_1=\sum_{j\ge2}\ell_j$; this gives phases with $|\sum_j\ell_je^{i\theta_j}|=0$,
so $m=0$.

\textit{Conclusion:} $\max(0,2\ell_1-S)\le T\le S$, so $T\in[m,S]$.
\end{proof}

\section{Boundary Cases for Coherence Compression}
\label{app:edge}

The construction in Theorem~\ref{thm:compression} handles degenerate cases as follows.

\paragraph{Case $X=0$.} All $x_j=0$, so $B_j=0$. Both $M$ and $D$ are block-diagonal
and equal, giving $D(M\|D)=0=\Phi(A,E,0)$. The inequality holds with equality.

\paragraph{Case $\varepsilon_j=0$ for some $j$.}
Positivity of the corresponding $2\times2$ block implies $x_j\le a_j\varepsilon_j=0$,
hence $x_j=0$ and $\Phi(a_j,0,0)=0$. These channels contribute no relative entropy
and may be dropped from the sum.

\paragraph{Case $E=0$.}
$E=\varepsilon_{\mathrm{rem}}+\sum_j\varepsilon_j=0$ forces $\varepsilon_j=0$ for all $j$
and $\varepsilon_{\mathrm{rem}}=0$. Positivity then forces $x_j\le a_j\varepsilon_j=0$,
so $x_j=0$ for all $j$ and $X=0$. This is subsumed by the $X=0$ case.

\paragraph{Boundary of support.} When blocks approach the support boundary, the
inequality $\sum_j\Phi(a_j,\varepsilon_j,x_j)\ge\Phi(A,E,X)$ extends by continuity:
$\Phi$ is continuous on $\{a,\varepsilon\ge0,\,0\le x\le a\varepsilon\}$ (Lemma~\ref{lem:Phi-continuity})
and the data-processing inequality is preserved under limits of strictly positive
inputs (by standard approximation).

\section{Petz Midpoint Monotonicity: Complete Proof}
\label{app:petz}

This appendix is independent of the main entropy estimates. It records a parallel
midpoint monotonicity statement for Petz monotone metrics, to clarify the geometric
origin of the BKM argument and show that the coercivity mechanism is not specific
to the BKM metric. None of the main results in Sections~\ref{sec:setup}--\ref{sec:z2} depends on this appendix.

\begin{proposition}[Midpoint monotonicity for Petz metrics]
Let $g^f$ be a Petz monotone metric with symmetric normalized operator-monotone $f$.
Let $M>0$ be block-diagonal, $Y$ off-diagonal, with $M\pm tY>0$ for $t\in[0,1)$. Then
$g^f_{M+tY}(Y,Y)\ge g^f_M(Y,Y)$ for all $t\in[0,1)$.
\end{proposition}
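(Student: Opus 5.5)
The plan is to show that $N\mapsto g^f_N(Y,Y)$ is midpoint convex along the segment through $M$ in the direction $Y$, and then use the block-sign symmetry of Lemma~\ref{lem:midpoint-mono}(i). This replaces the scalar resolvent inequality used in the BKM case.

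\emph{Superoperator representation.} For $N>0$, let $L_N(Z)=NZ$ and $R_N(Z)=ZN$. These are commuting, strictly positive superoperators on the Hilbert space $(M_d(\mathbb{C}),\langle X,Z\rangle=\operatorname{Tr}X^*Z)$, and both depend affinely on $N$. The Petz metric is
\[
g^f_N(Y,Y)=\bigl\langle Y,\,J^f_N{}^{-1}(Y)\bigr\rangle,\qquad
J^f_N:=R_N^{1/2}f(L_NR_N^{-1})R_N^{1/2}=L_N\,\sigma_f\,R_N,
\]
where $\sigma_f$ is the Kubo--Ando operator mean with representing function $f$~\cite{KuboAndo1980}. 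Because $L_N$ and $R_N$ commute, the two expressions for $J^f_N$ agree. The normalization of $f$ only fixes constants and plays no role in the argument.

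\emph{Step 1: symmetry.} With $U=P-Q$ and $\mathcal{U}(Z)=UZU^*$, we have $L_{UNU^*}=\mathcal{U}L_N\mathcal{U}^*$, and likewise for $R$. Hence $J^f_{UNU^*}=\mathcal{U}J^f_N\mathcal{U}^*$. Applying this with $N=M+tY$, and using $UMU^*=M$ and $UYU^*=-Y$, gives
\[
g^f_{M-tY}(Y,Y)=g^f_{M+tY}(-Y,-Y)=g^f_{M+tY}(Y,Y).
\]
This is exactly the argument of Lemma~\ref{lem:midpoint-mono}(i), with the Petz metric in place of the BKM form.

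\emph{Step 2: concavity of $J$.} Set $N_\pm:=M\pm tY>0$, so that $M=\tfrac12(N_++N_-)$. Then $L_M=\tfrac12(L_{N_+}+L_{N_-})$, and the same holds for $R$. Kubo--Ando means are jointly concave on pairs of positive operators on any Hilbert space~\cite{KuboAndo1980,Bhatia2007}. Applying this on the Hilbert--Schmidt space gives
\[
J^f_M\;\ge\;\tfrac12\bigl(J^f_{N_+}+J^f_{N_-}\bigr)>0 .
\]

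\emph{Step 3: inversion.} Inversion is operator decreasing, so
\[
J^f_M{}^{-1}\le\Bigl(\tfrac12(J^f_{N_+}+J^f_{N_-})\Bigr)^{-1}.
\]
Inversion is also operator convex on positive operators, which bounds the right-hand side by $\tfrac12\bigl(J^f_{N_+}{}^{-1}+J^f_{N_-}{}^{-1}\bigr)$. Pairing both sides with $Y$ and applying Step 1 yields
\[
g^f_M(Y,Y)\le\tfrac12\bigl(g^f_{M+tY}(Y,Y)+g^f_{M-tY}(Y,Y)\bigr)=g^f_{M+tY}(Y,Y).
\]

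\emph{Main obstacle.} The delicate point is Step 2. One has to check that the Petz operator $J^f_N$ really is the Kubo--Ando mean $L_N\,\sigma_f\,R_N$ of the commuting pair $(L_N,R_N)$, and that joint concavity applies to superoperators that are positive with respect to the Hilbert--Schmidt inner product. Both follow from the standard transfer principle that Kubo--Ando theory holds on an arbitrary Hilbert space. I would state that principle explicitly, since the mean here is taken of superoperators rather than matrices. Strict positivity of $N_\pm$ is needed so that every inverse exists, and it is guaranteed by the hypothesis $M\pm tY>0$.
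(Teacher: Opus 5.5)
Your proof is correct and follows the paper's argument step for step. Block-sign covariance under $U=P-Q$ equates $g^f_{M+tY}(Y,Y)$ with $g^f_{M-tY}(Y,Y)$; joint concavity of the Kubo--Ando mean of $(L_N,R_N)$ on Hilbert--Schmidt space applies at the midpoint $M=\tfrac12(N_++N_-)$; antimonotonicity and operator convexity of inversion then finish the argument.

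One small precision: the identity $R_N^{1/2}f(L_NR_N^{-1})R_N^{1/2}=L_N\,\sigma_f\,R_N$ needs the symmetry $f(x)=xf(x^{-1})$ as well as commutativity of $L_N$ and $R_N$. Joint concavity holds for either ordering of the arguments, though, so nothing in the argument depends on it.
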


\begin{proof}
\emph{Setup.}
The Petz metric is $g^f_N(X,X)=\langle X,(K^f_N)^{-1}X\rangle_{\mathrm{HS}}$
where $K^f_N=m_f(L_N,R_N)$ is the Kubo--Ando mean of left and right multiplication superoperators.

\emph{Midpoint comparison.}
Set $N_\pm=M\pm tY$ and let $\mathcal{U}(X):=UXU^*$.
Since $N_-=\mathcal{U}(N_+)$ and $\mathcal{U}(Y)=-Y$, unitary covariance gives
\[
K^f_{N_-}=\mathcal{U}\,K^f_{N_+}\,\mathcal{U}^{-1}.
\]
Consequently,
\[
g^f_{N_-}(Y,Y)
=g^f_{N_+}(\mathcal{U}^{-1}Y,\mathcal{U}^{-1}Y)
=g^f_{N_+}(Y,Y).
\]

\emph{Concavity step.}
$M=\tfrac{1}{2}(N_++N_-)$. By joint concavity of operator means~\cite{KuboAndo1980,Bhatia2007}:
\[
K^f_M\ge\tfrac{1}{2}(K^f_{N_+}+K^f_{N_-}).
\]

\emph{Conclusion.}
Since $t\mapsto t^{-1}$ is operator antimonotone then operator convex on $(0,\infty)$:
\[
(K^f_M)^{-1}\le\bigl(\tfrac{1}{2}K^f_{N_+}+\tfrac{1}{2}K^f_{N_-}\bigr)^{-1}
\le\tfrac{1}{2}(K^f_{N_+})^{-1}+\tfrac{1}{2}(K^f_{N_-})^{-1}.
\]
Hence $g^f_M(Y,Y)\le\tfrac{1}{2}[g^f_{N_+}(Y,Y)+g^f_{N_-}(Y,Y)]=g^f_{N_+}(Y,Y)$.
\end{proof}

\bibliographystyle{aipnum4-2}
\bibliography{references}

\end{document}